\tikzstyle{basic}=[fill=white, draw=black, shape=circle]
\tikzstyle{square}=[fill=white, draw=black, shape=rectangle]
\tikzstyle{big dashed}=[fill=white, draw=black, shape=circle, minimum width=1cm, dashed]
\tikzstyle{vertical ellipse dashed}=[fill=none, draw=blue, minimum width=0.75cm, minimum height=3cm, ellipse, dashed, tikzit shape=rectangle, tikzit draw=blue, tikzit fill=white]
\tikzstyle{small vertical ellipse dashed}=[fill=none, draw=blue, shape=circle, tikzit fill=white, tikzit draw=blue, dashed, minimum width=0.75cm, minimum height=1.5cm, tikzit shape=rectangle, ellipse]
\tikzstyle{tiny vertical ellipse dashed}=[fill=none, draw=blue, shape=circle, tikzit fill=white, ellipse, dashed, minimum width=0.75cm, minimum height=1cm, tikzit shape=rectangle]
\tikzstyle{red}=[fill=red, draw=black, shape=circle]
\tikzstyle{green}=[fill={rgb,255: red,0; green,128; blue,128}, draw=black, shape=circle]
\tikzstyle{blue}=[fill=blue, draw=black, shape=circle]
\tikzstyle{huge dashed}=[fill=white, draw=black, shape=circle, dashed, minimum width=2cm]
\tikzstyle{medium}=[fill=white, draw=black, shape=circle, minimum width=1cm]
\tikzstyle{pale green}=[fill={rgb,255: red,173; green,231; blue,0}, draw=black, shape=circle, minimum width=1cm]
\tikzstyle{horizontal ellipse dashed}=[fill=white, draw=black, tikzit draw=magenta, tikzit shape=rectangle, minimum width=3cm, minimum height=0.75cm, ellipse, dashed]
\tikzstyle{minsize}=[fill=white, draw=black, shape=circle, minimum width=0.75cm]
\tikzstyle{horizontal ellipse green}=[fill={rgb,255: red,191; green,255; blue,0}, draw=black, tikzit draw={rgb,255: red,191; green,255; blue,0}, tikzit shape=rectangle, minimum width=3cm, minimum height=0.75cm, ellipse, dashed]
\tikzstyle{horizontal ellipse blue}=[fill={rgb,255: red,107; green,203; blue,255}, draw=black, tikzit draw=blue, tikzit shape=rectangle, minimum width=3cm, minimum height=0.75cm, ellipse, dashed]
\tikzstyle{smallblack}=[fill=black, draw=black, shape=circle, inner sep=0 pt, minimum size=3 pt]
\tikzstyle{smallSquare}=[fill=white, draw=black, shape=rectangle, inner sep=0 pt, minimum size=6 pt]
\tikzstyle{smallCircle}=[fill=white, draw=black, shape=circle, inner sep=0 pt, minimum size=6 pt]
\tikzstyle{big vertical ellipse dashed}=[fill=none, draw=blue, shape=circle, tikzit shape=rectangle, ellipse, dashed, minimum width=0.95cm, minimum height=3.7cm]
\tikzstyle{smallred}=[fill=red, draw=red, shape=circle, inner sep=0 pt, minimum size=3 pt]
\tikzstyle{smallblue}=[fill=blue, draw=blue, shape=circle, inner sep=0pt, minimum size=3pt]
\tikzstyle{directed}=[->]
\tikzstyle{undirected}=[-, line width=1pt]
\tikzstyle{directed red}=[draw=red, ->, line width=1pt]
\tikzstyle{directed green}=[draw={rgb,255: red,0; green,128; blue,128}, ->, line width=1pt]
\tikzstyle{directed blue}=[draw=blue, ->, line width=1pt]
\tikzstyle{directed purple}=[draw={rgb,255: red,128; green,0; blue,128}, ->, line width=1pt]
\tikzstyle{undirected red}=[-, draw=red, line width=1pt]
\tikzstyle{undirected green}=[-, draw={rgb,255: red,0; green,107; blue,61}, line width=1pt]
\tikzstyle{undirected blue}=[-, draw=blue, line width=1pt]
\tikzstyle{undirected purple}=[-, draw={rgb,255: red,128; green,0; blue,128}, line width=1pt]
\tikzstyle{undirected dashed}=[-, line width=1pt, dashed]
\tikzstyle{orange dashed}=[-, draw={rgb,255: red,255; green,128; blue,0}, dashed, line width=1.5pt]
\tikzstyle{directed dash}=[->, dashed]
\tikzstyle{blue dashed}=[-, draw=blue, dashed, line width=1pt]
\tikzstyle{green dashed}=[-, draw={rgb,255: red,0; green,162; blue,0}, dashed, line width=1pt]
\tikzstyle{blue filled}=[-, fill={blue!20}, draw=blue, line width=1pt, opacity=0.5, tikzit fill=white]
\tikzstyle{red filled}=[-, fill={red!20}, line width=1pt, draw=red, opacity=0.5, tikzit fill=white]
\tikzstyle{green filled}=[-, line width=1pt, draw={rgb,255: red,0; green,107; blue,61}, opacity=0.5, tikzit fill=white, fill={rgb,255: red,149; green,255; blue,179}]
\tikzstyle{orange filled}=[-, fill={orange!20}, draw=orange, line width=1pt, opacity=0.5, tikzit fill=white]
\tikzstyle{undirected dashed}=[-, draw=black, dashed, line width=1pt]
\newtheorem{theorem}{Theorem} 
\newtheorem{lemma}{Lemma}[section]
\newtheorem{definition}{Definition}
\newtheorem{example}{Example}
\newcommand{\norm}[1]{\left\| #1\right\|}                  
\newcommand{\abs}[1]{\left\lvert#1\right\rvert}
\DeclareMathOperator*{\argmin}{arg\,min}        
\newcommand{\cond}{\Phi}
\newcommand{\vol}{\mathrm{vol}}
\newcommand{\bigo}[1]{O\!\left(#1\right)}
\newcommand{\bigomega}[1]{\Omega\!\left(#1\right)}
\newcommand{\polylog}{\mathrm{polylog}}
\newcommand{\R}{\mathbb{R}}
\newcommand{\Z}{\mathbb{Z}}
\newcommand{\union}{\cup}
\newcommand{\intersect}{\cap}
\newcommand{\cardinality}[1]{\abs{#1}}
\newcommand{\barx}{\Bar{\vecx}}
\newcommand{\barxi}{\barx^{(i)}}
\newcommand{\barxj}{\barx^{(j)}}
\newcommand{\barg}{\Bar{\vecg}}
\newcommand{\hatg}{\widehat{\vecg}}
\newcommand{\hatf}{\widehat{\vecf}}
\newcommand{\geqvewg}{\graphg = (\vertexsetg, \edgesetg, \weight_\graphg)} 
\definecolor{indiagreen}{rgb}{0.07, 0.53, 0.03}
\newcommand{\twopartdef}[4]
{
	\left\{
		\begin{array}{ll}
			#1 & \mbox{if } #2 \\
			#3 & \mbox{if } #4
		\end{array}
	\right.
}
\newcommand{\allnotation}[1]{#1}
\renewcommand{\vec}[1]{{\allnotation{#1}}}
\newcommand{\vecf}{\vec{f}}
\newcommand{\vecg}{\vec{g}}
\newcommand{\vecx}{\vec{x}}
\newcommand{\vecu}{\vec{u}}
\newcommand{\vecv}{\vec{v}}
\newcommand{\vecp}{\vec{p}}
\newcommand{\veca}{\vec{a}}
\newcommand{\vecb}{\vec{b}}
\newcommand{\vecc}{\vec{c}}
\newcommand{\indicatorvec}{\vec{\chi}}
\newcommand{\graph}[1]{{\allnotation{#1}}}
\newcommand{\graphg}{\graph{G}}
\newcommand{\set}[1]{{\allnotation{#1}}}
\newcommand{\setv}{\set{V}}
\newcommand{\sete}{\set{E}}
\newcommand{\sets}{\set{S}}
\newcommand{\sett}{\set{T}}
\newcommand{\seta}{\set{A}}
\newcommand{\setb}{\set{B}}
\newcommand{\setm}{\set{M}}
\newcommand{\vertexset}{\setv}
\newcommand{\vertexsetg}{\vertexset_\graphg}
\newcommand{\edgeset}{\sete}
\newcommand{\edgesetg}{\edgeset_\graphg}
\newcommand{\mat}[1]{{\allnotation{#1}}}
\newcommand{\transpose}{\intercal}
\newcommand{\lapn}{\mat{\mathcal{N}}}
\newcommand{\degm}{\mat{D}}
\newcommand{\degmhalf}{\degm^{\allnotation{\frac{1}{2}}}}
\newcommand{\degmhalfneg}{\degm^{\allnotation{-\frac{1}{2}}}}
\newcommand{\adj}{\mat{A}}
\newcommand{\adjn}{\mat{\mathcal{A}}}
\newcommand{\identity}{\mat{I}}
\renewcommand{\deg}{{\allnotation{d}}}
\newcommand{\weight}{{\allnotation{w}}}
\title{A Tighter Analysis of Spectral Clustering, and Beyond\footnote{A preliminary version of this work appeared at   ICML~2022. This work is supported by a Langmuir PhD Scholarship, and an EPSRC Early Career Fellowship~(EP/T00729X/1).}}
\author{%
Peter Macgregor\\ University of Edinburgh
\and
He Sun\\ 
University of Edinburgh}
\date{}
\numberwithin{equation}{section}
\begin{document}

\maketitle

\begin{abstract}
This work studies the classical spectral clustering algorithm which embeds the vertices of some graph $G=(V_G, E_G)$ into $\mathbb{R}^k$  using $k$ eigenvectors of some matrix of $G$, and applies $k$-means to partition $V_G$ into $k$ clusters. Our first result is a tighter analysis   on the performance of spectral clustering,  and explains  why it works   under some much weaker condition than the ones studied in the literature.  For the second result, we show that,
by applying fewer than $k$ eigenvectors to construct the embedding, spectral clustering
is able to produce better output
for many practical instances; this result is the first of its kind in spectral clustering.    Besides its conceptual and theoretical significance, the practical impact of our work is  demonstrated by the empirical analysis on both synthetic and real-world datasets, in which spectral clustering produces comparable or better results with fewer than $k$ eigenvectors.
\end{abstract}

\newcommand{\graphm}{\graph{M}}
\newcommand{\peye}{\vecp^{(i)}}
\newcommand{\pj}{\vecp^{(j)}}
\newcommand{\eps}{\epsilon}
\newcommand{\APT}{\textsf{APT}}

\begin{figure*}[t]
    \centering
    \begin{subfigure}{0.3\textwidth} 
    \includegraphics[width=\textwidth]{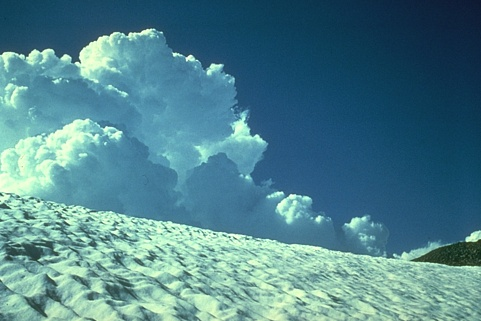}
    \caption{Original Image}
    \end{subfigure}
    \hspace{1em}
    \begin{subfigure}{0.3\textwidth} 
    \includegraphics[width=\textwidth]{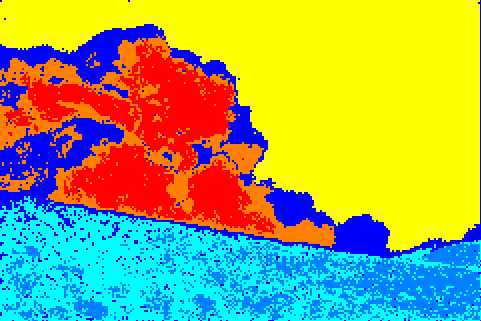}
    \caption{$6$ clusters with $3$ vectors}
    \end{subfigure}
    \hspace{1em}
    \begin{subfigure}{0.3\textwidth} 
    \includegraphics[width=\textwidth]{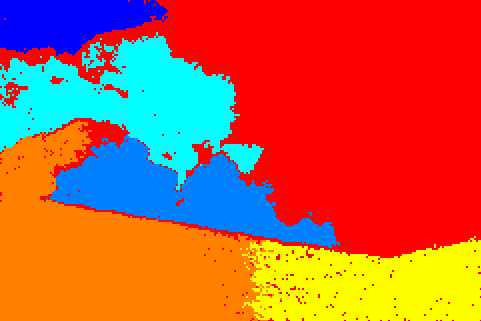}
    \caption{$6$ clusters with $6$ vectors}
    \end{subfigure}
    \par\bigskip
    \begin{subfigure}{0.3\textwidth} 
    \includegraphics[width=\textwidth]{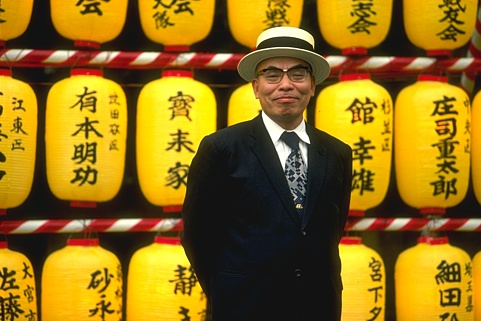}
    \caption{Original Image}
    \end{subfigure}
    \hspace{1em}
    \begin{subfigure}{0.3\textwidth} 
    \includegraphics[width=\textwidth]{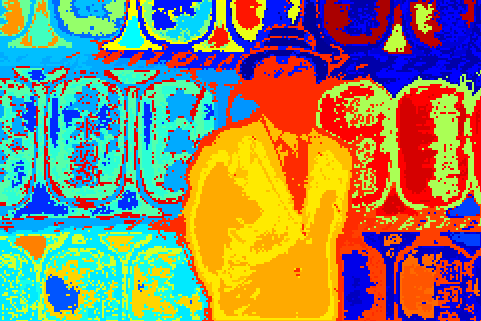}
    \caption{$45$ clusters with $7$ vectors}
    \end{subfigure}
    \hspace{1em}
    \begin{subfigure}{0.3\textwidth} 
    \includegraphics[width=\textwidth]{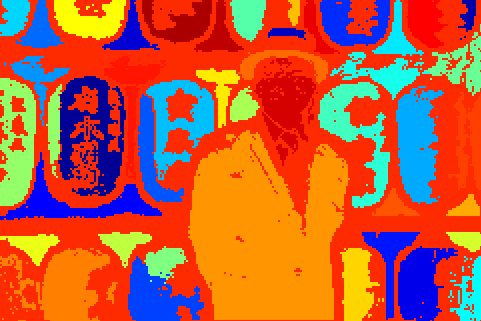}
    \caption{$45$ clusters with $45$ vectors}
    \end{subfigure}
    \caption[Examples of image segmentation using spectral clustering]{Examples of image segmentation using spectral clustering; the original images are from the BSDS.  The Rand Index of segmentation (b) is $0.83$, while (c) has Rand Index $0.78$. Segmentation (e) has Rand Index $0.92$, and (f) has Rand Index $0.80$. Hence, it's clear that spectral clustering with fewer than $k$ eigenvectors suffices to produce comparable or better output.} 
    \label{fig:bsds_results_intro}
\end{figure*}
 
\section{Introduction\label{sec:introduction}}

Graph clustering is a fundamental  problem in unsupervised learning, and has  comprehensive applications in computer science and related scientific fields. Among various techniques to solve   graph clustering problems, spectral clustering is probably the easiest one to implement, and has been widely applied in practice.  Spectral clustering  can be  easily described as follows: for any graph $\graphg =(\vertexset_\graphg, \edgeset_\graphg)$ and some $k \in \Z^+$ as input, spectral clustering   embeds the vertices of $\vertexset_\graphg$ into $\R^k$ based on the bottom $k$ eigenvectors of the Laplacian matrix of $\graphg$, and employs $k$-means on the embedded points to partition $\vertexset_\graphg$ into $k$ clusters.  Thanks to its simplicity and excellent performance in practice, spectral clustering has been widely applied over the past three decades~\cite{spielmanSpectralPartitioningWorks1996}.
 
In this work we study spectral clustering, and present two results.
Our first result is a tighter analysis of spectral clustering for well-clustered graphs.
Informally, we analyse the performance guarantee of spectral clustering under a simple assumption\footnote{This assumption will be formally defined in Section~\ref{sec:structure}.} on the input graph. While all the previous work~(e.g., \cite{leeMultiwaySpectralPartitioning2014,kolevNoteSpectralClustering2016,mizutaniImprovedAnalysisSpectral2021,pengPartitioningWellClusteredGraphs2017}) on the same problem suggests that the assumption on the input graph must depend on $k$, our result demonstrates that the performance of spectral clustering can be rigorously analysed under a general condition independent of $k$.
To the best of our knowledge, our work presents the first result of its kind, and hence we believe that this result and the novel analysis used in its proof are important, and might have further applications in graph clustering. 
  
Secondly, we study the clustering problem in which the crossing edges between the optimal clusters $\{\sets_i\}_{i=1}^k$ present some noticeable pattern, which we call the \emph{meta-graph} in this work. Notice that, when viewing every cluster $\sets_i$ as a ``giant vertex'', our meta-graph captures the intrinsic connection between the optimal clusters, and could be significantly different from a clique graph. We prove that, when this is the case, one can simply apply classical spectral clustering while employing fewer than $k$ eigenvectors to construct the embedding and, surprisingly, this will produce a better clustering result.
The significance of this result is further demonstrated by our extensive experimental analysis on the well-known BSDS, MNIST, and USPS datasets~\cite{arbelaezContourDetectionHierarchical2011, lecunGradientbasedLearningApplied1998, hullDatabaseHandwrittenText1994}. While we discuss the experimental details in Section~\ref{sec:metaExperiments}, the performance of our algorithm is showcased in Figure~\ref{fig:bsds_results_intro}: in order to find $6$ and $45$ clusters, spectral clustering with $3$ and $7$ eigenvectors produce better results than the ones with $6$ and $45$ eigenvectors according to the default metric of the BSDS dataset.

\paragraph{Related work.}
Our first result on the analysis of spectral clustering is tightly related to a number of research that analyses spectral clustering algorithms  under various conditions~(e.g., \cite{leeMultiwaySpectralPartitioning2014, kolevNoteSpectralClustering2016, mizutaniImprovedAnalysisSpectral2021,gharanPartitioningExpanders2014, ngSpectralClusteringAnalysis2001, pengPartitioningWellClusteredGraphs2017}).
While we compare in detail between these works and ours in later sections,   to the best of our knowledge,  our work presents the first result proving   spectral clustering works under some general condition independent of $n$ and $k$.  Our work is also related to studies on designing local, and distributed clustering algorithms based on different assumptions~(e.g.,~\cite{czumajTestingClusterStructure2015, orecchiaFlowbasedAlgorithmsLocal2014, zhuLocalAlgorithmFinding2013});
due to limited computational resources available, these works require stronger assumptions on input graphs than ours. 

Our second result on spectral clustering with fewer eigenvalues   is   linked  to 
efficient spectral algorithms to find cluster-structures.  While it's known that flow and path structures of clusters in  digraphs can be uncovered with complex-valued Hermitian  matrices~\cite{cucuringuHermitianMatricesClustering2020, laenenHigherOrderSpectralClustering2020}, our work shows that one can apply  real-valued   Laplacians of undirected graphs, and find more general  patterns of clusters characterised by our structure theorem.
Rebagliati and Verri~\cite{rebagliatiSpectralClusteringMore2011} propose using more than $k$ eigenvectors for spectral clustering, although their assumptions on the input graph are different to ours and so the result is not directly comparable.

\section{Preliminaries}
Let $\graphg=(\vertexset_\graphg,\edgeset_\graphg,\weight)$ be an undirected graph with $n$ vertices, $m$ edges, and weight function $\weight: \vertexset_\graphg \times \vertexset_\graphg \rightarrow \R_{\geq 0}$.  
For any edge $e = \{u, v\} \in \edgeset_\graphg$, we write the weight of $\{u,v\}$ by $\weight_{uv}$ or $\weight_e$. 
For a vertex $u \in \vertexset_\graphg$, we denote its \emph{degree} by $\deg_\graphg(u) \triangleq \sum_{v \in \vertexset} \weight_{uv}$. 
For any two sets $\sets, \sett \subset \vertexset_\graphg$, we define the \emph{cut value} $\weight(\sets, \sett) \triangleq \sum_{e \in \edgeset_\graphg(\sets, \sett)} \weight_e$, where $\edgeset_\graphg(\sets, \sett)$ is the set of edges between $\sets$ and $\sett$.
For any set $\sets \subseteq \vertexset_\graphg$, the \emph{volume} of $\sets$ is $\vol_\graphg(\sets)\triangleq \sum_{u\in \sets} \deg_\graphg(u)$,
and we write $\vol(\graphg)$ when referring to $\vol(\vertexset_\graphg)$. For any nonempty subset $\sets \subseteq \vertexset_\graphg$, 
we define the \emph{conductance} of $\sets$ by 
\[
\cond_\graphg(\sets) \triangleq \frac{\weight(\sets, \vertexset \setminus \sets )}{\vol_\graphg(\sets)}.
\]
Furthermore, we define the conductance of the graph $\graphg$ by
\[
 \cond_\graphg\triangleq \min_{\substack{\sets \subset \vertexset\\ \vol(\sets) \leq \vol(\vertexset)/2}} \cond_\graphg(\sets).
\]
We call subsets of vertices $\seta_1,\ldots, \seta_k$ a \emph{k-way partition} of $\graphg$ if $\seta_i\cap \seta_j=\emptyset$ for different $i$ and $j$, and $\bigcup_{i=1}^k \seta_i = \vertexset$. Generalising the definition of conductance,  we define \emph{$k$-way expansion constant} by
\[
\rho(k) \triangleq \min_{\mathrm{partition}\ \seta_1,\ldots, \seta_k} \max_{1\leq i\leq k} \cond_\graphg(\seta_i).
\]

Next we define the matrices of any $\graphg=(\vertexset_\graphg, \edgeset_\graphg, \weight)$. Let $\degm_\graphg \in \R^{n \times n}$ be the diagonal matrix defined by $(\degm_\graphg)_{uu} = \deg_\graphg(u)$ for all $u \in \vertexset_\graphg$, and we  denote by  $\adj_\graphg \in \R^{n\times n}$  the \emph{adjacency matrix}  of $\graphg$, where $(\adj_\graphg)_{uv} = \weight_{uv}$ for all $u, v \in \vertexset_\graphg$.
The \emph{normalised Laplacian matrix} of $\graphg$ is defined by $\lapn_\graphg \triangleq \identity - \degmhalfneg_\graphg \adj_\graphg \degmhalfneg_\graphg$, where $\identity$ is the $n \times n$ identity matrix.  Since  $\lapn_\graphg$ is symmetric and real-valued, it has $n$ real eigenvalues denoted by $\lambda_1 \leq \ldots \leq \lambda_n$; we use $\vecf_i \in \R^n$  to denote the eigenvectors corresponding to $\lambda_i$ for any $1\leq i\leq n$. It is known that $\lambda_1 = 0$ and $\lambda_n \leq 2$~\cite{chungSpectralGraphTheory1997}.

For any sets $\sets$ and $\sett$, the symmetric difference between $\sets$ and $\sett$ is defined by \[\sets \triangle \sett = (\sets \setminus \sett) \union (\sett \setminus \sets).\] For any $k\in \Z^+$, we define $[k]\triangleq \{1,\ldots, k\}$. We sometimes drop the subscript $\graphg$ when it is clear from the context. 
The following higher-order Cheeger inequality will be used in our analysis.

\begin{lemma}[\cite{leeMultiwaySpectralPartitioning2014}]\label{lem:high-order-cheeger} It holds for any $k\in[n]$ that
\[
\lambda_k/2 \leq \rho(k) \leq O\left(k^3\right)\sqrt{\lambda_{k}}.\]
\end{lemma}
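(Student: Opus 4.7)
The lower bound $\lambda_k/2 \leq \rho(k)$ follows from the standard variational characterisation of the eigenvalues of $\lapn$. I would fix any $k$-way partition $\seta_1,\ldots,\seta_k$ attaining $\rho(k)$, and consider the $k$-dimensional subspace of $\R^n$ spanned by the pairwise orthogonal vectors $\vecx_i \triangleq \degmhalf \indicatorvec_{\seta_i}$. For an arbitrary element $\vecx = \sum_{i=1}^k c_i \vecx_i$ of this subspace, expanding the quadratic form gives $\vecx^\transpose \lapn \vecx = \sum_{i<j} \weight(\seta_i,\seta_j)(c_i-c_j)^2$ and $\norm{\vecx}^2 = \sum_{i=1}^k c_i^2 \vol(\seta_i)$; using $(c_i-c_j)^2 \leq 2(c_i^2+c_j^2)$, the Rayleigh quotient of every such $\vecx$ is at most $2\max_i \cond(\seta_i) = 2\rho(k)$, so the Courant-Fischer theorem yields $\lambda_k \leq 2\rho(k)$.

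The upper bound $\rho(k) \leq O(k^3)\sqrt{\lambda_k}$ is the substantive direction, and my plan follows the overall strategy of Lee, Oveis Gharan and Trevisan in three steps. First, I would form the spectral embedding $F : \vertexset \to \R^k$ defined by $F(u) \triangleq (\vecf_1(u),\ldots,\vecf_k(u))/\sqrt{\deg(u)}$; summing the identities $\vecf_i^\transpose \lapn \vecf_i = \lambda_i$ over $i \in [k]$ yields the Dirichlet-energy bound
\[
\sum_{\{u,v\} \in \edgeset} \weight_{uv}\|F(u)-F(v)\|^2 \leq 2 k \lambda_k,
\]
while orthonormality of $\vecf_1,\ldots,\vecf_k$ gives $\sum_v \deg(v)\|F(v)\|^2 = k$. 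Intuitively, the embedded points are stretched by only $O(\lambda_k)$ per unit of vertex mass along each edge.

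The second and main step is to convert $F$ into $k$ non-negative, \emph{disjointly supported} test functions $\psi_1,\ldots,\psi_k$ whose Rayleigh quotients $\psi_i^\transpose \lapn \psi_i / \norm{\psi_i}^2$ are each bounded by $O(k^\alpha)\lambda_k$ for a small constant $\alpha$. I would construct these via a randomised padded partition of the image of $F$: assign each vertex $v$ to the coordinate $i$ whose contribution $|F_i(v)|$ dominates a suitably scaled random threshold, inducing a partition $\seta_1,\ldots,\seta_k$, and take $\psi_i$ to be a truncation of $|F_i|$ on $\seta_i$. The main obstacle is the boundary: edges that cross different pieces of the partition inflate the numerator of the Rayleigh quotient of each $\psi_i$, and bounding this inflation by $O(k^\alpha)\lambda_k$ requires a ``spreading property'' of the embedding (no small vertex set captures too much of the mass of any coordinate), which is established via a pigeonhole / smoothing argument over random radii. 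I expect this to be the crux of the whole proof and the source of the polynomial-in-$k$ loss.

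Finally, given disjointly supported non-negative $\psi_1,\ldots,\psi_k$, a Cheeger sweep over the level sets of each $\psi_i$ produces a set $\sets_i \subseteq \seta_i$ with $\cond(\sets_i) \leq \sqrt{2 \psi_i^\transpose \lapn \psi_i / \norm{\psi_i}^2}$, and a standard post-processing step extends $\sets_1,\ldots,\sets_k$ into a full $k$-way partition at the cost of another factor of $O(k)$. Tracking the losses across all three steps then yields $\rho(k) \leq O(k^3)\sqrt{\lambda_k}$.
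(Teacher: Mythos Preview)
The paper does not prove this lemma at all: it is stated with a citation to Lee, Oveis Gharan and Trevisan and used as a black box, so there is no ``paper's own proof'' to compare against. Your plan is a reasonable high-level sketch of the argument in that cited reference; the lower-bound argument via Courant--Fischer on the span of $\degmhalf\indicatorvec_{\seta_i}$ is correct as written.

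For the upper bound, your three-step outline (spectral embedding $\to$ disjointly supported test functions $\to$ Cheeger sweep plus cleanup) matches the architecture of the Lee--Oveis Gharan--Trevisan proof, but your description of the crucial middle step is not quite how they do it. Assigning each vertex to a dominant coordinate of $F$ with a random threshold does not by itself give the required spreading/isotropy control; in the actual proof one first composes $F$ with $O(\log k)$ random Gaussian projections (to make the embedding ``radially spread''), and only then applies a randomised Lipschitz/padded decomposition of the projected points, with a separate smoothing argument to bound the boundary energy. If you tried to carry out the construction exactly as you wrote it, you would likely get stuck bounding the numerator of the Rayleigh quotient for $\psi_i$, because nothing prevents most of the mass of several coordinates from concentrating on the same small region. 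This is not a fatal gap in the plan --- the missing ingredient is precisely the random-projection/spreading step --- but it is the real technical heart of the proof and should be named explicitly rather than folded into a ``pigeonhole / smoothing argument''.
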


\section{Encoding the Cluster-Structure into the Eigenvectors of \texorpdfstring{$\lapn_\graphg$}{N} \label{sec:structure}}
Let $\{\sets_i\}_{i=1}^k$ be any optimal $k$-way partition that achieves $\rho(k)$.  We define the indicator vector of cluster $\sets_i$ by 
\begin{equation}
\label{eq:const_g}
\indicatorvec_i(u)\triangleq 
\left\{
	\begin{array}{ll}
		1  & \mbox{if } u \in \sets_i, \\
		0 & \mbox{otherwise},
	\end{array}
\right.
\end{equation}
and the corresponding \emph{normalised indicator vector} by 
\[
\barg_i \triangleq \frac{\degmhalf \indicatorvec_i}{ \|\degmhalf \indicatorvec_i \|}.  
\]
One of the basic results in spectral graph theory states that $\graphg$ consists of at least $k$ connected components if and only if $\lambda_i=0$ for any $i \in [k]$, and
$\mathrm{span}\left(\{\vecf_i\}_{i=1}^k\right) = \mathrm{span}\left(\{\barg_i \}_{i=1}^k\right)$~\cite{chungSpectralGraphTheory1997}.
Hence,  one would expect that, when $\graphg$ consists of $k$ densely connected components~(clusters) connected by sparse cuts, the bottom eigenvectors $\{\vecf_i\}_{i=1}^k$ of $\lapn_\graphg$ are close to $\{\barg_i\}_{i=1}^k$. This intuition explains the practical success of spectral methods for graph clustering, and forms the basis of many theoretical studies on various
spectral clustering algorithms~(e.g.,~\cite{kwokImprovedCheegerInequality2013a, leeMultiwaySpectralPartitioning2014, ngSpectralClusteringAnalysis2001, vonluxburgTutorialSpectralClustering2007}). 

Turning this intuition into a mathematical statement, Peng et al.~\cite{pengPartitioningWellClusteredGraphs2017} study the quantitative relationship   between $\{\vecf_i\}_{i=1}^k$ and $\{\barg_i\}_{i=1}^k$ through the function $\Upsilon(k)$ defined by
\begin{equation}\label{eq:defineupsilon}
    \Upsilon(k) \triangleq \frac{\lambda_{k+1}}{\rho(k)}.
\end{equation}
To explain   the meaning of $\Upsilon(k)$, we  assume that $\graphg$ has $k$ well-defined clusters $\{\sets_i\}_{i=1}^k$. By definition, the values of $\cond(\sets_i)$ for every $\sets_i$, as well as    $\rho(k)$, are low;
on the other hand, any $(k+1)$-way partition of $\vertexset_\graphg$ would separate the vertices of  some     $\sets_i$, and as such $\rho(k+1)$'s value will be  high.
Combining this with the higher-order Cheeger inequality, some lower bound on  $\Upsilon(k)$ would be sufficient to ensure that $\graphg$ has exactly $k$ clusters. 
In their work, Peng et~al.~\cite{pengPartitioningWellClusteredGraphs2017} assumes   $\Upsilon (k) =\Omega(k^2)$, and    proves that the space spanned by $\{\vecf_i\}_{i=1}^k$ and the one spanned by $\{\barg_i\}_{i=1}^k$ are close to each other. 
Specifically, they show that
\begin{enumerate}
\item  every $\barg_i $ is close to some linear combination of  $\{\vecf_i\}^k_{i=1}$, denoted by $\hatf_i$, i.e.,  it holds that \[\|\barg_i - \hatf_i\|^2\leq 1/\Upsilon(k);\]
    \item  every   $\vecf_i$ is close to some linear combination of $\{
\barg_i\}_{i=1}^k$, denoted by $\hatg_i$, i.e., it holds that \[\|\vecf_i - \hatg_i \|^2\leq 1.1 k/\Upsilon(k).\]
\end{enumerate}
In essence, their so-called structure theorem gives a quantitative explanation on why spectral methods work for graph clustering
when there is a clear cluster-structure in $\graphg$ characterised by $\Upsilon(k)$.
As it holds for  graphs with clusters of different sizes and edge densities, this structure theorem has been shown to be a powerful tool in analysing clustering algorithms, and inspired many subsequent works~(e.g.,~\cite{chenCommunicationoptimalDistributedClustering2016, czumajTestingClusterStructure2015, kloumannBlockModelsPersonalized2017, kolevNoteSpectralClustering2016, louisPlantedModelsKway2019, mizutaniImprovedAnalysisSpectral2021,pengRobustClusteringOracle2020,pengAverageSensitivitySpectral2020,sunDistributedGraphClustering2019}).  

In this section we show that a stronger statement of the original structure theorem holds under a much weaker assumption. Our result is summarised as follows:

\begin{theorem}[The Stronger Structure Theorem] \label{thm:struc1}
The following statements hold:
\begin{enumerate}
    \item For any $i\in[k]$, there is $\hatf_i\in\mathbb{R}^n$, which is  
    a linear combination of $\vecf_1,\ldots, \vecf_k$, such that \[\|\barg_i - \hatf_i\|^2 \leq 1/\Upsilon(k).\]
    \item There are vectors $\hatg_1,\ldots, \hatg_k$, each of which is a linear combination of $\barg_1,\ldots, \barg_k$, such that \[\sum_{i = 1}^k \norm{\vecf_i - \hatg_i}^2 \leq k /\Upsilon(k).\]
\end{enumerate}
\end{theorem}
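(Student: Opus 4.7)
The plan is to establish the two parts sequentially, with Part 2 reducing to Part 1 through a symmetry between the two $k$-dimensional subspaces involved. This symmetry is precisely what removes the extra factor of $k$ that appears in the per-vector bound of Peng et al., and in particular removes the need for any assumption of the form $\Upsilon(k) = \Omega(k^2)$.

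For Part 1, I follow the standard Rayleigh-quotient argument. Take $\hatf_i$ to be the orthogonal projection of $\barg_i$ onto $\mathrm{span}(\vecf_1, \ldots, \vecf_k)$, and expand $\barg_i = \sum_{j=1}^n \alpha_{ij} \vecf_j$, so that $\|\barg_i - \hatf_i\|^2 = \sum_{j \geq k+1} \alpha_{ij}^2$. A direct computation using $\barg_i = \degmhalf \indicatorvec_i / \|\degmhalf \indicatorvec_i\|$ yields
\[
\barg_i^{\top} \lapn \barg_i \;=\; \frac{\indicatorvec_i^{\top}(\degm - \adj)\indicatorvec_i}{\vol(\sets_i)} \;=\; \cond(\sets_i) \;\leq\; \rho(k),
\]
while an eigenbasis expansion gives $\barg_i^{\top} \lapn \barg_i \geq \lambda_{k+1} \|\barg_i - \hatf_i\|^2$. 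Combining the two and using $\Upsilon(k) = \lambda_{k+1}/\rho(k)$ gives $\|\barg_i - \hatf_i\|^2 \leq 1/\Upsilon(k)$.

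For Part 2, I take $\hatg_i$ to be the orthogonal projection of $\vecf_i$ onto $\mathrm{span}(\barg_1, \ldots, \barg_k)$. Collecting the vectors into the $n \times k$ matrices $F = [\vecf_1 \mid \cdots \mid \vecf_k]$ and $G = [\barg_1 \mid \cdots \mid \barg_k]$, both with orthonormal columns, and using that $GG^{\top}$ is the orthogonal projector onto the column span of $G$,
\[
\sum_{i=1}^k \|\vecf_i - \hatg_i\|^2 \;=\; \tr\!\left(F^{\top}(I - GG^{\top})F\right) \;=\; k - \|G^{\top} F\|_F^2.
\]
The symmetric computation with the roles of $F$ and $G$ reversed yields
\[
\sum_{i=1}^k \|\barg_i - \hatf_i\|^2 \;=\; k - \|F^{\top} G\|_F^2,
\]
where the $\hatf_i$ are the projections constructed in Part 1. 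Since $\|F^{\top} G\|_F = \|G^{\top} F\|_F$, the two sums coincide. Summing the per-vector bound of Part 1 over $i \in [k]$ gives $\sum_{i=1}^k \|\barg_i - \hatf_i\|^2 \leq k/\Upsilon(k)$, from which the desired bound for $\{\hatg_i\}_{i=1}^k$ follows.

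The main conceptual obstacle is exactly the Part 2 symmetry: trying to bound $\|\vecf_i - \hatg_i\|^2$ pointwise genuinely seems to lose an extra factor of $k$, as in the Peng et al.\ analysis, and only the aggregation via the Frobenius/trace identity closes this gap. Once the identity $k - \|G^{\top} F\|_F^2$ is recognised as symmetric in $F$ and $G$, the remainder is routine linear algebra. A small bookkeeping point to verify is that the $\hatf_i$ from Part 1 are the orthogonal projections onto $\mathrm{span}(\vecf_1,\ldots,\vecf_k)$, so that the per-vector bounds do add up to the required Frobenius-norm bound; this is immediate from the Parseval-style expansion used in the proof of Part 1.
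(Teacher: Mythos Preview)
Your proposal is correct and takes essentially the same approach as the paper. The paper writes out Part~2 as an explicit double sum, swapping the order of summation to show $\sum_i \|\vecf_i - \hatg_i\|^2 = \sum_j \|\barg_j - \hatf_j\|^2$, whereas you package the same identity as $k - \|G^\top F\|_F^2 = k - \|F^\top G\|_F^2$; these are identical arguments in different notation.
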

\begin{proof}
Let $\hatf_i = \sum_{j=1}^k \langle \barg_i, \vecf_j\rangle \vecf_j $, and we write $\barg_i$ as a linear combination of the vectors $\vecf_1, \ldots, \vecf_n$ by 
   $
        \barg_i = \sum_{j = 1}^n \langle \barg_i, \vecf_j \rangle \vecf_j$.
   Since $\hatf_i$ is a projection of $\barg_i$, we have that $\barg_i - \hatf_i$ is perpendicular to $\hatf_i$ and 
    \begin{align*}
        \norm{\barg_i - \hatf_i}^2 & = \norm{\barg_i}^2 - \norm{\hatf_i}^2 \\
        & = \left(\sum_{j = 1}^n \langle \barg_i, \vecf_j \rangle^2 \right) - \left(\sum_{j = 1}^{k} \langle \barg_i, \vecf_j \rangle^2 \right)  \\
        & = \sum_{j = k + 1}^n \langle \barg_i, \vecf_j \rangle^2.
    \end{align*}
     Now, let us consider the quadratic form
    \begin{align}
        \barg_i^\transpose \mathcal{N}_G \barg_i & = \left(\sum_{j = 1}^n \langle \barg_i, \vecf_j \rangle \vecf_j^\transpose \right) \mathcal{L}_G \left(\sum_{j = 1}^n \langle \barg_i, \vecf_j \rangle \vecf_j\right) \nonumber \\
        & = \sum_{j = 1}^n \langle \barg_i, \vecf_j \rangle^2 \lambda_j \nonumber  \\
        & \geq \lambda_{k + 1} \norm{\barg_i - \hatf_i}^2, \label{eq:lbquad}
    \end{align}
    where the last inequality follows by the fact that $\lambda_i\geq 0$ holds for any $1\leq i\leq n$. This gives us that  
    \begin{align}
        \barg_i^\transpose \lapn_\graphg \barg_i & = \sum_{(u, v) \in \edgeset_\graphg} \weight(u, v) \left(\frac{\barg_i(u)}{\sqrt{\deg(u)}} - \frac{\barg_i(v)}{\sqrt{\deg(v)}}\right)^2 \nonumber \\
        & = \sum_{(u, v) \in \edgeset_\graphg} \weight(u, v) \left(\frac{\indicatorvec_i(u)}{\sqrt{\vol(\sets_i)}} - \frac{\indicatorvec_i(v)}{\sqrt{\vol(\sets_i)}}\right)^2 \nonumber \\
        & = \frac{\weight(\sets_i, \vertexset \setminus \sets_i)}{\vol(\sets_j)}\nonumber\\
        & \leq \rho(k).  \label{eq:upquad}
    \end{align}
    Combining \eqref{eq:lbquad} with \eqref{eq:upquad}, we have that
    \[
    \norm{\barg_i - \hatf_i}^2 \leq \frac{\barg_i^\transpose \lapn_\graphg \barg_i}{\lambda_{k+1}} \leq \frac{\rho(k)}{\lambda_{k+1}} \leq \frac{1}{\Upsilon (k)},
    \]
    which proves the first statement of the theorem.

Now we prove the second statement.
We define for any $1\leq i \leq k$ that 
 $\hatg_i = \sum_{j=1}^k \langle \vecf_i, \barg_j\rangle \barg_j$, and  have that 
    \begin{align*}
        \sum_{i = 1}^k \norm{\vecf_i - \hatg_i}^2
        & = \sum_{i = 1}^k \left( \norm{\vecf_i}^2 - \norm{\hatg_i}^2 \right) \\
        & = k - \sum_{i = 1}^k \sum_{j = 1}^k \langle \barg_j, \vecf_i \rangle^2   \\
        & = \sum_{j = 1}^k \left( 1 - \sum_{i = 1}^k \langle \barg_j, \vecf_i \rangle^2 \right) \\
        & = \sum_{j = 1}^k \left( \norm{\barg_j}^2 - \norm{\hatf_j}^2 \right) \\
        & = \sum_{j = 1}^k \norm{\barg_j - \hatf_j}^2\\
        & \leq \sum_{j = 1}^k \frac{1}{\Upsilon(k)}\\
        & = \frac{k}{\Upsilon(k)},
    \end{align*}
    where the last inequality follows by the first statement of  Theorem~\ref{thm:struc1}.
\end{proof}

To examine the significance of Theorem~\ref{thm:struc1}, we first highlight  that these two statements hold for any $\Upsilon(k)$, while the original structure theorem relies on the assumption that $\Upsilon(k)=\Omega(k^2)$. Since  $\Upsilon(k)=\Omega(k^2)$ is a strong and even questionable assumption when $k$ is large, e.g., $k=\bigomega{\polylog(n)}$, obtaining these statements for general $\Upsilon(k)$ is important.
Secondly, our second statement of Theorem~\ref{thm:struc1} significantly improves the original theorem. Specifically, instead of stating  $\|\vecf_i - \hatg_i \|^2\leq 1.1 k/\Upsilon(k)$ for any $i\in [k]$, our second statement shows that $\sum_{i = 1}^k \norm{\vecf_i - \hatg_i}^2 \leq k /\Upsilon(k)$;
hence, it holds in expectation that $\norm{\vecf_i - \hatg_i}^2 \leq 1 /\Upsilon(k)$,  the upper bound of which matches the first statement.
This implies that the vectors $\vecf_1,\ldots, \vecf_k$ and $\barg_1,\ldots, \barg_k$ can be linearly approximated by each other  with  \emph{roughly the same} approximation guarantee. 
Thirdly, rather than employing the machinery from matrix analysis used by Peng et al.~\cite{pengPartitioningWellClusteredGraphs2017}, 
to prove the original theorem, our  proof is simple and purely linear-algebraic.
Therefore, we believe that both of our stronger statements and much simplified proof are significant, and could have further applications in graph clustering and related problems. 

\section{Tighter Analysis of Spectral Clustering\label{sec:analysis1}}
In this section, we analyse the spectral clustering algorithm.
For any input graph $\graphg=(\vertexset_\graphg, \edgeset_\graphg)$ and  $k\in[n]$, spectral clustering consists of the three steps below: 
\begin{enumerate}
    \item compute the eigenvectors $\vecf_1,\ldots \vecf_k$ of $\lapn_\graphg$, and embed each $u\in \vertexset_\graphg$ to the point $F(u) \in \R^k$ according to 
    \begin{equation}\label{eq:embedding}
     F(u) \triangleq \frac{1}{\sqrt{\deg(u)}} \left( \vecf_1(u),\ldots, \vecf_k(u)\right)^\transpose;
    \end{equation}
    \item apply $k$-means on the embedded points $\{ F(u)\}_{u\in \vertexset_\graphg}$;
    \item partition $\vertexset_\graphg$ into $k$ clusters based on  the output of  $k$-means.
\end{enumerate}

We will consider spectral clustering for graphs with clusters of \emph{almost-balanced} size.
\begin{definition} \label{def:almostBalanced}
    Let $\graphg$ be a graph with $k$ clusters $\{\sets_i\}_{i=1}^k$.
    We say that the clusters are almost-balanced if
    $(1/2) \cdot \vol(\vertexsetg)/k \leq \vol(\sets_i) \leq 2 \cdot \vol(\vertexsetg)/k$ 
    for all $i \in \{1, \ldots, k\}$.
\end{definition}
Our main result is given in Theorem~\ref{thm:sc_guarantee}, where we take $\APT$ to be the approximation ratio of the $k$-means algorithm used in spectral clustering.
Recall that we can take $\APT$ to be some small constant~\cite{kumarSimpleLinearTime2004}.

\begin{theorem}\label{thm:sc_guarantee}
Let $\graphg$ be a graph with $k$ clusters $\{\sets_i\}_{i=1}^k$ of almost balanced size, and
$\Upsilon(k) \geq 2176 (1 + \APT)$.
Let $\{\seta_i\}_{i=1}^k$ be the output of spectral clustering and,  without loss of generality,  the  optimal correspondent of $\seta_i$ is $\sets_i$. Then, it holds that
\[
        \sum_{i = 1}^k \vol\left(\seta_i \triangle \sets_i \right) \leq 2176 ~(1 + \APT)~ \frac{\vol(\vertexsetg)}{\Upsilon(k)}.
\] 
\end{theorem}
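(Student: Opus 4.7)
The plan is to introduce an ``ideal'' embedding whose image is exactly a $k$-point configuration, one point per optimal cluster; bound the cost of this ideal clustering via Theorem~\ref{thm:struc1}; show that the $k$ ideal centres are well separated; and finally combine these facts with the $\APT$-approximation guarantee of $k$-means to turn the cost bound into a bound on symmetric differences.

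Concretely, I would first define the ideal embedding $\bar{F}:\vertexsetg\to\R^k$ by $\bar{F}(u)\triangleq \frac{1}{\sqrt{\deg(u)}}(\hatg_1(u),\ldots,\hatg_k(u))^\transpose$, where the $\hatg_i$ come from Theorem~\ref{thm:struc1}. Since each $\hatg_j$ is a linear combination of the normalised indicators $\barg_1,\ldots,\barg_k$, a direct computation shows that $\bar{F}$ is constant on each $\sets_i$, taking the value
\[
p_i \triangleq \frac{1}{\sqrt{\vol(\sets_i)}}\bigl(\langle \vecf_1,\barg_i\rangle,\ldots,\langle \vecf_k,\barg_i\rangle\bigr)^\transpose.
\]
Hence $\bar{F}$ realises a valid (degree-weighted) $k$-means solution whose cost collapses into a spectral quantity bounded by Theorem~\ref{thm:struc1}(2):
\[
\sum_{u\in\vertexsetg} \deg(u)\norm{F(u)-\bar{F}(u)}^2 \;=\; \sum_{j=1}^k \norm{\vecf_j-\hatg_j}^2 \;\leq\; \frac{k}{\Upsilon(k)}.
\]
By the $\APT$-approximation guarantee, the $k$-means output used by spectral clustering therefore has cost at most $(1+\APT)\cdot k/\Upsilon(k)$.

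Next I would establish that the ideal centres are pairwise well separated. From Theorem~\ref{thm:struc1}(1), $\norm{\barg_i-\hatf_i}^2\le 1/\Upsilon(k)$, which yields $\norm{p_i}^2 = \norm{\hatf_i}^2/\vol(\sets_i) \geq (1-1/\Upsilon(k))/\vol(\sets_i)$. Expanding $\langle\hatf_i,\hatf_j\rangle$ and using the orthogonality $\langle\barg_i,\barg_j\rangle=0$ for $i\neq j$ gives $|\langle p_i,p_j\rangle|\leq O(1/\sqrt{\Upsilon(k)})/\sqrt{\vol(\sets_i)\vol(\sets_j)}$. Plugging in the almost-balanced bound $\vol(\sets_i)\in [\vol(\vertexsetg)/(2k),\,2\vol(\vertexsetg)/k]$ together with the assumption $\Upsilon(k)\geq 2176(1+\APT)$ then yields a separation of the form $\norm{p_i-p_j}^2 \geq \Omega(k/\vol(\vertexsetg))$ for all $i\neq j$.

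Finally, I would turn the cost bound into a volume bound. Let $c_1,\ldots,c_k$ be the centres returned by spectral clustering, with correspondence $\seta_i\leftrightarrow\sets_i$. An averaging argument shows that, for the correspondence given, each $c_i$ must lie within $o(\norm{p_i-p_j})$ of~$p_i$; otherwise the cost of the returned solution, restricted to the large-volume cluster $\sets_i$, would already exceed $(1+\APT) k/\Upsilon(k)$. For any vertex $u\in \sets_i\cap \seta_j$ with $i\neq j$, either $u$ is ``close'' in the sense that $\norm{F(u)-p_i}\leq \tfrac{1}{4}\norm{p_i-p_j}$, in which case the triangle inequality forces $\norm{F(u)-c_j}^2\geq \Omega(k/\vol(\vertexsetg))$, or $u$ is ``far,'' contributing $\norm{F(u)-\bar{F}(u)}^2\geq \Omega(k/\vol(\vertexsetg))$ to the ideal cost. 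In both cases the $k$-means cost paid at $u$ is at least $\Omega(\deg(u)\cdot k/\vol(\vertexsetg))$; summing against the budget $(1+\APT)k/\Upsilon(k)$ yields $\sum_{i} \vol(\seta_i\triangle\sets_i) \leq O(\vol(\vertexsetg)/\Upsilon(k))$, with the explicit constant $2176$ tracked through the separation and averaging arguments.

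The hard part is the last step: simultaneously showing that the computed centres $c_i$ align (under the stated correspondence) with the ideal centres $p_i$, and that the volume of vertices whose embedding lands far from its ideal centre is itself $O(\vol(\vertexsetg)/\Upsilon(k))$. Almost-balancedness is essential here, because it both lower-bounds the separation $\norm{p_i-p_j}^2$ uniformly and prevents a single tiny cluster from absorbing all misclassified mass, which would break the averaging step.
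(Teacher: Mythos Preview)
Your first two ingredients are exactly the paper's: the cost bound $\sum_u \deg(u)\|F(u)-\bar F(u)\|^2\le k/\Upsilon(k)$ is Lemma~\ref{lem:total_cost}, and your separation argument for the $p_i$ is essentially Lemmas~\ref{lem:pnorm}--\ref{lem:p_diff}. (A small remark: you bound $|\langle\hat f_i,\hat f_j\rangle|$ by $O(1/\sqrt{\Upsilon(k)})$ via Cauchy--Schwarz, but since $\hat f_j-\bar g_j$ lies in $\mathrm{span}(f_{k+1},\ldots,f_n)$ you actually get $|\langle\bar g_i,\hat f_j-\bar g_j\rangle|=|\langle\bar g_i-\hat f_i,\hat f_j-\bar g_j\rangle|\le 1/\Upsilon(k)$, which is what the paper uses. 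Your weaker bound still suffices for $\Omega(k/\vol(\vertexsetg))$ separation under almost-balancedness, but it would not give the stated constant.)

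The genuine gap is in your last step. You take the correspondence $\seta_i\leftrightarrow\sets_i$ as \emph{given} and then argue that $c_i$ must be close to $p_i$ ``otherwise the cost restricted to $\sets_i$ would already exceed $(1+\APT)k/\Upsilon(k)$.'' But that cost is $\sum_{u\in\sets_i}\deg(u)\|F(u)-c_{j(u)}\|^2$, where $j(u)$ is the \emph{output} cluster containing $u$; to lower-bound it using $\|c_i-p_i\|$ you would already need most of $\sets_i$ to lie in $\seta_i$, which is precisely the conclusion you are trying to establish. There is no a~priori reason the symmetric-difference-minimising permutation places $c_i$ near $p_i$, and your close/far dichotomy for $u\in\sets_i\cap\seta_j$ likewise presupposes $c_j$ close to $p_j$. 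The paper breaks this circularity by \emph{first} defining $\sigma(i)=\arg\min_j\|p^{(j)}-c_i\|$, so that $c_i$ is closest to $p_{\sigma(i)}$ by construction, and then proving the error bound of Lemma~\ref{lem:cost_lower_bound} for this $\sigma$ directly from the $k$-means cost (via the inequality $\|F(u)-c_i\|^2\ge\tfrac12\|p^{(j)}-c_i\|^2-\|p^{(j)}-F(u)\|^2$ and Lemma~\ref{lem:p_diff}). The remaining work---and the part entirely missing from your sketch---is that $\sigma$ need not be a permutation: the paper handles this with an iterative repair procedure, bounding the number of empty $\setm_{\sigma,i}$ by $64(1+\APT)k/\Upsilon(k)$ and doing a four-case analysis of the cost change at each repair step to reach the permutation $\sigma^\star$. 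This is where the constant $2176$ actually comes from, and it cannot be recovered from the averaging argument you outline.
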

Notice that some condition on $\Upsilon(k)$  is needed 
to ensure that an input graph $\graphg$ has $k$ well-defined clusters, so that misclassified vertices can be formally defined.
Taking this into account, the most significant feature of Theorem~\ref{thm:sc_guarantee} is its upper bound of misclassified vertices  with respect to $\Upsilon(k)$: our result holds, and is non-trivial, as long as 
  $\Upsilon(k)$ is lower bounded by  some constant\footnote{Note that we can take any constant approximation in Definition~\ref{def:almostBalanced} with a different corresponding constant in Theorem~\ref{thm:sc_guarantee}.}.
This significantly improves most of the previous results of graph clustering algorithms,  which make stronger assumptions on the input graphs. For example, Peng et al.~\cite{pengPartitioningWellClusteredGraphs2017} assumes that $\Upsilon(k)=\bigomega{k^3}$, Mizutani~\cite{mizutaniImprovedAnalysisSpectral2021} assumes that $\Upsilon(k) =\bigomega{k}$, the algorithm presented in Gharan and Trevisan~\cite{gharanPartitioningExpanders2014} assumes that $\lambda_{k+1} = \bigomega{\mathrm{poly}(k)\lambda^{1/4}_k}$, and the one presented in Dey et al.~\cite{deySpectralConcentrationGreedy2019} further assumes some condition with respect to  $k$, $\lambda_k$, and the maximum degree of $\graphg$.
While  these assumptions require at least a linear dependency on $k$, making it difficult  for the  instances with a large value of $k$ to satisfy,  our result suggests  that the performance of spectral clustering can be rigorously  analysed for these graphs.
In particular, compared with previous work, our result better justifies  the widely used eigen-gap heuristic for spectral clustering~\cite{ngSpectralClusteringAnalysis2001,vonluxburgTutorialSpectralClustering2007}.
This heuristic suggests that spectral clustering works when the value of $|\lambda_{k+1} -\lambda_k|$ is much larger than $|\lambda_{k} -\lambda_{k-1}|$, and in practice, the ratio between the two gaps  is usually a constant rather than some function of $k$.
 
\subsection{Properties of Spectral Embedding}
Now we  study  the properties  of the spectral embedding defined in \eqref{eq:embedding}, and show in the next subsection how to use these properties to prove Theorem~\ref{thm:sc_guarantee}.
For every cluster $\sets_i$, we define the vector $\peye \in \R^k$ by 
\[
\peye(j) = \frac{1}{\sqrt{\vol(\sets_i) }} \langle \vecf_j, \barg_i \rangle,
\]
and view these $\{\peye\}_{i=1}^k$ as the approximate centres of the embedded points from the optimal clusters $\{\sets_i\}_{i=1}^k$.
We prove that the total $k$-means cost of the embedded points can be upper bounded as follows: 

\begin{lemma} \label{lem:total_cost}
    It holds that
    \[
        \sum_{i = 1}^k \sum_{u \in \sets_i} \deg(u) \norm{F(u) - \peye}^2 \leq \frac{k}{\Upsilon(k)}.
    \]
\end{lemma}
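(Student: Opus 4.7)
The plan is to expand the $k$-means cost coordinate-by-coordinate and identify it, cluster by cluster, with the quantity $\sum_{j=1}^k \|\vecf_j - \hatg_j\|^2$ that is already bounded by the second statement of Theorem~\ref{thm:struc1}. First, by the definition of $F(u)$ in \eqref{eq:embedding}, for any $u \in \sets_i$ one has
\[
\deg(u) \norm{F(u) - \peye}^2 = \sum_{j=1}^k \left(\vecf_j(u) - \sqrt{\deg(u)}\,\peye(j)\right)^2,
\]
so the task reduces to recognising the quantity $\sqrt{\deg(u)}\,\peye(j)$.

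The key observation I would establish is that, for every $u \in \sets_i$,
\[
\sqrt{\deg(u)}\,\peye(j) \;=\; \hatg_j(u),
\]
where $\hatg_j = \sum_{\ell=1}^k \langle \vecf_j, \barg_\ell\rangle \barg_\ell$ is exactly the vector from the second part of Theorem~\ref{thm:struc1}. To see this, I would first note that $\barg_\ell(u) = \sqrt{\deg(u)}/\sqrt{\vol(\sets_\ell)}$ when $u \in \sets_\ell$ and $\barg_\ell(u) = 0$ otherwise, because the indicator vectors $\indicatorvec_\ell$ have disjoint supports. Hence for $u \in \sets_i$,
\[
\hatg_j(u) = \sum_{\ell=1}^k \langle \vecf_j, \barg_\ell\rangle \barg_\ell(u) = \langle \vecf_j, \barg_i\rangle \cdot \frac{\sqrt{\deg(u)}}{\sqrt{\vol(\sets_i)}} = \sqrt{\deg(u)}\,\peye(j),
\]
using the definition $\peye(j) = \langle \vecf_j, \barg_i\rangle / \sqrt{\vol(\sets_i)}$.

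With this identification, the double sum collapses: summing over $i$ and over $u \in \sets_i$ covers every vertex exactly once, and the disjointness argument above lets us write
\[
\sum_{i=1}^k \sum_{u \in \sets_i} \deg(u) \norm{F(u) - \peye}^2 \;=\; \sum_{u \in \vertexsetg} \sum_{j=1}^k \bigl(\vecf_j(u) - \hatg_j(u)\bigr)^2 \;=\; \sum_{j=1}^k \norm{\vecf_j - \hatg_j}^2.
\]
Applying the second statement of Theorem~\ref{thm:struc1} then gives the claimed bound of $k/\Upsilon(k)$.

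I do not anticipate a serious obstacle: the only non-routine step is spotting the identity $\sqrt{\deg(u)}\,\peye(j) = \hatg_j(u)$, which hinges on the disjoint-support property of the normalised indicator vectors $\barg_\ell$. Once that identification is made, the rest is a direct substitution and an appeal to the already-proved structure theorem. The lemma is therefore essentially a clean bookkeeping consequence of Theorem~\ref{thm:struc1}, which is likely the reason the authors arranged the structure theorem in that stronger, sum-form in the first place.
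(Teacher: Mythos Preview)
Your proposal is correct and follows essentially the same route as the paper: expand the cost coordinate-wise, use the disjoint supports of the $\barg_\ell$ to identify $\sqrt{\deg(u)}\,\peye(j)$ with $\hatg_j(u)$ for $u\in\sets_i$, collapse the double sum to $\sum_{j=1}^k\|\vecf_j-\hatg_j\|^2$, and invoke the second statement of Theorem~\ref{thm:struc1}. The paper's proof is identical in substance; your write-up is slightly more explicit about the disjoint-support step, but there is no methodological difference.
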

\begin{proof}
    We have
    \begin{align*}
        \sum_{i = 1}^k \sum_{u \in \sets_i} \deg(u) \norm{F(u) - \peye}^2 & = \sum_{i = 1}^k \sum_{u \in \sets_i} \deg(u) \left[ \sum_{j = 1}^k \left(\frac{\vecf_j(u)}{\sqrt{\deg(u)}} - \frac{\langle \barg_i, \vecf_j\rangle}{\sqrt{\vol(\sets_i)}}\right)^2 \right] \\
        & = \sum_{i = 1}^k \sum_{u \in \sets_i} \sum_{j = 1}^k \left(\vecf_j(u) - \langle \barg_i, \vecf_j\rangle \barg_i(u)\right)^2 \\
        & = \sum_{i = 1}^k \sum_{u \in \sets_i} \sum_{j = 1}^k \left(\vecf_j(u) - \hatg_j(u)\right)^2 \\
        & = \sum_{j = 1}^k \norm{\vecf_j - \hatg_j}^2 \\
        & \leq \frac{k}{\Upsilon(k)},
    \end{align*}
    where the final inequality follows by the second statement of Theorem~\ref{thm:struc1} and it holds for  $u \in \sets_x$ that  $
        \hatg_i(u) = \sum_{j = 1}^k \langle \vecf_i, \barg_j\rangle \barg_j(u) = \langle \vecf_i, \barg_x\rangle \barg_x(u)$.  
\end{proof}
The importance of Lemma~\ref{lem:total_cost} is that, although the optimal centres for $k$-means are unknown, the existence of $\{\peye\}_{i=1}^k$ is sufficient to  show  that the cost of an optimal $k$-means clustering on $\{F(u)\}_{u\in \vertexsetg}$ is at most $k/\Upsilon(k)$.
Since one can always use an $O(1)$-approximate $k$-means algorithm for spectral clustering~(e.g.,~\cite{kanungoLocalSearchApproximation2004,kumarSimpleLinearTime2004}), the cost of the output of $k$-means on $\{F(u)\}_{u\in \vertexsetg}$ is $O\left(k/\Upsilon(k)\right)$.
Next, we show that the
length of $\peye$ is approximately equal to
$1/\vol(\sets_i)$,
which will be useful in our later analysis.

\begin{lemma} \label{lem:pnorm}
    It holds for  any $i \in [k]$ that 
    \[
        \frac{1}{\vol(\sets_i)} \left(1 - \frac{1}{\Upsilon(k)}\right) \leq \norm{\vecp^{(i)}}^2 \leq \frac{1}{\vol(\sets_i)}.
    \]
\end{lemma}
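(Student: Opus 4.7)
The plan is to compute $\|\peye\|^2$ directly from the definition and then recognise the resulting sum as exactly what the first statement of Theorem~\ref{thm:struc1} controls. By definition,
\[
\|\peye\|^2 \;=\; \sum_{j=1}^k \peye(j)^2 \;=\; \frac{1}{\vol(\sets_i)}\sum_{j=1}^k \langle \vecf_j,\barg_i\rangle^2.
\]
So the whole problem reduces to sandwiching $\sum_{j=1}^k \langle \vecf_j,\barg_i\rangle^2$ between $1 - 1/\Upsilon(k)$ and $1$.

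For the upper bound, I would expand $\barg_i$ in the orthonormal eigenbasis $\{\vecf_j\}_{j=1}^n$ of $\lapn_\graphg$ and use Parseval: since $\|\barg_i\|=1$ (by normalisation), we have $\sum_{j=1}^n \langle \vecf_j,\barg_i\rangle^2 = 1$, and the partial sum over $j\in[k]$ is at most this total. This immediately yields $\|\peye\|^2 \le 1/\vol(\sets_i)$.

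For the lower bound, I would invoke the first statement of Theorem~\ref{thm:struc1} together with the identity already established in its proof, namely
\[
\|\barg_i - \hatf_i\|^2 \;=\; \sum_{j=k+1}^n \langle \barg_i,\vecf_j\rangle^2 \;=\; 1 - \sum_{j=1}^k \langle \barg_i,\vecf_j\rangle^2,
\]
where $\hatf_i = \sum_{j=1}^k \langle \barg_i,\vecf_j\rangle \vecf_j$. Since the left-hand side is at most $1/\Upsilon(k)$ by Theorem~\ref{thm:struc1}, we obtain $\sum_{j=1}^k \langle \barg_i,\vecf_j\rangle^2 \ge 1 - 1/\Upsilon(k)$, and dividing by $\vol(\sets_i)$ gives the claimed lower bound.

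There is no real obstacle here — the lemma is essentially a restatement of the first part of Theorem~\ref{thm:struc1} after rescaling, and the only thing worth being careful about is using the fact that $\{\vecf_j\}$ is an orthonormal basis and that $\|\barg_i\|=1$, both of which are built into the setup.
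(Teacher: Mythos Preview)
Your proof is correct and is essentially the same as the paper's. Both recognise that $\vol(\sets_i)\,\|\peye\|^2 = \sum_{j=1}^k \langle \vecf_j,\barg_i\rangle^2 = \|\hatf_i\|^2$, then use $\|\hatf_i\|^2 = 1 - \|\barg_i - \hatf_i\|^2$ together with the first statement of Theorem~\ref{thm:struc1} for the lower bound and the trivial bound $\|\hatf_i\|^2 \le \|\barg_i\|^2 = 1$ (equivalently, your Parseval argument) for the upper bound.
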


\begin{proof} 
    By definition, we have
    \begin{align*}
        \vol(\sets_i) \norm{\vecp^{(i)}}^2 & = \sum_{j = 1}^k \langle \vecf_j, \barg_i \rangle^2  = \norm{\hatf_i}^2  = 1 - \norm{\hatf_i - \barg_i}^2  \geq 1 - \frac{1}{\Upsilon(k)}, 
    \end{align*}
    where the inequality follows by Theorem~\ref{thm:struc1}. The other direction of the inequality follows similarly.
\end{proof}

In the remainder of this subsection, we will prove a sequence of lemmas showing that any pair of $\peye$ and $\pj$ are well separated.
Moreover, notice that their distance is essentially independent of $k$ and $\Upsilon(k)$, as long as $\Upsilon(k) \geq 20$.

\begin{lemma} \label{lem:normp_diff}
It holds for any different $i,j\in[k]$ that 
\[
    \left\|\sqrt{\vol(\sets_i)}\cdot \vecp^{(i)} - \sqrt{\vol(\sets_j)}\cdot \vecp^{(j)}\right\|^2 \geq 2 - \frac{8}{\Upsilon(k)}.
\]
\end{lemma}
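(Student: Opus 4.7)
The plan is to rewrite the quantity in the statement in terms of the projections $\hatf_i, \hatf_j$ from Theorem~\ref{thm:struc1}, and then exploit an orthogonal decomposition together with the orthogonality of the normalised indicators $\barg_i, \barg_j$ that comes from $\sets_i \cap \sets_j = \emptyset$.

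First I would observe that, by the definition of $\peye$, the scaled vector $\veca^{(i)} \triangleq \sqrt{\vol(\sets_i)}\cdot\peye \in \R^k$ has $\ell$-th coordinate equal to $\langle \vecf_\ell, \barg_i\rangle$. Hence $\veca^{(i)}$ is precisely the coefficient vector of
\[
\hatf_i \;=\; \sum_{\ell=1}^k \langle \barg_i, \vecf_\ell\rangle \vecf_\ell
\]
with respect to the orthonormal system $\vecf_1,\ldots,\vecf_k$, so that
\[
\norm{\sqrt{\vol(\sets_i)}\,\peye - \sqrt{\vol(\sets_j)}\,\pj}^2 \;=\; \norm{\veca^{(i)} - \veca^{(j)}}^2 \;=\; \norm{\hatf_i - \hatf_j}^2,
\]
and it suffices to lower bound the latter.

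Next I would write $\barg_i = \hatf_i + \vecr_i$ with $\vecr_i \triangleq \barg_i - \hatf_i \in \mathrm{span}(\vecf_{k+1},\ldots,\vecf_n)$, and analogously for $j$. By Theorem~\ref{thm:struc1}, $\norm{\vecr_i}^2, \norm{\vecr_j}^2 \leq 1/\Upsilon(k)$, and since $\hatf_i \perp \vecr_i$ we get $\norm{\hatf_i}^2 = 1 - \norm{\vecr_i}^2 \geq 1 - 1/\Upsilon(k)$, and likewise for $\hatf_j$.

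The main step is to control $\langle \hatf_i, \hatf_j\rangle$. Because $\sets_i$ and $\sets_j$ are disjoint, $\langle \barg_i, \barg_j\rangle = 0$; expanding via the decomposition and using that $\hatf_\bullet \in \mathrm{span}(\vecf_1,\ldots,\vecf_k)$ is orthogonal to $\vecr_\bullet \in \mathrm{span}(\vecf_{k+1},\ldots,\vecf_n)$ yields the identity $\langle \hatf_i, \hatf_j\rangle = -\langle \vecr_i, \vecr_j\rangle$, and Cauchy--Schwarz then gives $|\langle \hatf_i, \hatf_j\rangle| \leq 1/\Upsilon(k)$. Plugging these bounds into
\[
\norm{\hatf_i - \hatf_j}^2 \;=\; \norm{\hatf_i}^2 + \norm{\hatf_j}^2 - 2\langle \hatf_i, \hatf_j\rangle
\]
yields at least $2 - 4/\Upsilon(k)$, which is strictly stronger than the claimed $2 - 8/\Upsilon(k)$. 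There is no serious obstacle; the only subtle point is spotting the identity $\langle \hatf_i, \hatf_j\rangle = -\langle \vecr_i, \vecr_j\rangle$, which is what allows the cross term to be bounded at the correct scale $1/\Upsilon(k)$ rather than the weaker $1/\sqrt{\Upsilon(k)}$ that a direct triangle-inequality argument starting from $\norm{\barg_i - \barg_j}^2 = 2$ would give.
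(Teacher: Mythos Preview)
Your proof is correct and follows essentially the same route as the paper's: both identify $\norm{\sqrt{\vol(\sets_i)}\,\peye - \sqrt{\vol(\sets_j)}\,\pj}^2 = \norm{\hatf_i - \hatf_j}^2$, use $\langle \barg_i,\barg_j\rangle = 0$, and bound the cross term via the residuals $\vecr_i,\vecr_j$. Your clean identity $\langle \hatf_i,\hatf_j\rangle = -\langle \vecr_i,\vecr_j\rangle$ is in fact sharper than the paper's triangle-inequality expansion (which splits the cross term into three pieces and bounds each by $1/\Upsilon(k)$, yielding $3/\Upsilon(k)$), so your resulting bound $2 - 4/\Upsilon(k)$ legitimately improves on the paper's $2 - 8/\Upsilon(k)$.
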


\begin{proof}
    We have
    \begin{align*}
   \lefteqn{\left\|\sqrt{\vol(\sets_i)}\cdot  \vecp^{(i)} - \sqrt{\vol(\sets_j)}\cdot  \vecp^{(j)}\right\|^2}\\
   & = \sum_{x = 1}^k \left(\langle \vecf_x, \barg_i \rangle - \langle \vecf_x, \barg_j \rangle \right)^2 \\
    & = \left( \sum_{x = 1}^k \langle \vecf_x, \barg_i\rangle^2 \right) + \left(\sum_{x = 1}^k \langle \vecf_x, \barg_j \rangle^2 \right) - 2 \sum_{x = 1}^k \langle \vecf_x, \barg_i\rangle \langle \vecf_x, \barg_j \rangle \\
    & \geq \norm{\hatf_i}^2 + \norm{\hatf_j}^2 - 2 \abs{\hatf_i^\transpose \hatf_j} \\
    & \geq 2 \left(1 - \frac{1}{\Upsilon(k)}\right) - 2 \abs{(\barg_i + \hatf_i - \barg_i)^\transpose (\barg_j + \hatf_j - \barg_j)} \\
    & = 2 \left(1 - \frac{1}{\Upsilon(k)}\right) - 2 \left( \abs{\langle \barg_i, \hatf_j - \barg_j \rangle + \langle \barg_j, \hatf_i - \barg_i \rangle + \langle \hatf_i - \barg_i, \hatf_j - \barg_j\rangle}  \right) \\
    & \geq 2 \left(1 - \frac{1}{\Upsilon(k)}\right) - 6 \cdot \frac{1}{\Upsilon(k)}\\
    & \geq 2 - \frac{8}{\Upsilon(k)}. \qedhere
    \end{align*}
\end{proof}

\begin{lemma} \label{lem:normp_diff2}
It holds for any different $i,j\in[k]$ that  
\[
        \left\|\frac{\vecp^{(i)}}{\norm{\peye}} - \frac{\pj}{\norm{\pj}}\right\|^2 \geq 2 - \frac{20}{\Upsilon(k)}.
    \]
\end{lemma}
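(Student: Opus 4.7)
The plan is to reduce the statement to an upper bound on the cosine between $\peye/\norm{\peye}$ and $\pj/\norm{\pj}$. Using the standard identity
\[
    \left\|\frac{\peye}{\norm{\peye}} - \frac{\pj}{\norm{\pj}}\right\|^2 = 2 - 2 \cdot \frac{\langle \peye, \pj\rangle}{\norm{\peye}\,\norm{\pj}},
\]
it will be enough to show that this cosine is at most $10/\Upsilon(k)$. If $\langle \peye, \pj\rangle \leq 0$ the claim follows immediately from the identity, so from now on I will assume the inner product is strictly positive.

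To handle the numerator, I will expand the squared norm appearing in Lemma~\ref{lem:normp_diff} as
\[
    \vol(\sets_i)\norm{\peye}^2 + \vol(\sets_j)\norm{\pj}^2 - 2\sqrt{\vol(\sets_i)\vol(\sets_j)}\,\langle \peye, \pj\rangle \geq 2 - \frac{8}{\Upsilon(k)},
\]
and then plug in the upper bounds $\vol(\sets_i)\norm{\peye}^2 \leq 1$ and $\vol(\sets_j)\norm{\pj}^2 \leq 1$ from Lemma~\ref{lem:pnorm}. This immediately yields $\sqrt{\vol(\sets_i)\vol(\sets_j)}\,\langle \peye, \pj\rangle \leq 4/\Upsilon(k)$.

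For the denominator I will use the matching lower bound from Lemma~\ref{lem:pnorm}, namely $\norm{\peye}^2 \geq (1 - 1/\Upsilon(k))/\vol(\sets_i)$, which (together with the same bound for $j$) gives
\[
    \norm{\peye}\,\norm{\pj} \geq \frac{1 - 1/\Upsilon(k)}{\sqrt{\vol(\sets_i)\vol(\sets_j)}}.
\]
Dividing then produces the cosine estimate $\langle \peye, \pj\rangle / (\norm{\peye}\norm{\pj}) \leq (4/\Upsilon(k))/(1-1/\Upsilon(k))$, and hence
\[
    \left\|\frac{\peye}{\norm{\peye}} - \frac{\pj}{\norm{\pj}}\right\|^2 \geq 2 - \frac{8}{\Upsilon(k) - 1}.
\]
A final numerical slack, $8/(\Upsilon(k)-1) \leq 20/\Upsilon(k)$, which holds for all $\Upsilon(k) \geq 5/3$, converts this into the stated inequality.

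There is no real obstacle here: both of the ingredients are one-line consequences of Lemmas~\ref{lem:pnorm} and~\ref{lem:normp_diff}. The only point that needs some care is splitting on the sign of $\langle \peye, \pj\rangle$ at the outset, since the inner-product bound derived from Lemma~\ref{lem:normp_diff} does not by itself control the cosine when $\langle \peye, \pj\rangle$ is negative; in that regime, however, the squared distance between the two unit vectors is already at least $2$, which more than suffices.
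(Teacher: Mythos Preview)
Your proof is correct and, in fact, cleaner than the paper's. The two approaches differ in how they pass from the scaled inequality of Lemma~\ref{lem:normp_diff} to the normalised one. The paper introduces the rescaled vectors $\veca_i = \sqrt{\vol(\sets_i)}\,\peye$ with $\norm{\veca_i}\le 1$, observes geometrically that $\norm{\peye/\norm{\peye}-\pj/\norm{\pj}} \ge \norm{\veca_i - (\norm{\veca_i}/\norm{\veca_j})\veca_j}$, and then controls the right-hand side via the triangle inequality together with Lemmas~\ref{lem:pnorm} and~\ref{lem:normp_diff}; after squaring it lands on $2 - 14\sqrt{2}/\Upsilon(k)$ and weakens to $2 - 20/\Upsilon(k)$. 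You instead bound the cosine directly: expanding Lemma~\ref{lem:normp_diff} and using the upper half of Lemma~\ref{lem:pnorm} caps the numerator $\sqrt{\vol(\sets_i)\vol(\sets_j)}\,\langle\peye,\pj\rangle$, while the lower half of Lemma~\ref{lem:pnorm} handles the denominator, giving the sharper intermediate bound $2 - 8/(\Upsilon(k)-1)$. Your route avoids the geometric rescaling trick entirely and the sign split on $\langle\peye,\pj\rangle$ is exactly the right way to make the division step safe. One cosmetic point: you should note explicitly that when $\Upsilon(k) < 5/3$ the claimed lower bound $2 - 20/\Upsilon(k)$ is negative and therefore vacuous, so the restriction in your last step is harmless.
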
 

\begin{proof}
    Assume without loss of generality that $$\sqrt{\vol(\sets_i)} \norm{\peye} \leq \sqrt{\vol(\sets_j)} \norm{\pj}.$$
    Let $\veca_i = \sqrt{\vol(\sets_i)} \peye$ and $\veca_j = \sqrt{\vol(\sets_j)} \pj$ and notice that $\norm{\veca_i} \leq \norm{\veca_j} \leq 1$.
    Then, using Lemma~\ref{lem:pnorm} we have \begin{align*}
        \left\|\frac{\vecp^{(i)}}{\norm{\peye}} - \frac{\pj}{\norm{\pj}}\right\| & \geq \left\|\veca_i - \frac{\norm{\veca_i}}{\norm{\veca_j}} \veca_j\right\| \\
        & \geq \norm{\veca_i - \veca_j} - \left(\norm{\veca_j} - \norm{\veca_i}\right) \\
        & \geq \sqrt{2 - \frac{8}{\Upsilon(k)}}  - \left(\sqrt{\vol(\sets_j)}\cdot \left\|\pj\right\| - \sqrt{\vol(\sets_i)} \left\|\peye\right\|\right) \\
        & \geq \sqrt{2}\left(1 - \frac{4}{\Upsilon(k)}\right) + \sqrt{1 - \frac{1}{\Upsilon(k)}} - 1 \\
        & \geq \sqrt{2} - \frac{4\sqrt{2}}{\Upsilon(k)} - \frac{1}{\Upsilon(k)}\\
        &\geq \sqrt{2} - \frac{7}{\Upsilon(k)},
    \end{align*}
    where the second inequality follows by the triangle inequality, and the third and fourth use Lemma~\ref{lem:normp_diff}. We also use the fact that for $x \leq 1$, it is the case that $\sqrt{1 - x} \geq 1 - x$.
    This gives
    \begin{align*}
        \left\|\frac{\vecp^{(i)}}{\norm{\peye}} - \frac{\pj}{\norm{\pj}}\right\|^2 & \geq 2 - \frac{14 \sqrt{2}}{\Upsilon(k)}  \geq 2 - \frac{20}{\Upsilon(k)},
    \end{align*}
    which proves the lemma.
\end{proof}

\begin{lemma} \label{lem:p_diff}
    It holds for  any $i, j \in [k]$ with $i \neq j$ that 
    \[
        \norm{\peye - \pj}^2 \geq \frac{1}{\min\{\vol(\sets_i), \vol(\sets_j)\}} \left(\frac{1}{2} - \frac{8}{\Upsilon(k)} \right). 
    \]
\end{lemma}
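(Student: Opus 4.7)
\emph{Plan.} My plan is to expand $\|\peye - \pj\|^2 = \|\peye\|^2 + \|\pj\|^2 - 2\langle \peye, \pj\rangle$, bound the two norm terms from below using Lemma~\ref{lem:pnorm}, and bound the cross term $\langle \peye, \pj\rangle$ from above by unpacking Lemma~\ref{lem:normp_diff}. Throughout I would set $a \triangleq \sqrt{\vol(\sets_i)}$ and $b \triangleq \sqrt{\vol(\sets_j)}$, and assume without loss of generality $a \leq b$, so that the target right-hand side of the lemma is $(1/2 - 8/\Upsilon(k))/a^2$.

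The crucial step is converting Lemma~\ref{lem:normp_diff} into a one-sided inequality on $\langle \peye, \pj\rangle$. Expanding the square,
\[
\|a\peye - b\pj\|^2 = a^2\|\peye\|^2 + b^2\|\pj\|^2 - 2ab\,\langle \peye, \pj\rangle,
\]
and pairing the lower bound $\|a\peye - b\pj\|^2 \geq 2 - 8/\Upsilon(k)$ from Lemma~\ref{lem:normp_diff} with the upper bounds $a^2\|\peye\|^2 \leq 1$ and $b^2\|\pj\|^2 \leq 1$ from Lemma~\ref{lem:pnorm}, I expect to read off
\[
\langle \peye, \pj\rangle \leq \frac{4}{\Upsilon(k)\cdot ab}.
\]

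Substituting this back, dropping the nonnegative $\|\pj\|^2$, and plugging in the lower bound $\|\peye\|^2 \geq (1 - 1/\Upsilon(k))/a^2$ from Lemma~\ref{lem:pnorm}, the inequality $b \geq a$ then converts $1/(ab)$ into $1/a^2$ to give a bound of the form $\|\peye - \pj\|^2 \geq (1 - O(1/\Upsilon(k)))/a^2$. This is already comfortably stronger than the claimed $(1/2 - 8/\Upsilon(k))/a^2$, and the lemma follows since $a^2 = \min\{\vol(\sets_i), \vol(\sets_j)\}$.

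The main obstacle I can foresee is precisely this cross-term step: a naive triangle-inequality route such as $a\|\peye - \pj\| \geq \|a\peye - b\pj\| - (b-a)\|\pj\|$ works when $a$ and $b$ are comparable but degrades as $a/b \to 0$, producing only a constant of $(\sqrt{2}-1)^2 \approx 0.17$ in front of $1/a^2$ and failing to hit $1/2$. Passing through the inner-product expansion avoids this split entirely, and the slack in the resulting estimate is what allows the lemma's stated constants $1/2$ and $8$ to hold with room to spare.
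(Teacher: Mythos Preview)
Your proposal is correct and takes a genuinely different, more direct route than the paper.  The paper argues via Lemma~\ref{lem:normp_diff2}: it normalises $\peye$ and $\pj$, bounds the cosine $\langle \peye/\|\peye\|, \pj/\|\pj\|\rangle \leq \sqrt{2}/2 + O(1/\Upsilon(k))$, and then expands $\|\peye-\pj\|^2$ with the parametrisation $\|\pj\|=\alpha\|\peye\|$, optimising over $\alpha$ to get the leading factor $2-\sqrt{2}$ (weakened to $1/2$).  You instead bypass Lemma~\ref{lem:normp_diff2} entirely: from Lemma~\ref{lem:normp_diff} and the upper bound in Lemma~\ref{lem:pnorm} you extract $\langle \peye,\pj\rangle \leq 4/(\Upsilon(k)\,ab)$ directly, then combine with $\|\peye\|^2\geq (1-1/\Upsilon(k))/a^2$ and $\|\pj\|^2\geq 0$ to obtain $\|\peye-\pj\|^2 \geq (1-9/\Upsilon(k))/a^2$, which is strictly sharper than the stated $(1/2-8/\Upsilon(k))/a^2$ whenever the latter is nonnegative.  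Your approach is shorter, avoids the normalised-vector detour, and yields leading constant $1$ rather than $1/2$; the paper's route, by contrast, is what naturally feeds into the parallel meta-graph argument in Section~\ref{sec:embedding+} (Lemmas~\ref{lem:norm_phat_diff} and~\ref{lem:pij_distance}), where the $\barxi$-normalised quantities do not enjoy the clean bound $\|\barxi\|\cdot\|\peye\|\leq 1/\sqrt{\vol(\sets_i)}$ that makes your shortcut work here.
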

\begin{proof}
    Assume without loss of generality that $\norm{\peye} \geq \norm{\pj}$. Then, let $\norm{\pj} = \alpha \norm{\peye}$ for some $\alpha \in [0, 1]$.
    By Lemma~\ref{lem:pnorm}, it holds that
    \[
        \left\|\peye\right\|^2 \geq \frac{1}{\min\{\vol(\sets_i), \vol(\sets_j)\|} \left(1 - \frac{1}{\Upsilon(k)}\right).
    \]
    Additionally, notice that by the proof of Lemma~\ref{lem:normp_diff2},  
    \[
        \left\langle \frac{\peye}{\norm{\peye}}, \frac{\pj}{\norm{\pj}} \right\rangle \leq \sqrt{2} - \frac{1}{2} \norm{\frac{\peye}{\norm{\peye}} - \frac{\pj}{\norm{\pj}}} \leq \frac{\sqrt{2}}{2} + \frac{7}{2 \Upsilon(k)},
    \]
    where we use the fact that if $x^2 + y^2 = 1$, then $x + y \leq \sqrt{2}$.
    One can understand the equation above by considering the right-angled triangle with one edge given by $\peye / \norm{\peye}$ and another edge given by $(\peye / \norm{\peye}) . (\pj / \norm{\pj})$.
    Then,
    \begin{align*}
        \norm{\peye - \pj}^2 & = \norm{\peye}^2 + \norm{\pj}^2 - 2 \left\langle \frac{\peye}{\norm{\peye}}, \frac{\pj}{\norm{\pj}} \right\rangle \norm{\peye} \norm{\pj} \\
        & \geq (1 + \alpha) \norm{\peye}^2 - \left(\sqrt{2} + \frac{7}{\Upsilon(k)}\right) \alpha \norm{\peye}^2 \\
        & \geq \left(1 - (\sqrt{2} - 1) \alpha  - \frac{7}{\Upsilon(k)}\right) \norm{\peye}^2 \\
        & \geq  \frac{1}{\min\{\vol(\sets_i), \vol(\sets_j)\}}  \left(\frac{1}{2} - \frac{7}{\Upsilon(k)} \right) \left(1 - \frac{1}{\Upsilon(k)}\right) \\
        & \geq  \frac{1}{\min\{\vol(\sets_i), \vol(\sets_j)\}} \left(\frac{1}{2} - \frac{8}{\Upsilon(k)}\right),
    \end{align*}
    which completes the proof.
\end{proof}

We remark that, despite the similarity in their formulation, the technical lemmas presented in this subsection are stronger than the ones in Peng et al.~\cite{pengPartitioningWellClusteredGraphs2017}.  These results are obtained through our stronger structure theorem~(Theorem~\ref{thm:struc1}), and are crucial for us to prove Theorem~\ref{thm:sc_guarantee}.

\subsection{Proof of Theorem~\ref{thm:sc_guarantee}
\label{sec:sc_proof_sketch}}
In this subsection, we prove
Theorem~\ref{thm:sc_guarantee},
and explain why a mild condition like $\Upsilon(k) = \bigomega{1}$ suffices for spectral clustering to perform well in practice. 
Let   $\{\seta_i\}_{i=1}^k$ be the output of spectral clustering,  and we denote  the centre of the embedded points $\{F(u)\}$ for any $\seta_i$  by $\vecc_i$.
As the starting point of our analysis, we claim that every $\vecc_i$ will be close to its ``optimal'' correspondent $\vecp^{(\sigma(i))}$ for some $\sigma(i)\in[k]$. That is, the actual centre of embedded points from every $\seta_i$ is close to the approximate centre of the embedded points from some optimal $\sets_i$.
To formalise this,  we define the function $\sigma:[k]\rightarrow[k]$ by
\begin{equation}\label{eq:defsigma}
    \sigma(i) = \argmin_{j \in [k]} \norm{\pj - \vecc_{i}};
\end{equation}
that is, cluster $\seta_i$ should correspond to $\sets_{\sigma(i)}$ in which the value of $\|\vecp^{(\sigma(i))} - \vecc_i\|$ is the lowest among all the distances between $\vecc_i$ and all of the $\vecp^{(j)}$ for $j\in [k]$.
However, one needs to be cautious as \eqref{eq:defsigma} wouldn't necessarily define   a permutation, and there might exist different $i,i'\in[k]$ such that both of $\seta_i$ and $\seta_{i'}$ map to the same $\sets_{\sigma(i)}$. 
Taking this into account, for any fixed $\sigma:[k]\rightarrow[k]$ and $i\in[k]$, we  further  define $\setm_{\sigma, i}$ by
\begin{equation}\label{eq:defmset}
    \setm_{\sigma, i} \triangleq \bigcup_{j : \sigma(j) = i} \seta_j.
\end{equation}
The following lemma shows that, when mapping every output  $\seta_i$ to $\sets_{\sigma(i)}$, the total ratio of misclassified volume with respect to each cluster can be upper bounded:
\begin{lemma} \label{lem:cost_lower_bound}
Let $\{\seta_i\}_{i=1}^k$ be the output of spectral clustering, and $\sigma$ and  $\setm_{\sigma,i}$ be defined as  in \eqref{eq:defsigma} and \eqref{eq:defmset}.
If $\Upsilon(k) \geq 32$, then
\[
        \sum_{i = 1}^k \frac{\vol(\setm_{\sigma, i} \triangle \sets_i)}{\vol(\sets_i)} \leq 64 (1 + \APT) \frac{k}{\Upsilon(k)}.
    \]
 \end{lemma}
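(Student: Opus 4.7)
The plan is to relate the ratio of misclassified volume to the $k$-means cost of the spectral embedding, which is already controlled by Lemma~\ref{lem:total_cost} together with the approximation factor $\APT$. First I identify the contribution of each misclassified vertex. Call $u$ \emph{misclassified} if $u \in \sets_i \cap \seta_l$ with $\sigma(l) = j \neq i$. Such a $u$ lies in $\sets_i \setminus \setm_{\sigma,i}$ (so contributes $\deg(u)$ to the $i$-th term) and in $\setm_{\sigma,j}\setminus \sets_j$ (so contributes $\deg(u)$ to the $j$-th term). Every non-misclassified vertex contributes $0$. Hence
\[
\sum_{i=1}^{k}\frac{\vol(\setm_{\sigma,i}\triangle \sets_i)}{\vol(\sets_i)} = \sum_{u\ \text{misclass}} \deg(u)\left(\frac{1}{\vol(\sets_{i(u)})} + \frac{1}{\vol(\sets_{j(u)})}\right) \leq \sum_{u\ \text{misclass}}\frac{2\deg(u)}{\min\{\vol(\sets_{i(u)}),\vol(\sets_{j(u)})\}},
\]
where $i(u)$ is the optimal cluster containing $u$ and $j(u)=\sigma(l(u))$.

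Next I give a pointwise lower bound on the $k$-means cost of each misclassified vertex that matches this form. Writing $l=l(u)$, $i=i(u)$, $j=j(u)$, the parallelogram identity gives $\|F(u)-\vecc_l\|^2 + \|F(u)-\peye\|^2 \geq \tfrac{1}{2}\|\vecc_l - \peye\|^2$. By the definition of $\sigma$ in \eqref{eq:defsigma}, $\|\vecc_l - \pj\| \leq \|\vecc_l - \peye\|$, so the triangle inequality applied to $\peye$, $\vecc_l$, $\pj$ yields $\|\vecc_l-\peye\|^2 \geq \tfrac{1}{4}\|\peye - \pj\|^2$. Finally, Lemma~\ref{lem:p_diff} combined with $\Upsilon(k)\geq 32$ gives $\|\peye - \pj\|^2 \geq \tfrac{1}{4\min\{\vol(\sets_i),\vol(\sets_j)\}}$. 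Chaining these inequalities,
\[
\|F(u)-\vecc_l\|^2 + \|F(u)-\peye\|^2 \geq \frac{1}{32\min\{\vol(\sets_i),\vol(\sets_j)\}}.
\]
Multiplying by $\deg(u)$ and comparing with the previous display, each misclassified vertex satisfies
\[
\deg(u)\!\left(\frac{1}{\vol(\sets_i)} + \frac{1}{\vol(\sets_j)}\right) \leq 64\,\deg(u)\!\left(\|F(u)-\vecc_l\|^2 + \|F(u)-\peye\|^2\right).
\]

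Summing over all misclassified $u$ and extending the sum on the right to all $u \in \vertexsetg$ (which only enlarges it),
\[
\sum_{i=1}^{k}\frac{\vol(\setm_{\sigma,i}\triangle \sets_i)}{\vol(\sets_i)} \leq 64\sum_{u\in \vertexsetg}\deg(u)\|F(u)-\vecc_{l(u)}\|^2 + 64\sum_{u\in \vertexsetg}\deg(u)\|F(u)-\vecp^{(i(u))}\|^2.
\]
The second sum is directly bounded by Lemma~\ref{lem:total_cost}, giving $k/\Upsilon(k)$. For the first sum, the partition $\{\sets_i\}$ with centres $\{\peye\}$ is a feasible $k$-means clustering of cost at most $k/\Upsilon(k)$ by Lemma~\ref{lem:total_cost}, so the $\APT$-approximation output has cost at most $\APT\cdot k/\Upsilon(k)$. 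Adding the two bounds yields exactly $64(1+\APT)k/\Upsilon(k)$.

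The main obstacle, in my view, is the bookkeeping in the very first step: one must notice that each misclassified vertex appears in the sum $\sum_i \vol(\setm_{\sigma,i}\triangle \sets_i)/\vol(\sets_i)$ with two different volume weights, $1/\vol(\sets_{i(u)})$ and $1/\vol(\sets_{j(u)})$, and that $1/a+1/b \leq 2/\min\{a,b\}$ precisely matches the $\min$ that Lemma~\ref{lem:p_diff} supplies. Pairing the two contributions of each misclassified vertex in this way is what avoids bringing in the almost-balanced assumption inside this lemma and keeps the final constant clean.
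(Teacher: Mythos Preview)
Your proof is correct and follows essentially the same approach as the paper: the identification of each misclassified vertex's double contribution (weighted by $1/\vol(\sets_{i(u)})$ and $1/\vol(\sets_{j(u)})$), the parallelogram-type inequality $\|F(u)-\vecc_l\|^2+\|F(u)-\peye\|^2\geq\tfrac12\|\vecc_l-\peye\|^2$, the triangle-inequality step using the definition of $\sigma$, and the application of Lemmas~\ref{lem:p_diff} and~\ref{lem:total_cost} are exactly the ingredients the paper uses (the paper writes the same chain via the $\setb_{ij}=\seta_i\cap\sets_j$ decomposition and the equivalent inequality $\|\veca-\vecb\|^2\geq\tfrac12\|\vecb-\vecc\|^2-\|\veca-\vecc\|^2$). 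Your per-vertex presentation is slightly more streamlined, but the argument is the same.
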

 \begin{proof}
    Let us define $\setb_{ij} = \seta_i \cap \sets_j$ to be the vertices in $\seta_i$ which belong to the true cluster $\sets_j$.
    Then, we have that
    \begin{align}
        \sum_{i = 1}^k \frac{\vol(\setm_{\sigma, i} \triangle \sets_i)}{\vol(\sets_i)} & = \sum_{i = 1}^k \sum_{\substack{j = 1 \\ j \neq \sigma(i)}}^k \vol(\setb_{ij}) \left(\frac{1}{\vol(\sets_{\sigma(i)})} + \frac{1}{\vol(\sets_j)} \right) \nonumber \\
        & \leq 2 \sum_{i = 1}^k \sum_{\substack{j = 1 \\ j \neq \sigma(i)}}^k \frac{\vol(\setb_{ij})}{\min\{\vol(\sets_{\sigma(i)}), \vol(\sets_j)\}}, \label{eq:up_sym_ratio}
    \end{align}
and that 
\begin{align*}
        \mathrm{COST}(\seta_1, \ldots \seta_k) & = \sum_{i = 1}^k \sum_{u \in \seta_i} \deg(u) \norm{F(u) - \vecc_i}^2 \\
        & \geq \sum_{i = 1}^k \sum_{\substack{1\leq j \leq  k \\ j \neq \sigma(i)}}   \sum_{u \in \setb_{ij}} \deg(u) \norm{F(u) - \vecc_i}^2 \\
        & \geq \sum_{i = 1}^k \sum_{\substack{1\leq j \leq k  \\ j \neq \sigma(i)}}  \sum_{u \in \setb_{ij}} \deg(u) \left(\frac{\norm{\pj - \vecc_i}^2}{2} - \norm{\pj - F(u)}^2 \right) \\
        & \geq \sum_{i = 1}^k \sum_{\substack{1\leq j \leq k \\ j \neq \sigma(i)}}  \sum_{u \in \setb_{ij}} \frac{\deg(u) \norm{\pj - \vecp^{(\sigma(i))}}^2}{8} - \sum_{i = 1}^k \sum_{\substack{1\leq j \leq k \\ j \neq i}}  \sum_{u \in \setb_{ij}} \deg(u) \norm{\pj - F(u)}^2 \\
        & \geq \sum_{i = 1}^k \sum_{\substack{1\leq j\leq k \\ j \neq \sigma(i)}}  \vol(\setb_{ij}) \frac{\norm{\pj - \vecp^{(\sigma(i))}}^2}{8} - \sum_{i = 1}^k \sum_{u \in \sets_i} \deg(u) \norm{\peye - F(u)}^2 \\
        & \geq \sum_{i = 1}^k \sum_{\substack{1\leq j\leq k \\ j \neq \sigma(i)}}   \frac{\vol(\setb_{ij})}{8 \min\{\vol(\sets_{\sigma(i)}), \vol(\sets_j)\}}\left(\frac{1}{2} - \frac{8}{\Upsilon(k)}\right) - \sum_{i = 1}^k \sum_{u \in \sets_i} \deg(u) \norm{\peye - F(u)}^2 \\
        & \geq \frac{1}{16} \cdot\left( \sum_{i = 1}^k \frac{\vol(\setm_{\sigma, i} \triangle \sets_i)}{\vol(\sets_i)}\right) \left(\frac{1}{2} - \frac{8}{\Upsilon(k)}\right) - \frac{k}{\Upsilon(k)},
    \end{align*}
    where the second inequality follows by the inequality of  \[\norm{\veca - \vecb}^2 \geq \frac{\norm{\vecb - \vecc}^2}{2} - \norm{\veca - \vecc}^2,\] the third inequality follows since $\vecc_i$ is closer to $\vecp^{(\sigma(i))}$ than $\vecp^{(j)}$, the fifth one follows from Lemma~\ref{lem:p_diff}, and the last one follows by \eqref{eq:up_sym_ratio}.
    
   On the other side, since $\mathrm{COST}(\seta_1,\ldots, \seta_k) \leq \APT\cdot k/\Upsilon(k)$, we have that 
    \[
    \frac{1}{16} \cdot\left( \sum_{i = 1}^k \frac{\vol(\setm_{\sigma, i} \triangle \sets_i)}{\vol(\sets_i)}\right) \left(\frac{1}{2} - \frac{8}{\Upsilon(k)}\right) - \frac{k}{\Upsilon(k)} \leq \APT\cdot\frac{k}{\Upsilon(k)}.
    \] This implies that 
    \begin{align*}
            \sum_{i = 1}^k \frac{\vol(\setm_{\sigma, i} \triangle \sets_i)}{\vol(\sets_i)}   & \leq 16\cdot   (1+\APT)\cdot \frac{k}{\Upsilon(k)}\cdot  \left(\frac{1}{2} - \frac{8}{\Upsilon(k)}\right)^{-1}\\
            &\leq 64\cdot   (1+\APT)\cdot \frac{k}{\Upsilon(k)},
    \end{align*}
    where the last inequality holds by the assumption that $\Upsilon(k)\geq 32$. 
\end{proof}
 
 It remains to  study the case in which $\sigma$ isn't a permutation. Notice that, if this occurs,  there is some $i\in[k]$ such that $\setm_{\sigma,i}=\emptyset$, and different values of $x,y\in[k]$ such that $\sigma(x)=\sigma(y)=j$ for some $j\neq i$.
 Based on this, we construct the function $\sigma':[k]\rightarrow [k]$ from $\sigma$ based on the following procedure: 
\begin{itemize}
    \item Set $\sigma'(z)=i$ if $z=x$;
    \item Set $\sigma'(z)=\sigma(z)$ for any other $z\in [k]\setminus \{x\}$.
\end{itemize}
Notice that one can construct $\sigma'$ in this way as long as $\sigma$ isn't a permutation, and this constructed $\sigma'$ reduces the number of $\setm_{\sigma,i}$ being $\emptyset$ by one.
We show one only needs to construct such $\sigma'$ at most $\bigo{k/\Upsilon(k)}$ times
to obtain the final permutation called $\sigma^{\star}$, and it holds for $\sigma^{\star}$ that 
\[
        \sum_{i = 1}^k \frac{\vol(\setm_{\sigma^{\star}, i} \triangle \sets_i)}{\vol(\sets_i)} =\bigo{\frac{k}{\Upsilon(k)}}.
    \]
Combining this with the fact that $\vol(\sets_i) = \Theta(\vol(\vertexsetg)/k)$ for any $i\in[k]$ proves  Theorem~\ref{thm:sc_guarantee}.

\begin{proof}[Proof of Theorem~\ref{thm:sc_guarantee}]
\newcommand{\voldiffi}{\vol(\setm_{\sigma, i} \triangle \sets_i)}
\newcommand{\voldiffip}{\vol(\setm_{\sigma', i} \triangle \sets_i)}
\newcommand{\voldiffjp}{\vol(\setm_{\sigma', j} \triangle \sets_j)}
\newcommand{\voldiffj}{\vol(\setm_{\sigma, j} \triangle \sets_j)}
    By Lemma~\ref{lem:total_cost}, we have
    \[
        \mathsf{COST}(\sets_1, \ldots, \sets_k) \leq  \frac{k}{\Upsilon(k)}.
        \]
    Combining this with the fact that one can apply an approximate $k$-means clustering algorithm with approximation ratio $\APT$ for spectral clustering,  we have that
    \[
        \mathsf{COST}(\seta_1, \ldots, \seta_k) \leq \APT\cdot\frac{k}{\Upsilon(k)}.
        \]
    Then, let $\sigma: [1, k] \rightarrow [1, k]$ be the function which assigns the clusters $\seta_1, \ldots, \seta_k$ to the ground truth clusters such that
    \[
        \sigma(i) = \argmin_{j \in [k]} \norm{\pj - \vecc_i}.
        \]
    Then, it holds by Lemma~\ref{lem:cost_lower_bound} that
    \begin{equation} \label{eq:miscl}
        \epsilon(\sigma) \triangleq \sum_{i = 1}^k \frac{\vol(\setm_{\sigma, i} \triangle \sets_i)}{\vol(\sets_i)} \leq 64 \cdot (1 + \APT)\cdot \frac{k}{\Upsilon(k)}.
    \end{equation}
    Now, assume that $\sigma$ is not a permutation, and we'll apply the following procedure inductively to construct a permutation from  $\sigma$. Since $\sigma$ isn't a permutation, there is $i\in[k]$ such that $M_{\sigma, i } = \emptyset$. Therefore, there are different values of $x,y\in[k]$ such that $\sigma(x)=\sigma(y)=j$ for some $j\neq i$.  Based on this, we construct the function $\sigma':[k]\rightarrow[k]$ such that $\sigma'(z)=i$ if $z=x$, and $\sigma'(z)=\sigma(z)$ for any other $z\in [k]\setminus \{x\}$. Notice that we can construct $\sigma'$ in this way as long as $\sigma$ isn't a permutation.
    By the definition of $\epsilon(\cdot)$ and function $\sigma'$, the difference between $\epsilon(\sigma')$ and $\epsilon(\sigma)$ can be written as
   \begin{equation} \label{eq:sigmadiff}
       \epsilon(\sigma') - \epsilon(\sigma) = \underbrace{ \left(\frac{\voldiffip}{\vol(\sets_i)} - \frac{\voldiffi}{\vol(\sets_i)}\right)}_{=: \alpha} + \underbrace{\left(\frac{\voldiffjp}{\vol(\sets_j)} - \frac{\voldiffj}{\vol(\sets_j)}\right)}_{=: \beta}.
   \end{equation} 
   Let us consider $4$ cases based on the signs of $\alpha,\beta$ defined above.
   In each case, we   bound the cost introduced by the change from $\sigma$ to $\sigma'$, and then consider the total cost introduced throughout the entire procedure of constructing a permutation.

   \textbf{Case~1:  $\alpha<0, \beta<0$.}
   In this case, it is clear that $\epsilon(\sigma') - \epsilon(\sigma) \leq 0$, and hence the total introduced  cost 
   is at most $0$.
   
   \textbf{Case~2: $\alpha>0, \beta<0$.}
   In this case, we have
   \begin{align*}
       \lefteqn{\epsilon(\sigma') - \epsilon(\sigma)}\\
       & \leq \frac{1}{\min(\vol(\sets_i), \vol(\sets_j))} \left( \voldiffip - \voldiffi + \abs{\voldiffjp - \voldiffj} \right) \\
       & = \frac{1}{\min(\vol(\sets_i), \vol(\sets_j))} \left( \voldiffip - \voldiffi + \voldiffj - \voldiffjp \right) \\
       & = \frac{1}{\min(\vol(\sets_i), \vol(\sets_j))} \left( \vol(\seta_x \setminus \sets_i) - \vol(\seta_x \intersect \sets_i) + \vol(\seta_x \setminus \sets_j) - \vol(\seta_x \intersect \sets_j) \right) \\
       & \leq \frac{2 \cdot \vol(\seta_x \setminus \sets_j)}{\min(\vol(\sets_i), \vol(\sets_j))} \\
       & \leq \frac{8 \cdot \vol(\seta_x \setminus \sets_j)}{\vol(\sets_j)},
   \end{align*}
   where the last inequality follows by the  fact that the clusters are almost balanced.
   Since each set $\seta_x$ is moved at most once in order to construct a permutation, the total introduced cost due to this case  is at most
   \begin{align*}
       \sum_{j = 1}^k \sum_{\seta_x \in \setm_{\sigma, j}} \frac{8 \cdot \vol(\seta_x \setminus \sets_j)}{\vol(\sets_j)} \leq 8 \cdot \sum_{j = 1}^k \frac{\voldiffj}{\vol(\sets_j)} \leq 512\cdot \left(1 + \APT\right)\cdot  \frac{k}{\Upsilon(k)}.
   \end{align*}
   
   \textbf{Case~3: $\alpha>0, \beta>0$.}
   In this case, we have
   \begin{align*}
       \lefteqn{\epsilon(\sigma') - \epsilon(\sigma)}\\
       & \leq \frac{1}{\min(\vol(\sets_i), \vol(\sets_j))} \left( \voldiffip - \voldiffi + \voldiffjp - \voldiffj \right) \\
        & = \frac{1}{\min(\vol(\sets_i), \vol(\sets_j))} \left(  \vol(\seta_x \setminus \sets_i) - \vol(\seta_x \cap \sets_i) + \vol(\seta_x \cap \sets_j) -  \vol(\seta_x \setminus \sets_j)   \right) \\
        & \leq  \frac{1}{\min(\vol(\sets_i), \vol(\sets_j))} \left(2\cdot  \vol(\sets_x \cap \sets_j) \right) \\
        & \leq \frac{2\cdot  \vol(\sets_j)}{\min(\vol(\sets_i), \vol(\sets_j))} \\
        & \leq 8,
   \end{align*}
   where the last inequality follows by  the fact that the clusters are almost balanced.
   We will consider the total introduced cost due to this case and Case~4 together, and   so let's first examine Case~4.
   
   \textbf{Case~4: $\alpha<0,\beta>0$.}
   In this case, we have
   \begin{align*}
       \epsilon(\sigma') - \epsilon(\sigma) & \leq \frac{1}{\vol(\sets_j)} \left(\voldiffjp - \voldiffj\right) \\
       & \leq \frac{1}{\vol(\sets_j)} \left( \vol(\seta_x \intersect \sets_j) - \vol(\seta_x \setminus \sets_j)\right)\\
       & \leq \frac{\vol(\sets_j)}{\vol(\sets_j)} \\
       & = 1.
   \end{align*}
   Now, let us bound the total number of times we need to construct $\sigma'$ in order to obtain a permutation.
   For any $i$  with $\setm_{\sigma,i}=\emptyset$,  we have
    \[
        \frac{\vol(\setm_{\sigma, i} \triangle \sets_i)}{\vol(\sets_i)} = \frac{\vol(\sets_i)}{\vol(\sets_i)} = 1,
    \]
    so the total number of required iterations is upper bounded  by
    \begin{align*}
        \cardinality{\{i : \setm_{\sigma,i}=\emptyset \}} & \leq \sum_{i = 1}^k \frac{\vol(\setm_{\sigma, i} \triangle \sets_i)}{\vol(\sets_i)} \\
        & \leq 64\cdot (1 + \APT)\cdot  \frac{k}{\Upsilon(k)}.
    \end{align*}
As such, the total introduced cost due to Cases~3 and 4 is at most  
    \[
        8 \cdot 64\cdot (1 + \APT)\cdot  \frac{k}{\Upsilon(k)} = 512\cdot (1 + \APT) \cdot \frac{k}{\Upsilon(k)}.
    \]
    Putting everything together, we have that  
    \begin{align*}
        \epsilon(\sigma^\star)&  \leq \epsilon(\sigma) + 1024\cdot (1 + \APT)\cdot \frac{k}{\Upsilon(k)}\\
        & \leq 1088\cdot (1 + \APT)\cdot  \frac{k}{\Upsilon(k)}.
    \end{align*}
    This implies that
    \[
        \sum_{i = 1}^k \vol(\setm_{\sigma^\star, i} \triangle \sets_i) \leq 2176\cdot (1 + \APT)\cdot  \frac{\vol(\vertexsetg)}{\Upsilon(k)}
    \]
    and completes the proof.
\end{proof}

We remark that this method of upper bounding  the ratio of  misclassified vertices is very different from the ones used in previous references, e.g.,~\cite{deySpectralConcentrationGreedy2019,mizutaniImprovedAnalysisSpectral2021,pengPartitioningWellClusteredGraphs2017}. In particular, instead of examining all the possible mappings between $\{\seta_i\}_{i=1}^k$ and $\{\sets_i\}_{i=1}^k$, 
we directly work with   some specifically defined function $\sigma$, and construct our desired mapping $\sigma^{\star}$ from $\sigma$. This is another key for us to obtain stronger results than the previous work.

\section{Beyond the Classical Spectral Clustering\label{sec:beyond}}
In this section we propose a variant of spectral clustering  which employs fewer than $k$ eigenvectors to find $k$ clusters. We prove that,
when the structure among the optimal clusters in an input graph satisfies certain conditions,
spectral clustering with fewer eigenvectors is able to  produce better results than classical spectral clustering.
Our result gives a theoretical justification of the surprising showcase in Section~\ref{sec:introduction},
and presents a significant speedup on the runtime of spectral clustering in practice, since fewer eigenvectors are used to construct the embedding.
 
\subsection{Encoding the Cluster-Structure into  Meta-Graphs}\label{sec:structure++}

Suppose that $\{\sets_i\}_{i=1}^k$ is a $k$-way partition of $\vertexsetg$ for an input graph $\graphg$ that minimises the $k$-way expansion $\rho(k)$. We define the  matrix $\adj_{\graphm}\in\R^{k\times k}$ by
\[
    \adj_\graphm(i, j) = \twopartdef{\weight(\sets_i, \sets_j)}{i \neq j,}{2 \weight(\sets_i, \sets_j)}{i = j}
\]
and, taking $\adj_\graphm$ to be the adjacency matrix, this 
defines a graph $\graphm = (\vertexset_\graphm, \edgeset_\graphm, \weight_\graphm)$ which we refer to as the \emph{meta-graph} of the clusters.
We define the normalised adjacency matrix of $\graphm$ by 
\[\adjn_\graphm \triangleq \degmhalfneg_\graphm \adj_\graphm \degmhalfneg,\]
and the normalised Laplacian matrix of $\graphm$ by \[\lapn_\graphm \triangleq \identity - \adjn_\graphm.\]
Let the eigenvalues of $\lapn_\graphm$ be $\gamma_1\leq\ldots\leq\gamma_k$, and $\vecg_i\in\R^k$ be the eigenvector corresponding to $\gamma_i$ for any $i\in[k]$.

The starting point of our novel approach is to look at the structure of the meta-graph $\graphm$ defined by $\{\sets_i\}_{i=1}^k$ of $\graphg$, and study  how the spectral information of $\lapn_\graphm\in\R^{k\times k}$ is encoded in the bottom eigenvectors of $\lapn_\graphg$.
To achieve this, 
 for any    $ \ell\in [k]$ and vertex $i\in \vertexset_\graphm$, let
\begin{equation}\label{eq:defxi}
\bar{\vecx}^{(i)} \triangleq \left( \vecg_1(i),\ldots, \vecg_{\ell}(i) \right)^\transpose;
\end{equation}
notice that  $\bar{\vecx}^{(i)}\in\R^{\ell}$ defines the spectral embedding of  $i\in \vertexset_\graphm$ through the bottom $\ell$ eigenvectors of $\lapn_\graphm$.

\begin{definition}[$(\theta,\ell)$-distinguishable graph]\label{def:distinguishable} For any  $\graphm=(\vertexset_\graphm, \edgeset_\graphm, \weight_\graphm)$ with  $k$ vertices, $\ell\in[k]$, and $\theta\in\R^+$, we say that $\graphm$ is  $(\theta,\ell)$-distinguishable if 
\begin{itemize}
    \item it holds for any $i\in[k]$ that  $\norm{\barx^{(i)}}^2 \geq \theta$, and 
    \item it holds for any different $i,j\in[k]$  that  \[
    \left\|\frac{\barx^{(i)}}{\norm{\barx^{(i)}}} - \frac{\barx^{(j)}}{\norm{\barx^{(j)}}}\right\|^2\geq \theta.\]
\end{itemize}
\end{definition}
In other words, graph $\graphm$ is $(\theta,\ell)$-distinguishable if (i) every embedded point $\bar{\vecx}^{(i)}$ has squared length at least $\theta$, and (ii) any pair of embedded points with normalisation are separated by
 a distance
of at least $\theta$.
By definition, it is easy to see that, if $\graphm$ is $(\theta,\ell)$-distinguishable for some large value of $\theta$, then the embedded points $\{\bar{\vecx}^{(i)} \}_{i\in \vertexset_\graphm}$ can be easily separated even if $\ell< k$.
The two examples below demonstrate that it is indeed the case and, since the meta-graph $\graphm$ is constructed from  $\{\sets_i\}_{i=1}^k$, this well-separation property for $\{\bar{\vecx}^{(i)} \}_{i\in \vertexset_\graphm}$ usually implies 
that the clusters $\{\sets_i\}_{i = 1}^k$ are also well-separated when the vertices are mapped to the points $\{F(u)\}_{u \in \vertexset_\graphg}$, in which
\begin{equation}\label{eq:embeddingell}
   F(u)\triangleq \frac{1}{\sqrt{\deg(u)}}\cdot  \left(\vecf_1(u), \ldots \vecf_{\ell}(u)\right)^\transpose.
\end{equation}

\begin{example} \label{ex:cycle}
 Suppose the meta-graph is $C_6$, the cycle on $6$ vertices.
 Figure~\ref{fig:cycle_example}(\subref{subfig:cycle_embedding}) shows that the vertices of $C_6$ are well separated by the second and third eigenvectors of $\lapn_{C_6}$.\footnote{Notice that the first eigenvector is the trivial one and gives no useful information. This is why we visualise the second and third eigenvectors only.}
 Since the minimum distance between any pair of vertices in this embedding is $2/3$, we say that $C_6$ is $\left(2/3, 3\right)$-distinguishable.
 Figure~\ref{fig:cycle_example}(\subref{subfig:cycle_sbm_embedding}) shows that, when using $\vecf_2, \vecf_3$ of $\lapn_G$ to embed the vertices of a $600$-vertex graph with a cyclical cluster pattern, the embedded points closely match the ones from the meta-graph.
\end{example}
 
\begin{figure}[ht]
\centering
\begin{subfigure}{0.35\textwidth}
        \includegraphics[width=\textwidth]{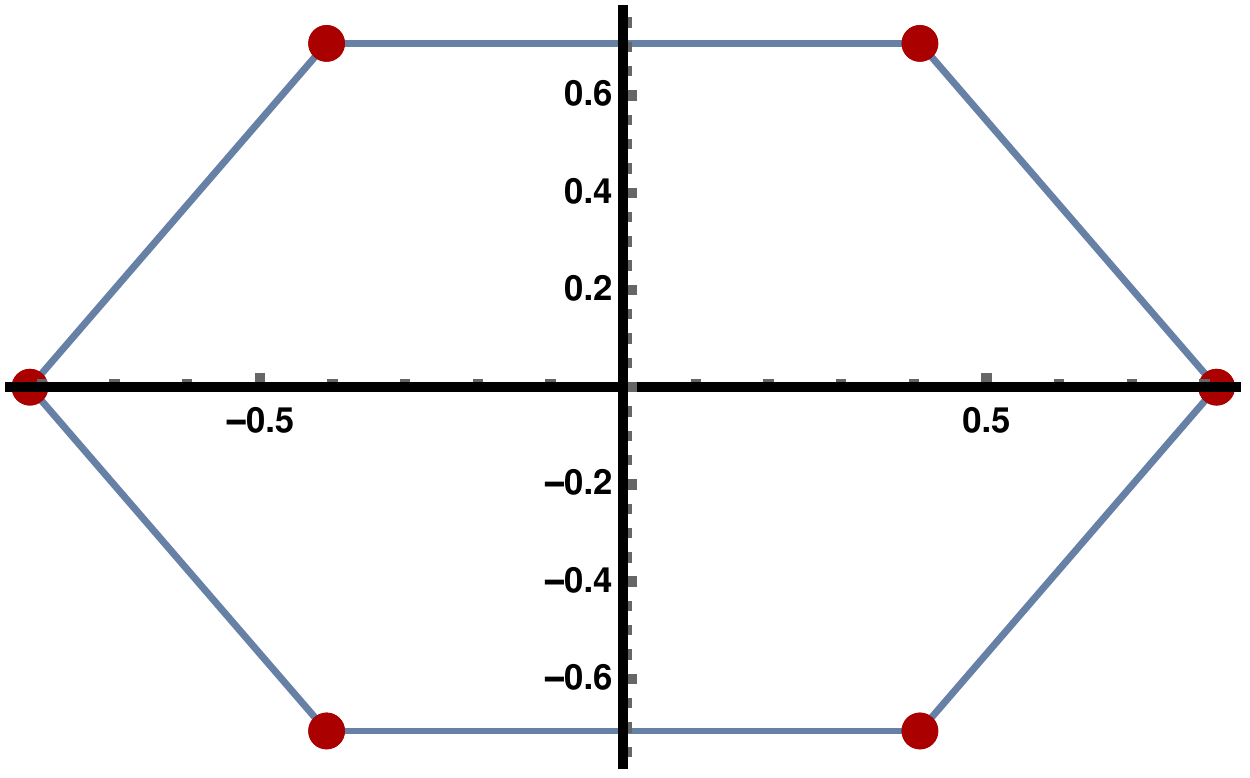}%
    \caption{
    Spectral embedding of $C_6$ with $\vecg_2$ and $\vecg_3$.
    \label{subfig:cycle_embedding}   
    }
\end{subfigure}
\hspace{5em}
\begin{subfigure}{0.35\textwidth}
    \includegraphics[width=\textwidth]{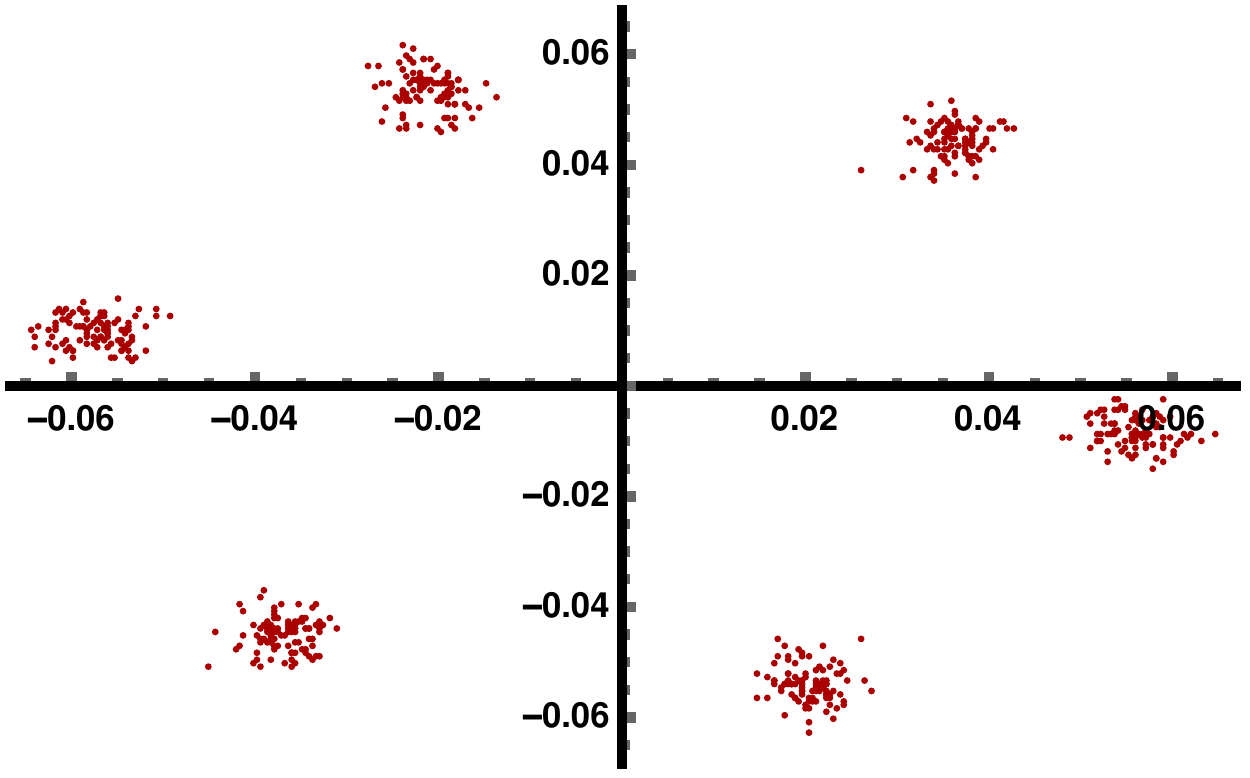}%
    \caption{Spectral embedding of $\graphg$ with $\vecf_2$ and $\vecf_3$.
    \label{subfig:cycle_sbm_embedding}
    }
\end{subfigure}
    \caption[Spectral embedding of a graph whose clusters exhibit a cyclical structure]{The spectral embedding of a large graph $\graphg$ whose clusters exhibit a cyclical structure closely matches the embedding of the meta-graph $C_6$.
    \label{fig:cycle_example}
    }
\end{figure}
 
\begin{example} \label{ex:grid}
Suppose the meta-graph is $P_{4, 4}$, which is the $4 \times 4$ grid graph.
Figure~\ref{fig:grid_example}(\subref{subfig:grid_embedding}) shows that the vertices are separated using the second and third eigenvectors of $\lapn_{P_{4, 4}}$.
The minimum distance between any pair of vertices in this embedding is roughly $0.1$, and so $P_{4, 4}$ is $(0.1, 3)$-distinguishable.
 Figure~\ref{fig:grid_example}(\subref{subfig:grid_sbm_embedding}) demonstrates that,
 when using $\vecf_2, \vecf_3$ of $\lapn_\graphg$ to construct the embedding, the embedded points    closely match the ones from the meta-graph.
\end{example}

\begin{figure}[ht]
\centering
\begin{subfigure}{0.35\textwidth}
        \includegraphics[width=\textwidth]{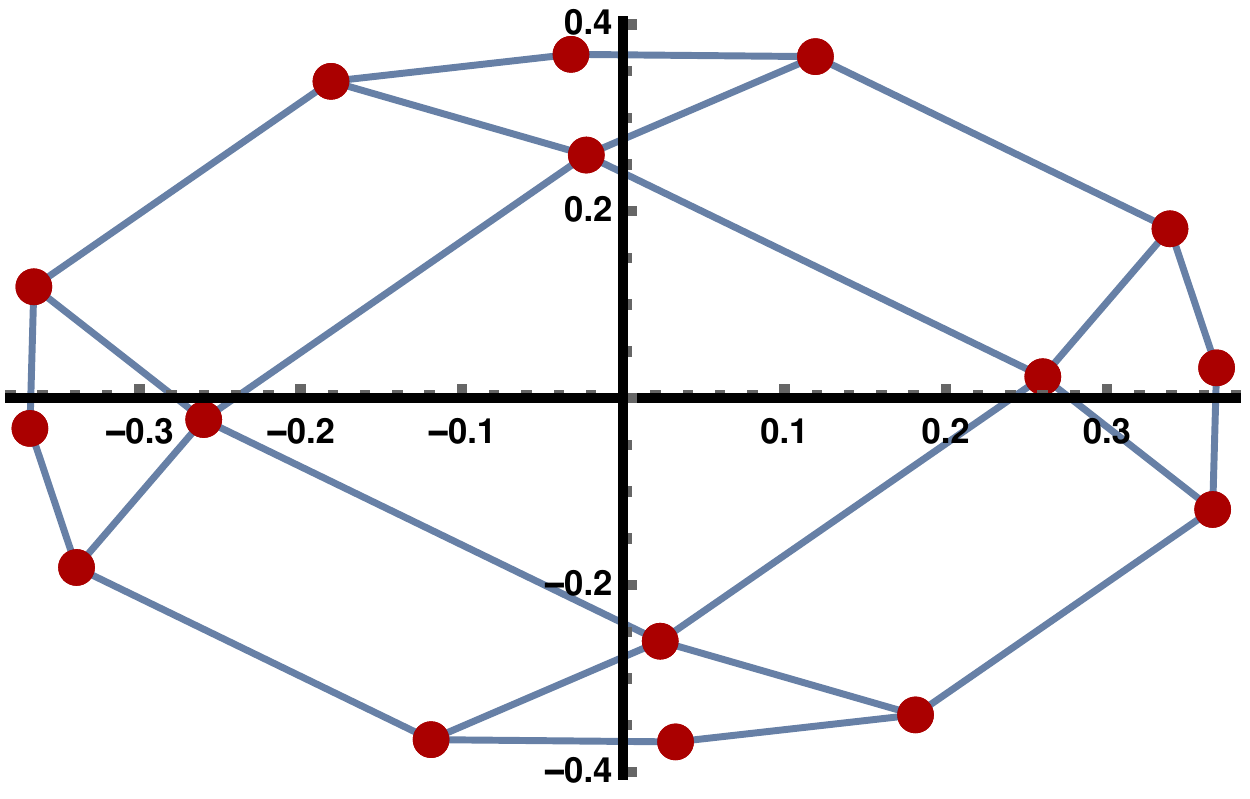}%
        \caption{
        Embedding of $P_{4, 4}$ with $\vecg_2$ and $\vecg_3$.
        \label{subfig:grid_embedding}
        }
\end{subfigure}
\hspace{5em}
\begin{subfigure}{0.35\textwidth}
    \includegraphics[width=\textwidth]{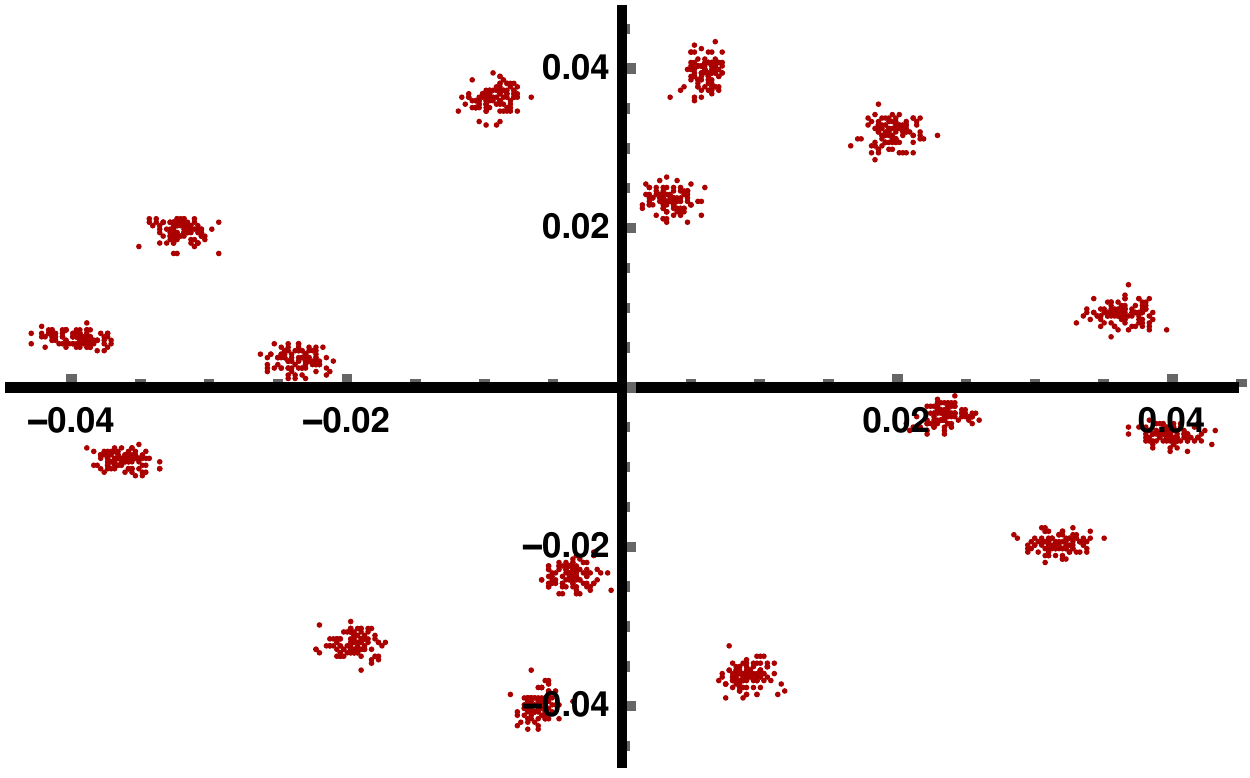}
    \caption{Embedding of $\graphg$ with $\vecf_2$ and $\vecf_3$.
    \label{subfig:grid_sbm_embedding}    
    }
\end{subfigure}
    \caption[Spectral embedding of a graph whose clusters exhibit a grid structure]{
    The spectral embedding of a large $1600$-vertex graph $\graphg$ whose clusters exhibit a grid structure closely matches the embedding of the meta-graph $P_{4, 4}$.
    \label{fig:grid_example}
    }
\end{figure}
 
From these examples, it is clear to see that   there is a close
connection between the embedding $\{\bar{\vecx}^{(i)}\}$ defined in \eqref{eq:defxi} and the embedding $\{F(u)\}$ defined in \eqref{eq:embeddingell}. To formally analyse this connection, we study the structure theorem with meta-graphs. 
 
We   define vectors $\barg_1, \ldots, \barg_k \in \R^n$ which represent the eigenvectors of $\lapn_\graphm$ ``blown-up'' to $n$ dimensions.
Formally, we define $\barg_i$ such that
\[
    \barg_i = \sum_{j = 1}^{k} \frac{\degmhalf \indicatorvec_j}{\norm{\degmhalf \indicatorvec_j}}\cdot \vecg_i(j),
\]
where $\indicatorvec_j \in \R^n$ is the indicator vector of cluster $\sets_j$ defined in \eqref{eq:const_g}.
By definition, it holds for any $u\in \sets_j$ that 
\[
    \barg_i(u) = \sqrt{\frac{\deg(u)}{\vol(\sets_j)}}\cdot \vecg_i(j).
\]
The following lemma shows  that $\{\bar{\vecg}_i\}_{i=1}^k$ form an orthonormal basis.

\begin{lemma} \label{lem:barg_ortho}
The following statements hold:
\begin{enumerate}
    \item it holds for any $i\in[k]$ that  $\norm{\barg_i} = 1$;
    \item it holds for any different  $i, j\in [k]$ that  $\langle \barg_i, \barg_j\rangle = 0$.
\end{enumerate}
\end{lemma}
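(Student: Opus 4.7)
The plan is to reduce both claims to the orthonormality of the eigenvectors $\{\vec{g}_i\}_{i=1}^k$ of $\lapn_\graphm$ via the intermediate orthonormal system of normalised cluster indicators in $\R^n$.

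First, I would introduce the shorthand $\vec{h}_j \triangleq \degmhalf \indicatorvec_j / \|\degmhalf \indicatorvec_j\|$ for $j\in[k]$, so that by definition
\[
    \bar{\vecg}_i \;=\; \sum_{j=1}^k \vec{g}_i(j)\, \vec{h}_j.
\]
The key structural observation is that $\{\vec{h}_j\}_{j=1}^k$ forms an orthonormal set in $\R^n$. Indeed, each $\vec{h}_j$ has unit norm by construction, and for $j\neq j'$ the vectors $\indicatorvec_j$ and $\indicatorvec_{j'}$ have disjoint supports (since $\{\sets_j\}_{j=1}^k$ is a partition of $\vertexsetg$), hence so do $\degmhalf\indicatorvec_j$ and $\degmhalf\indicatorvec_{j'}$, which gives $\langle \vec{h}_j, \vec{h}_{j'}\rangle = 0$. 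I would state this as a short standalone observation before moving on.

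Given this, both parts follow by expanding the inner product and using the orthonormality of $\{\vec{h}_j\}$. For part~1,
\[
    \|\bar{\vecg}_i\|^2 \;=\; \sum_{j,j'=1}^k \vec{g}_i(j)\,\vec{g}_i(j')\,\langle \vec{h}_j, \vec{h}_{j'}\rangle \;=\; \sum_{j=1}^k \vec{g}_i(j)^2 \;=\; \|\vec{g}_i\|^2 \;=\; 1,
\]
since $\vec{g}_i$ is a unit eigenvector of $\lapn_\graphm$. For part~2, an analogous computation gives $\langle \bar{\vecg}_i, \bar{\vecg}_{i'}\rangle = \langle \vec{g}_i, \vec{g}_{i'}\rangle = 0$ for $i\neq i'$, because distinct eigenvectors of the symmetric matrix $\lapn_\graphm$ are orthogonal.

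There is no real obstacle here: the proof is a direct unfolding of definitions, and the only content is recognising that $\{\vec{h}_j\}$ is orthonormal so that the map $\vec{g}_i \mapsto \bar{\vecg}_i$ is an isometric embedding of $\R^k$ into $\R^n$. The one small care-point is to make sure every $\vol(\sets_j)>0$ so that the normalisations $\vec{h}_j$ are well-defined, which is immediate from the assumption that $\{\sets_i\}_{i=1}^k$ is a genuine $k$-way partition (each cluster is non-empty, so its volume is strictly positive).
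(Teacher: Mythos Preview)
Your proposal is correct and follows essentially the same approach as the paper: both reduce $\langle \barg_i,\barg_{i'}\rangle$ to $\langle \vecg_i,\vecg_{i'}\rangle$, with the paper carrying out the coordinate-wise sum over $u\in\sets_j$ explicitly while you package that same computation as the orthonormality of the vectors $\vec{h}_j$. One tiny wording point: for part~2 you should rely on the $\vecg_i$ being \emph{chosen} orthonormal (which is the standing convention for eigenvectors in the paper), rather than on ``distinct eigenvectors of a symmetric matrix are orthogonal'', since repeated eigenvalues are possible.
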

\begin{proof}
By definition, we have that 
\begin{align*}
        \norm{\barg_i}^2  = \barg_i^\transpose \barg_i & = \sum_{j = 1}^k \sum_{u \in \sets_j} \barg_i(u)^2  = \sum_{j=1}^k \sum_{u \in \sets_j} \left(\frac{\sqrt{\deg(u)}}{\sqrt{\vol(\sets_j)}} \cdot  \vecg_i(j) \right)^2  = \sum_{j = 1}^k \vecg_i(j)^2 = \norm{\vecg_i}^2 = 1,
    \end{align*}
    which proves the first statement. 

    To prove the second statement, we have for any $i \neq j$ that 
    \begin{align*}
        \langle \barg_i, \barg_j\rangle & = \sum_{x = 1}^{k} \sum_{u \in \sets_x} \barg_i(u) \barg_j(u) \\
        & = \sum_{x = 1}^{k} \sum_{u \in \sets_x} \frac{\deg(u)}{\vol(\sets_x)}\cdot \vecg_i(x) \vecg_j(x) \\
        & = \sum_{x = 1}^k \vecg_i(x) \vecg_j(x)\\
        & = \vecg_i^\transpose \vecg_j \\
        & = 0,
    \end{align*}
    which completes the proof.
\end{proof}

We will later use the following important relationship between the eigenvalues of $\lapn_\graphm$ and $\lapn_\graphg$.

\begin{lemma} \label{lem:lambda_leq_gamma2}
 It holds for any $1\leq i\leq k$ that  $\lambda_i \leq \gamma_i$.
\end{lemma}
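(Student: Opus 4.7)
The plan is to exhibit an $i$-dimensional subspace of $\R^n$ on which the Rayleigh quotient of $\lapn_\graphg$ is bounded above by $\gamma_i$, and then invoke the Courant--Fischer min--max characterisation of eigenvalues. The natural candidate is $\mathrm{span}(\barg_1,\ldots,\barg_i)$, where the $\barg_j$ are the blown-up vectors defined just before Lemma~\ref{lem:barg_ortho}.

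More precisely, I would introduce the linear map $T:\R^k\to\R^n$ that sends a vector $\vecy\in\R^k$ to the vector $T\vecy\in\R^n$ defined by $(T\vecy)(u)=\sqrt{\deg(u)/\vol(\sets_j)}\,\vecy(j)$ for $u\in\sets_j$. Then $T\vecg_i=\barg_i$, and Lemma~\ref{lem:barg_ortho} together with a one-line computation shows that $T$ is an isometric embedding, i.e.\ $\|T\vecy\|=\|\vecy\|$ for every $\vecy\in\R^k$. The crucial step is to verify that $T$ also preserves the Laplacian quadratic form: for every $\vecy\in\R^k$,
\[
(T\vecy)^\transpose \lapn_\graphg (T\vecy) \;=\; \vecy^\transpose \lapn_\graphm \vecy.
\]
This is shown by a direct expansion: applying the identity $\vecv^\transpose \lapn_\graphg \vecv=\sum_{\{u,v\}\in\edgesetg}\weight(u,v)\bigl(\vecv(u)/\sqrt{\deg(u)}-\vecv(v)/\sqrt{\deg(v)}\bigr)^2$ to $\vecv=T\vecy$, the summand for any edge with both endpoints in the same $\sets_a$ vanishes, and summing the remaining edge contributions cluster by cluster gives exactly $\sum_{\{a,b\},\,a\neq b}\weight(\sets_a,\sets_b)\bigl(\vecy(a)/\sqrt{\vol(\sets_a)}-\vecy(b)/\sqrt{\vol(\sets_b)}\bigr)^2$, which an analogous expansion (using the convention $\adj_\graphm(a,a)=2\weight(\sets_a,\sets_a)$ and the resulting identity $\deg_\graphm(a)=\vol(\sets_a)$) identifies with $\vecy^\transpose\lapn_\graphm \vecy$.

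Once the isometry and the quadratic form identity are established, the result follows from Courant--Fischer. Let $U^\star\subseteq\R^k$ be the $i$-dimensional subspace spanned by $\vecg_1,\ldots,\vecg_i$, on which the Rayleigh quotient of $\lapn_\graphm$ is bounded by $\gamma_i$. Then $T(U^\star)\subseteq\R^n$ is $i$-dimensional (as $T$ is an isometry, hence injective), and for every unit vector $T\vecy$ with $\vecy\in U^\star$ a unit vector,
\[
(T\vecy)^\transpose \lapn_\graphg (T\vecy) \;=\; \vecy^\transpose \lapn_\graphm \vecy \;\leq\; \gamma_i.
\]
By the min--max characterisation,
\[
\lambda_i \;=\; \min_{\substack{U\subseteq\R^n\\ \dim U=i}}\ \max_{\substack{\vecv\in U\\ \|\vecv\|=1}} \vecv^\transpose \lapn_\graphg \vecv \;\leq\; \max_{\substack{\vecv\in T(U^\star)\\ \|\vecv\|=1}} \vecv^\transpose \lapn_\graphg \vecv \;\leq\; \gamma_i.
\]

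The only step that requires some care is the quadratic form identity, and in particular the bookkeeping around self-loops of $\graphm$: one must confirm that the self-loop weights $\adj_\graphm(a,a)=2\weight(\sets_a,\sets_a)$ interact with the normalisation so as to give precisely $\deg_\graphm(a)=\vol_\graphg(\sets_a)$, which is what makes the blown-up quadratic forms agree. Everything else is routine linear algebra.
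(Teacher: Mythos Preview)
Your proposal is correct and follows essentially the same approach as the paper: use the blown-up vectors $\barg_1,\ldots,\barg_i$ to produce an $i$-dimensional test subspace, verify that the $\lapn_\graphg$-Rayleigh quotient on it is bounded by $\gamma_i$ via the quadratic-form identity with $\lapn_\graphm$, and conclude by Courant--Fischer. Your explicit introduction of the isometry $T$ and the identity $(T\vecy)^\transpose\lapn_\graphg(T\vecy)=\vecy^\transpose\lapn_\graphm\vecy$ for \emph{all} $\vecy$ is in fact slightly cleaner than the paper, which only checks the identity on the eigenvectors $\barg_j$ and then asserts the subspace bound without spelling out why the cross terms behave.
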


\begin{proof}
    Notice that we have for any $j \leq k$ that 
    \begin{align*}
        \barg_j^\transpose \lapn_\graphg \barg_j & = \sum_{(u, v) \in \edgesetg} \weight_\graphg(u, v) \left(\frac{\barg_j(u)}{\sqrt{\deg(u)}} - \frac{\barg_j(u)}{\sqrt{\deg(v)}}\right)^2 \\
        & = \sum_{x = 1}^{k - 1} \sum_{y = x + 1}^k \sum_{a \in \sets_x} \sum_{b \in \sets_y} \weight(a, b) \left(\frac{\barg_j(a)}{\sqrt{\deg(a)}} - \frac{\barg_j(b)}{\sqrt{\deg(b)}} \right)^2 \\
        & = \sum_{x = 1}^{k - 1} \sum_{y = x + 1}^k \weight(\sets_x, \sets_y) \left(\frac{\vecg_j(x)}{\sqrt{\vol(\sets_x)}} - \frac{\vecg_j(y)}{\sqrt{\vol(\sets_y)}} \right)^2 \\
        &  = \vecg_j^\transpose \lapn_\graphm \vecg_j \\
        & = \gamma_j. 
    \end{align*}
    By Lemma~\ref{lem:barg_ortho}, we have an $i$-dimensional subspace $\sets_i \subset \R^n$ such that
    \[
        \max_{\vecx \in \sets_i} \frac{\vecx^\transpose \lapn_\graphg \vecx}{\vecx^\transpose \vecx} = \gamma_i,
    \]
    from which the statement of the lemma follows by the Courant-Fischer theorem.
\end{proof}

Next, similar to the function $\Upsilon(k)$
defined in \eqref{eq:defineupsilon}, for any input graph $\geqvewg$ and   $(\theta, \ell)$-distinguishable meta-graph $\graphm$, we define the function $\Psi(\ell)$ 
by
\[
        \Psi(\ell) \triangleq \sum_{i = 1}^{\ell} \frac{\gamma_i}{\lambda_{\ell + 1}}.
    \]
Notice that we have by the higher-order Cheeger inequality that $\gamma_i/2 \leq  \rho_\graphm(i)$ holds for any $i\in[\ell]$, and $\rho_\graphm(i)\leq \rho_\graphg(k)$ by the construction of matrix $\adj_\graphm$. Hence, one can view $\Psi(\ell)$ as a refined definition  of $\Upsilon(k)$.

We now show that the vectors $\vecf_1, \ldots, \vecf_{\ell}$ and $\barg_1, \ldots, \barg_{\ell}$ are well approximated by each other.
In order to show this, we define for any $ i\in [\ell]$ the vectors
\[
    \hatf_i = \sum_{j = 1}^{\ell} \langle \barg_i, \vecf_j \rangle \vecf_j~\mbox{\ \ \ and\ \ \ }~\hatg_i = \sum_{j = 1}^{\ell} \langle \vecf_i, \barg_j \rangle \barg_j,
\]
and present the structure theorem with meta-graphs.

\begin{theorem}[The Structure Theorem with Meta-Graphs] \label{thm:structure++} 
The following statements hold:
\begin{enumerate}
    \item it holds for  any $i \in [\ell]$ that \[\norm{\barg_i - \hatf_i}^2 \leq \gamma_i / \lambda_{\ell + 1};\]
    \item it holds for any $ \ell\in [k]$ that 
    \[
        \sum_{i = 1}^{\ell} \norm{\vecf_i - \hatg_i}^2 \leq \sum_{i = 1}^{\ell} \frac{\gamma_i}{\lambda_{\ell + 1}}.
    \]
\end{enumerate}

\end{theorem}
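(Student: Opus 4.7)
The plan is to mirror the proof of Theorem~\ref{thm:struc1} almost verbatim, with two small adjustments: we truncate to the first $\ell$ eigenvectors instead of $k$, and we replace the upper bound $\rho(k)$ on the quadratic form by the meta-graph eigenvalue $\gamma_i$. The two ingredients that make this substitution work are already in hand: Lemma~\ref{lem:barg_ortho} supplies orthonormality of $\{\barg_1, \ldots, \barg_k\}$, and the edge-by-edge computation inside the proof of Lemma~\ref{lem:lambda_leq_gamma2} supplies the identity $\barg_i^\transpose \lapn_\graphg \barg_i = \gamma_i$.

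For statement~1, I would expand $\barg_i = \sum_{j=1}^n \langle \barg_i, \vecf_j\rangle \vecf_j$ so that $\hatf_i$ is exactly the orthogonal projection of $\barg_i$ onto $\mathrm{span}(\vecf_1, \ldots, \vecf_\ell)$; hence $\|\barg_i - \hatf_i\|^2 = \sum_{j=\ell+1}^n \langle \barg_i, \vecf_j\rangle^2$. Substituting this expansion into the quadratic form and using $\lambda_j \geq \lambda_{\ell+1}$ for all $j \geq \ell+1$ yields the lower bound $\barg_i^\transpose \lapn_\graphg \barg_i \geq \lambda_{\ell+1} \|\barg_i - \hatf_i\|^2$. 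Combining with the identity from Lemma~\ref{lem:lambda_leq_gamma2} then gives $\|\barg_i - \hatf_i\|^2 \leq \gamma_i / \lambda_{\ell+1}$, as required.

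For statement~2, the algebraic chain in the proof of Theorem~\ref{thm:struc1} transfers with no modification, since the derivation only uses that $\{\vecf_i\}$ and $\{\barg_j\}$ are orthonormal families, the latter now being guaranteed by Lemma~\ref{lem:barg_ortho}. Expanding $\|\vecf_i - \hatg_i\|^2 = \|\vecf_i\|^2 - \|\hatg_i\|^2$ and swapping the double sum over $i,j \in [\ell]$ reduces the left-hand side to
\[
\sum_{i=1}^\ell \|\vecf_i - \hatg_i\|^2 \;=\; \sum_{j=1}^\ell \|\barg_j - \hatf_j\|^2,
\]
and then statement~1 bounds each summand by $\gamma_j / \lambda_{\ell+1}$, yielding the stated total.

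I do not anticipate a serious obstacle; the only step requiring care is verifying the identity $\barg_i^\transpose \lapn_\graphg \barg_i = \gamma_i$, because each $\barg_i$ is now a linear combination of normalised cluster indicators weighted by meta-eigenvector entries rather than a single indicator. The edge-by-edge calculation kills all intra-cluster contributions (since $\barg_i(u)/\sqrt{\deg(u)}$ is constant on each $\sets_x$) and collapses cross-cluster edges into the weighted sum over edges of $\graphm$ that realises the Rayleigh quotient of $\lapn_\graphm$ at $\vecg_i$, which equals $\gamma_i$. Since this is precisely the computation performed inside the proof of Lemma~\ref{lem:lambda_leq_gamma2}, we can invoke it directly rather than repeat it.
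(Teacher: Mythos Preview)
Your proposal is correct and follows essentially the same approach as the paper's proof: project $\barg_i$ onto $\mathrm{span}(\vecf_1,\ldots,\vecf_\ell)$, bound the tail via the quadratic form and the identity $\barg_i^\transpose \lapn_\graphg \barg_i = \gamma_i$ from the proof of Lemma~\ref{lem:lambda_leq_gamma2}, and then for statement~2 swap the double sum to reduce to statement~1. The only cosmetic difference is that the paper explicitly extends $\barg_1,\ldots,\barg_k$ to a full orthonormal basis of $\R^n$ before writing $\vecf_i$ in that basis, whereas you invoke Pythagoras directly via Lemma~\ref{lem:barg_ortho}; both are equivalent.
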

\begin{proof}
For the first statement, we    write $\barg_i$ as a linear combination of the vectors $\vecf_1, \ldots, \vecf_n$, i.e., 
    $
        \barg_i = \sum_{j = 1}^n \langle \barg_i, \vecf_j \rangle \vecf_j.$
    Since  $\hatf_i$ is a projection of $\barg_i$, we have that $\barg_i - \hatf_i$ is perpendicular to $\hatf_i$, and that
    \begin{align*}
        \norm{\barg_i - \hatf_i}^2 & = \norm{\barg_i}^2 - \norm{\hatf_i}^2 = \left(\sum_{j = 1}^n \langle \barg_i, \vecf_j \rangle^2 \right) - \left(\sum_{j = 1}^{\ell} \langle \barg_i, \vecf_j \rangle^2 \right) = \sum_{j = \ell + 1}^n \langle \barg_i, \vecf_j \rangle^2.
    \end{align*}
    Now, we study the quadratic form $\barg_i^\transpose \lapn_\graphg \barg_i$ and have that 
    \begin{align*}
        \barg_i^\transpose \lapn_\graphg \barg_i & = \left(\sum_{j = 1}^n \langle \barg_i, \vecf_j \rangle \vecf_j^\transpose \right) \lapn_\graphg \left(\sum_{j = 1}^n \langle \barg_i, \vecf_j \rangle \vecf_j\right) = \sum_{j = 1}^n \langle \barg_i, \vecf_j \rangle^2 \lambda_j \geq \lambda_{\ell + 1} \norm{\barg_i - \hatf_i}^2.
    \end{align*}
    By the proof of Lemma~\ref{lem:lambda_leq_gamma2}, we have that $\barg_i^\transpose \lapn_\graphg \barg_i \leq \gamma_i$, from which the first statement follows.
   
    Now we prove the second statement. 
    We define the vectors  $\barg_{k+1},\ldots, \barg_n$   to be an arbitrary orthonormal basis of the space orthogonal to the space spanned by 
    $\barg_1, \ldots, \barg_k$. Then, we can write any $\vecf_i$ as
    $
        \vecf_i = \sum_{j = 1}^n \langle \vecf_i, \barg_j \rangle \barg_j$,
    and   have that 
    \begin{align*}
        \sum_{i = 1}^{\ell} \norm{\vecf_i - \hatg_i}^2 & = \sum_{i = 1}^{\ell} \left(\norm{\vecf_i}^2 - \norm{\hatg_i}^2 \right) \\
        & = \ell - \sum_{i = 1}^{\ell} \sum_{j = 1}^{\ell} \langle \vecf_i, \barg_j \rangle^2 \\
        & = \sum_{j = 1}^{\ell} \left( 1 - \sum_{i = 1}^{\ell} \langle \barg_j, \vecf_i \rangle^2 \right) \\
        & = \sum_{j = 1}^{\ell} \left( \norm{\barg_j}^2 - \norm{\hatf_j}^2 \right) \\
        & = \sum_{j = 1}^{\ell} \norm{\barg_j - \hatf_j}^2 \\
        & \leq \sum_{j = 1}^{\ell} \frac{\gamma_j}{\lambda_{{\ell} + 1}},
    \end{align*}
    where the final inequality follows by
    the first statement of the theorem.
\end{proof}

\subsection{Spectral Clustering with Fewer Eigenvectors \label{sec:embedding+}}
In this section, we analyse spectral clustering with fewer eigenvectors. Our presented algorithm is essentially the same as the standard spectral clustering   described in Section~\ref{sec:analysis1}, with the only difference that every   $u\in \vertexsetg$ is embedded into a point in $\R^{\ell}$ by the mapping
defined in
\eqref{eq:embeddingell}. Our analysis follows from the one from Section~\ref{sec:analysis1} at a very high level.
However, since we require that $\{F(u)\}_{u \in \vertexsetg}$ are well separated in $\R^{\ell}$ for some $\ell < k$, the proof is more involved. 

For any  $i \in [k]$, we  define the approximate centre $\vecp^{(i)} \in \R^{\ell}$ of every cluster $\sets_i$ by
\[
    \vecp^{(i)}(j) = \frac{1}{\sqrt{\vol(\sets_i)}}\cdot \sum_{x = 1}^{\ell} \langle \vecf_j, \barg_x \rangle\cdot  \vecg_x(i),
\]
and prove that the total $k$-means cost for the points $\{F(u)\}_{u\in \vertexsetg}$ can be upper bounded.

\begin{lemma} \label{lem:cost_bound}
It holds that 
    \[
        \sum_{i = 1}^k \sum_{u \in \sets_i} \deg(u) \norm{F(u) - \peye}^2 \leq \Psi(\ell).
    \]
\end{lemma}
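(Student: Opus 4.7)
The plan is to mirror the structure of the proof of Lemma~\ref{lem:total_cost}, adapting it to the meta-graph setting. The key insight I would lean on is that the approximate centres $\peye$ are engineered precisely so that, for $u\in \sets_i$, the scaled coordinate $\sqrt{\deg(u)}\,\peye(j)$ coincides with $\hatg_j(u)$, where $\hatg_j = \sum_{x=1}^{\ell}\langle \vecf_j, \barg_x\rangle \barg_x$ is the approximation to $\vecf_j$ furnished by Theorem~\ref{thm:structure++}.

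First I would verify this identification. Recall that for $u\in \sets_i$ we have $\barg_x(u) = \sqrt{\deg(u)/\vol(\sets_i)}\cdot \vecg_x(i)$ by the definition of $\barg_x$. Consequently
\[
\hatg_j(u) = \sum_{x=1}^{\ell}\langle \vecf_j,\barg_x\rangle\,\barg_x(u) = \sqrt{\frac{\deg(u)}{\vol(\sets_i)}}\sum_{x=1}^{\ell}\langle \vecf_j,\barg_x\rangle\,\vecg_x(i) = \sqrt{\deg(u)}\cdot\peye(j),
\]
matching the definition of $\peye(j)$ in Section~\ref{sec:embedding+}.

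Second I would expand the left-hand side coordinate-wise. Writing out $\|F(u)-\peye\|^2 = \sum_{j=1}^{\ell}\bigl(\vecf_j(u)/\sqrt{\deg(u)} - \peye(j)\bigr)^2$ and multiplying through by $\deg(u)$ gives
\[
\sum_{i=1}^k\sum_{u\in\sets_i}\deg(u)\,\|F(u)-\peye\|^2 = \sum_{i=1}^k\sum_{u\in\sets_i}\sum_{j=1}^{\ell}\bigl(\vecf_j(u) - \sqrt{\deg(u)}\,\peye(j)\bigr)^2.
\]
Using the identity from the previous step, this equals $\sum_{j=1}^{\ell}\sum_{i=1}^k\sum_{u\in \sets_i}\bigl(\vecf_j(u)-\hatg_j(u)\bigr)^2 = \sum_{j=1}^{\ell}\|\vecf_j - \hatg_j\|^2$, where we have used that $\{\sets_i\}_{i=1}^k$ partitions $\vertexsetg$.

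Finally I would apply the second statement of Theorem~\ref{thm:structure++} to bound $\sum_{j=1}^{\ell}\|\vecf_j-\hatg_j\|^2 \leq \sum_{j=1}^{\ell}\gamma_j/\lambda_{\ell+1} = \Psi(\ell)$. I do not expect a genuine obstacle: the argument is essentially a bookkeeping adaptation of Lemma~\ref{lem:total_cost}. The only subtle point is confirming the alignment between the meta-graph centre $\peye$ (which packages the $\ell$ meta-eigenvector coordinates $\vecg_x(i)$) and the ``blown-up'' vectors $\barg_x$; once this is in place, the rest collapses into the structure theorem bound.
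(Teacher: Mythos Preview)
Your proposal is correct and follows essentially the same argument as the paper: expand the cost coordinate-wise, recognise $\sqrt{\deg(u)}\,\peye(j)=\hatg_j(u)$ via the definition of $\barg_x$, collapse the sum to $\sum_{j=1}^{\ell}\|\vecf_j-\hatg_j\|^2$, and invoke the second statement of Theorem~\ref{thm:structure++}. The only cosmetic difference is that you isolate the identity $\sqrt{\deg(u)}\,\peye(j)=\hatg_j(u)$ as a preliminary step, whereas the paper performs that substitution inline.
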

\begin{proof}
By definition, it holds that 
    \begin{align*}
        \sum_{i = 1}^k \sum_{u \in S_i} \deg(u) \norm{F(u) - \peye}^2 
      & = \sum_{i = 1}^k \sum_{u \in \sets_i} \deg(u) \left[ \sum_{j = 1}^{\ell} \left( \frac{\vecf_j(u)}{\sqrt{\deg(u)}} - \left(\sum_{x = 1}^{\ell} \langle \vecf_j, \barg_x \rangle \frac{\vecg_x(i)}{\sqrt{\vol(\sets_i)}}  \right) \right)^2 \right] \\
        & = \sum_{i = 1}^k \sum_{u \in \sets_i} \sum_{j = 1}^{\ell} \left( \vecf_j(u) - \left(\sum_{x = 1}^{\ell} \langle \vecf_j, \barg_x \rangle \barg_x(u) \right) \right)^2 \\
        & = \sum_{i = 1}^k \sum_{u \in \sets_i} \sum_{j = 1}^{\ell} \left( \vecf_j(u) - \hatg_j(u) \right)^2 \\
        & = \sum_{j = 1}^{\ell} \norm{\vecf_j - \hatg_j}^2 \leq \Psi(\ell),
    \end{align*}
    where the final inequality follows from the second statement of Theorem~\ref{thm:structure++}.
\end{proof}

We now prove a sequence of lemmas which will establish that the distance between different $\peye$ and $\pj$ can be lower bounded with respect to $\theta$ and $\Psi(\ell)$.

\begin{lemma}\label{lem:p_norm}
It holds for  $i \in [k]$ that 
    \[
       \left(1 -  \frac{4 \sqrt{\Psi(\ell)}}{\theta}\right) \frac{\norm{\barx^{(i)}}^2}{\vol(\sets_i)}  \leq \norm{\vecp^{(i)}}^2 \leq  \frac{\norm{\barx^{(i)}}^2}{\vol(\sets_i)} \left(1 +  \frac{2 \sqrt{\Psi(\ell)}}{\theta}\right).
    \]
\end{lemma}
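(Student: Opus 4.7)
The plan is to identify $\|\vecp^{(i)}\|^2$ as, up to the factor $1/\vol(\sets_i)$, the squared norm of the projection of a certain vector $\vecy^{(i)} \in \R^n$ onto $\mathrm{span}(\vecf_1, \ldots, \vecf_{\ell})$, and then bound the projection error using Theorem~\ref{thm:structure++} and the orthonormality of $\{\barg_x\}$ from Lemma~\ref{lem:barg_ortho}. Concretely, define
\[
    \vecy^{(i)} \triangleq \sum_{x=1}^{\ell}\vecg_x(i)\,\barg_x \in \R^n.
\]
Expanding the definition of $\vecp^{(i)}(j)$ directly gives $\sqrt{\vol(\sets_i)}\,\vecp^{(i)}(j) = \langle \vecf_j, \vecy^{(i)}\rangle$ for each $j\in[\ell]$, and hence $\vol(\sets_i)\,\|\vecp^{(i)}\|^2 = \sum_{j=1}^{\ell}\langle \vecf_j, \vecy^{(i)}\rangle^2 = \|P\vecy^{(i)}\|^2$, where $P$ is the orthogonal projection onto $\mathrm{span}(\vecf_1,\ldots,\vecf_{\ell})$.

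By Lemma~\ref{lem:barg_ortho}, the family $\{\barg_x\}_{x=1}^{\ell}$ is orthonormal, so $\|\vecy^{(i)}\|^2 = \sum_{x=1}^{\ell} \vecg_x(i)^2 = \|\barx^{(i)}\|^2$, and Pythagoras yields the key identity
\[
    \vol(\sets_i)\,\|\vecp^{(i)}\|^2 \;=\; \|\barx^{(i)}\|^2 - \|(I-P)\vecy^{(i)}\|^2.
\]
The upper bound $\|\vecp^{(i)}\|^2 \leq \|\barx^{(i)}\|^2 / \vol(\sets_i)$ is immediate; since $2\sqrt{\Psi(\ell)}/\theta \geq 0$, this is strictly stronger than the upper bound claimed in the statement.

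For the lower bound one has to control $\|(I-P)\vecy^{(i)}\|$. Since $\hatf_x = P\barg_x$ by definition, one gets $(I-P)\vecy^{(i)} = \sum_{x=1}^{\ell} \vecg_x(i)(\barg_x - \hatf_x)$, and Cauchy--Schwarz combined with the first statement of Theorem~\ref{thm:structure++} and the definition of $\Psi(\ell)$ give
\[
    \|(I-P)\vecy^{(i)}\|^2 \leq \Bigl(\sum_{x=1}^{\ell} \vecg_x(i)^2\Bigr)\Bigl(\sum_{x=1}^{\ell} \|\barg_x - \hatf_x\|^2\Bigr) \leq \|\barx^{(i)}\|^2\,\Psi(\ell),
\]
so $\vol(\sets_i)\,\|\vecp^{(i)}\|^2 \geq \|\barx^{(i)}\|^2(1-\Psi(\ell))$. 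The stated form is then obtained by invoking the distinguishability hypothesis $\|\barx^{(i)}\|^2 \geq \theta$ (noting that $\theta \leq 4$ follows from Definition~\ref{def:distinguishable} since normalised vectors are at squared distance at most $4$): when $\sqrt{\Psi(\ell)} > 4/\theta$ the claimed right-hand side is non-positive and the bound is trivial; otherwise $\sqrt{\Psi(\ell)} \leq 4/\theta$ implies $\Psi(\ell) \leq 4\sqrt{\Psi(\ell)}/\theta$, so $1 - \Psi(\ell) \geq 1 - 4\sqrt{\Psi(\ell)}/\theta$. I expect the only real subtlety to be this reconciliation between the natural bound $(1-\Psi(\ell))$ and the constants appearing in the stated form; the projection-based computation itself is essentially forced by the definitions.
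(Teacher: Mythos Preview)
Your proof is correct, and in fact cleaner and slightly sharper than the paper's. The paper expands
\[
\vol(\sets_i)\,\|\vecp^{(i)}\|^2 \;=\; \sum_{x=1}^{\ell}\Bigl(\sum_{y=1}^{\ell}\langle \vecf_x,\barg_y\rangle\,\vecg_y(i)\Bigr)^2
\]
into diagonal terms $\sum_y\|\hatf_y\|^2 \vecg_y(i)^2$ plus off-diagonal cross terms $\sum_{y\neq z}\vecg_y(i)\vecg_z(i)\,\hatf_y^{\transpose}\hatf_z$, and then spends most of the work bounding the cross terms by $\sqrt{\Psi(\ell)}$ via a chain of Cauchy--Schwarz inequalities. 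This yields the intermediate estimates $\vol(\sets_i)\|\vecp^{(i)}\|^2 \leq \|\barx^{(i)}\|^2 + \sqrt{\Psi(\ell)}$ and $\vol(\sets_i)\|\vecp^{(i)}\|^2 \geq (1-\Psi(\ell))\|\barx^{(i)}\|^2 - 2\sqrt{\Psi(\ell)}$, which are then relaxed using $\|\barx^{(i)}\|^2\geq\theta$ to the stated form.

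Your projection reformulation sidesteps the cross terms entirely: once you write $\vol(\sets_i)\|\vecp^{(i)}\|^2 = \|P\vecy^{(i)}\|^2$ with $\|\vecy^{(i)}\|^2=\|\barx^{(i)}\|^2$, Pythagoras gives the upper bound $\|\barx^{(i)}\|^2$ for free (strictly better than the paper's $\|\barx^{(i)}\|^2+\sqrt{\Psi(\ell)}$), and the lower bound reduces to controlling a single projection error $\|(I-P)\vecy^{(i)}\|^2 \leq \|\barx^{(i)}\|^2\,\Psi(\ell)$, giving $(1-\Psi(\ell))\|\barx^{(i)}\|^2$ directly (again sharper, with no additive $-2\sqrt{\Psi(\ell)}$ term). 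Your final reconciliation with the stated constants via $\theta\leq 4$ (in fact $\theta\leq 1$ here since $\|\barx^{(i)}\|^2\leq 1$) is fine. The takeaway is that your approach is the more natural one; the paper's decomposition generates spurious cross terms that then have to be laboriously bounded away.
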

\begin{proof}
It holds by definition that 
    \begin{align}
        \vol(\sets_i)\cdot  \norm{\vecp^{(i)}}^2 
        & = \sum_{x = 1}^{\ell} \left( \sum_{y = 1}^{\ell } \langle \vecf_x, \barg_y \rangle \vecg_y(i)\right)^2  \nonumber \\
        & = \sum_{x = 1}^{\ell} \sum_{y = 1}^{\ell} \sum_{z = 1}^{\ell} \langle \vecf_x, \barg_y\rangle \langle \vecf_x, \barg_z \rangle   \vecg_y(i) \vecg_z(i) \nonumber \\
        & = \sum_{x = 1}^{\ell} \sum_{y = 1}^{\ell} \langle \vecf_x, \barg_y\rangle^2 \vecg_y(i)^2 + \sum_{x = 1}^{\ell} \sum_{y = 1}^{\ell} \sum_{\substack{z = 1\\z \neq y}}^{\ell} \langle \vecf_x, \barg_y\rangle \langle \vecf_x, \barg_z\rangle \vecg_y(i) \vecg_z(i) \nonumber \\
         & = \sum_{x = 1}^{\ell} \sum_{y = 1}^{\ell} \langle \vecf_x, \barg_y\rangle^2 \vecg_y(i)^2 + \sum_{x = 1}^{\ell} \sum_{y = 1}^{\ell} \sum_{\substack{z = 1\\z \neq y}}^{\ell}  \vecg_y(i) \vecg_z(i) \cdot \left( \hatf_y^\transpose \hatf_z\right).  \label{eq:separate}
    \end{align}
    We study the two terms of \eqref{eq:separate} separately. For the second term, we have that 
    \begin{align*}
        \lefteqn{\sum_{x = 1}^{\ell} \sum_{y = 1}^{\ell} \sum_{\substack{z = 1\\z \neq y}}^{\ell}  \vecg_y(i) \vecg_z(i) \cdot \left( \hatf_y^\transpose \hatf_z\right)}\\
        & \leq \sum_{y = 1}^{\ell} \abs{\vecg_y(i)} \sum_{\substack{1\leq z \leq\ell \\ z \neq y}} \abs{\vecg_z(i)} \abs{\hatf_y^\transpose \hatf_z} \\
        & = \sum_{y = 1}^{\ell} \abs{\vecg_y(i)} \sum_{\substack{1\leq z\leq \ell \\ z \neq y}} \abs{\vecg_z(i)} \abs{ \left(\barg_y + \hatf_y - \barg_y\right)^\transpose \left(\barg_z + \hatf_z - \barg_z\right) } \\
        & = \sum_{y = 1}^{\ell} \abs{\vecg_y(i)} \sum_{\substack{1\leq z\leq \ell \\ z \neq y}} \abs{\vecg_z(i)} \abs{ \langle \hatf_y - \barg_y, \barg_z \rangle + \langle \hatf_z - \barg_z, \barg_y \rangle + \langle \hatf_y - \barg_y, \hatf_z - \barg_z \rangle } \\
        & =  \sum_{y = 1}^{\ell} \abs{\vecg_y(i)} \sum_{\substack{z = 1 \\ z \neq y}}^{\ell} \abs{\vecg_z(i)} \abs{ \langle \hatf_y - \barg_y, \barg_z \rangle } \\
        & \leq  \sqrt{\left(\sum_{y = 1}^{\ell} \vecg_y(i)^2\right) \sum_{y = 1}^{\ell} \left(\sum_{\substack{1\leq z\leq \ell \\ z \neq y}} \abs{\vecg_z(i)} \abs{ \langle \hatf_y - \barg_y, \barg_z \rangle }\right)^2 } \\
        & \leq   \sqrt{\sum_{y = 1}^{\ell} \left(\sum_{\substack{1\leq z\leq \ell \\ z \neq y}} \vecg_z(i)^2 \right) \left( \sum_{\substack{1\leq  z\leq \ell  \\ z \neq y}} \langle \hatf_y - \barg_y, \barg_z \rangle^2 \right) } \\
        & \leq   \sqrt{\sum_{y = 1}^{\ell} \sum_{\substack{1\leq z\leq \ell \\ z \neq y}} \langle \hatf_y - \barg_y, \barg_z \rangle^2 } \\
        & \leq  \sqrt{\sum_{y = 1}^{\ell} \norm{\hatf_y - \barg_y}^2 }  \leq   \sqrt{\Psi(\ell)},
    \end{align*}
    where we used the fact that $\sum_{y = 1}^k \vecg_y(i)^2 = 1$ for all $i\in[k]$. Therefore, we have that
    \begin{align*}
        \vol(\sets_i) \norm{\vecp^{(i)}}^2 & \leq \sum_{y = 1}^{\ell} \left(\sum_{x = 1}^{\ell} \langle \vecf_x, \barg_y\rangle^2 \right) \vecg_y(i)^2 +   \sqrt{\Psi (\ell)} \leq \sum_{y = 1}^{\ell} \vecg_y(i)^2 +   \sqrt{\Psi (\ell)}  \leq \norm{\barx^{(i)}}^2 +   \sqrt{\Psi(\ell)} \\
        & \leq \norm{\barx^{(i)}}^2 \left(1 + \frac{2 \sqrt{\Psi(\ell)}}{\theta}\right).
    \end{align*}
    On the other hand, we have that 
    \begin{align*}
        \vol(\sets_i) \norm{\vecp^{(i)}}^2 & \geq \sum_{y = 1}^{\ell} \left(\sum_{x = 1}^{\ell} \langle \vecf_x, \barg_y\rangle^2 \right) \vecg_y(i)^2 - 2 \sqrt{\Psi(\ell)}  \geq \sum_{y = 1}^{\ell} \norm{\hatf_y}^2 \vecg_y(i)^2 - 2 \sqrt{\Psi(\ell)} \\
        & \geq \left(1 - \Psi(\ell)\right) \norm{\barx^{(i)}}^2 - 2 \sqrt{\Psi(\ell)} \geq \norm{\barx^{(i)}}^2 \left(1 - \frac{4 \sqrt{\Psi(\ell)}}{\theta}\right), 
    \end{align*}
    where the last inequality holds by the fact that  $\left\|\barx^{(i)} \right\|\leq 1$ and $\Psi(\ell)<1$.   Hence, the statement holds.
\end{proof}

\begin{lemma} \label{lem:pnorm_diff}
    It holds for  $i \neq j$ that 
    \[
        \norm{\frac{\sqrt{\vol(\sets_i)} }{\norm{\barx^{(i)}}}\cdot \vecp^{(i)} - \frac{\sqrt{\vol(\sets_j)} }{\norm{\barx^{(j)}}} \cdot \vecp^{(j)} }^2 \geq \theta - 3 \sqrt{\Psi(\ell)}.
    \]
\end{lemma}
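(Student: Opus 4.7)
The plan is to write the left-hand side as $\norm{Mv}^2$ for a suitable $\ell\times\ell$ matrix $M$ and then show that $M$ is approximately an isometry. Define $M$ by $M_{jx} = \langle \vecf_j, \barg_x\rangle$. Unpacking the definition of $\vecp^{(i)}$ gives $\sqrt{\vol(\sets_i)}\,\vecp^{(i)} = M\,\barx^{(i)}$, so
\[
    \frac{\sqrt{\vol(\sets_i)}}{\norm{\barx^{(i)}}}\vecp^{(i)} - \frac{\sqrt{\vol(\sets_j)}}{\norm{\barx^{(j)}}}\vecp^{(j)} = M v,
\]
where $v = \barx^{(i)}/\norm{\barx^{(i)}} - \barx^{(j)}/\norm{\barx^{(j)}} \in \R^\ell$. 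The $(\theta,\ell)$-distinguishability of $\graphm$ directly gives $\norm{v}^2 \geq \theta$, so the task reduces to showing that $\norm{Mv}^2$ is close to $\norm{v}^2$.

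The heart of the argument is a lift to $\R^n$ via the orthonormal family $\{\barg_y\}_{y=1}^{\ell}$. Set $w = \sum_{y=1}^{\ell} v_y \barg_y \in \R^n$; by Lemma~\ref{lem:barg_ortho}, $\norm{w}^2 = \norm{v}^2$. Let $P$ denote the orthogonal projection onto $\mathrm{span}(\vecf_1,\ldots,\vecf_{\ell})$. A short computation gives $(Mv)_j = \sum_x \langle \vecf_j,\barg_x\rangle v_x = \langle \vecf_j, w\rangle$ for each $j\in[\ell]$, so $Mv$ is exactly the coordinate representation of $Pw$ in the basis $\{\vecf_j\}_{j=1}^{\ell}$. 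The Pythagorean theorem then yields
\[
    \norm{Mv}^2 = \norm{Pw}^2 = \norm{w}^2 - \norm{(I-P)w}^2 = \norm{v}^2 - \norm{(I-P)w}^2.
\]
Since $\hatf_y = P\barg_y$, we have $(I-P)w = \sum_y v_y(\barg_y - \hatf_y)$, and two applications of Cauchy--Schwarz together with the first statement of Theorem~\ref{thm:structure++} summed over $y\in[\ell]$ yield
\[
    \norm{(I-P)w}^2 \leq \norm{v}^2 \cdot \sum_{y=1}^{\ell} \norm{\barg_y - \hatf_y}^2 \leq \norm{v}^2 \cdot \Psi(\ell).
\]

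Combining with the triangle-inequality estimate $\norm{v}^2 \leq 4$ (the difference of two unit vectors in $\R^\ell$) gives $\norm{Mv}^2 \geq \theta - 4\Psi(\ell)$, of the same functional form $\theta - O(\sqrt{\Psi(\ell)})$ as the stated bound. The main obstacle I anticipate is sharpening the constant from $4$ to the stated $3$: the Cauchy--Schwarz step above is slightly lossy on the cross-terms $\langle \barg_y - \hatf_y, \barg_z - \hatf_z\rangle$ for $y\neq z$, and the identity $\hatf_y^\transpose \hatf_z = -\langle \barg_y - \hatf_y, \barg_z - \hatf_z\rangle$ (which follows from $\barg_y - \hatf_y \perp \mathrm{span}(\vecf_1,\ldots,\vecf_{\ell})$ together with the orthonormality of the $\barg$'s) should allow one to absorb some of the loss. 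A simple case split handles the rest: whenever $\Psi(\ell) \leq 9/16$ one already has $4\Psi(\ell) \leq 3\sqrt{\Psi(\ell)}$ for free, and in the complementary regime the stated right-hand side $\theta - 3\sqrt{\Psi(\ell)}$ is essentially non-positive since $\theta \leq 4$, so the claim becomes trivial.
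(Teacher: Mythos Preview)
Your approach is correct and takes a genuinely different route from the paper's. The paper expands the squared norm directly as $\sum_{t}\bigl(\sum_{y}\langle \vecf_t,\barg_y\rangle v_y\bigr)^2$ with $v_y=\vecg_y(i)/\norm{\barxi}-\vecg_y(j)/\norm{\barxj}$, splits into diagonal ($y=z$) and off-diagonal pieces, simplifies the off-diagonal part via the identity $\hatf_y^\transpose\hatf_z=\langle \hatf_y-\barg_y,\barg_z\rangle$ (which is exactly your observation that $\barg_y-\hatf_y\perp\mathrm{span}(\vecf_1,\ldots,\vecf_\ell)$), and then bounds it by $2\sqrt{\Psi(\ell)}$ through two rounds of Cauchy--Schwarz; combined with the diagonal bound $\geq (1-\Psi(\ell))\,\theta$ this yields $\theta-3\sqrt{\Psi(\ell)}$. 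Your projection identity $\norm{Mv}^2=\norm{v}^2-\norm{(I-P)w}^2$ is cleaner: it replaces the diagonal/cross bookkeeping by a single Pythagorean step, and the bound $\norm{(I-P)w}^2\leq\norm{v}^2\,\Psi(\ell)$ in fact gives the sharper estimate $\norm{Mv}^2\geq\norm{v}^2(1-\Psi(\ell))\geq\theta(1-\Psi(\ell))$, which (once $\theta\leq 1$ is in hand) is at least $\theta-\Psi(\ell)$ and hence stronger than the stated bound whenever $\Psi(\ell)\leq 1$.

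One small correction to your complementary-regime argument: the claim ``$\theta\leq 4$'' is not enough on its own (take $\Psi=1$, $\theta=4$, so $\theta-3\sqrt{\Psi}=1>0$). The right observation is $\theta\leq 1$, which comes from the \emph{first} condition in Definition~\ref{def:distinguishable}: since $\{\vecg_1,\ldots,\vecg_k\}$ is an orthonormal basis of $\R^k$, the rows of the matrix $[\vecg_1\mid\cdots\mid\vecg_k]$ are also unit vectors, so $\norm{\barx^{(i)}}^2=\sum_{j\leq\ell}\vecg_j(i)^2\leq 1$ and hence $\theta\leq 1$. With $\theta\leq 1$ your case split at $\Psi(\ell)=9/16$ works exactly as you wrote; alternatively, plugging $\theta\leq 1$ into the tighter bound $\theta(1-\Psi(\ell))$ above removes the need for any case distinction.
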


\begin{proof}
By definition, it holds that 
    \begin{alignat*}{2}
    \lefteqn{\norm{\frac{\sqrt{\vol(\sets_i)} }{\norm{\barxi}}\cdot \peye - \frac{\sqrt{\vol(\sets_j)} }{\norm{\barxj}}\cdot \pj }^2 }\\
    & =  \sum_{t = 1}^{\ell} \left(\sum_{y = 1}^{\ell} \langle \vecf_t, \barg_y \rangle \left(\frac{\vecg_y(i)}{\norm{\barxi}} - \frac{\vecg_y(j)}{\norm{\barxj}}\right)\right)^2 \\
    & =  \sum_{t = 1}^{\ell} \sum_{y = 1}^{\ell} \langle \vecf_t, \barg_y\rangle^2 \left(\frac{\vecg_y(i)}{\norm{\barxi}} - \frac{\vecg_y(j)}{\norm{\barxj}} \right)^2\\
    & \qquad\qquad  +  \sum_{t = 1}^{\ell} \sum_{y = 1}^{\ell} \sum_{\substack{1\leq z \leq \ell \\ z \neq y}} \langle \vecf_t, \barg_y\rangle \langle \vecf_t, \barg_z \rangle \left(\frac{\vecg_y(i)}{\norm{\barxi}} - \frac{\vecg_y(j)}{\norm{\barxj}} \right) \left(\frac{\vecg_z(i)}{\norm{\barxi}} - \frac{\vecg_z(j)}{\norm{\barxj}} \right).
    \end{alignat*}
    We upper bound  the second term by 
    \newcommand{\gnormij}[1]{\frac{\vecg_{#1}(i)}{\norm{\barx^{(i)}}} - \frac{\vecg_{#1}(j)}{\norm{\barx^{(j)}}}}
    \begin{align*}
        \lefteqn{\sum_{y = 1}^{\ell} \abs{\frac{\vecg_y(i)}{\norm{\barx^{(i)}}} - \frac{\vecg_y(j)}{\norm{\barx^{(j)}}}} \sum_{\substack{1\leq z \leq \ell \\ z \neq y}} \abs{\frac{\vecg_z(i)}{\norm{\barx^{(i)}}} - \frac{\vecg_z(j)}{\norm{\barx^{(j)}}}} \sum_{t = 1}^{\ell } \abs{\langle \vecf_t, \barg_y \rangle \langle \vecf_t, \barg_x \rangle}} \\
        = & \sum_{y = 1}^{\ell} \abs{\frac{\vecg_y(i)}{\norm{\barx^{(i)}}} - \frac{\vecg_y(j)}{\norm{\barx^{(j)}}}} \sum_{\substack{1\leq z\leq \ell  \\ z \neq y}} \abs{\frac{\vecg_z(i)}{\norm{\barx^{(i)}}} - \frac{\vecg_z(j)}{\norm{\barx^{(j)}}}} \abs{\hatf_y^\transpose \hatf_z} \\
        = & \sum_{y = 1}^{\ell} \abs{\frac{\vecg_y(i)}{\norm{\barx^{(i)}}} - \frac{\vecg_y(j)}{\norm{\barx^{(j)}}}} \sum_{\substack{1\leq z \leq \ell \\ z \neq y}} \abs{\frac{\vecg_z(i)}{\norm{\barx^{(i)}}} - \frac{\vecg_z(j)}{\norm{\barx^{(j)}}}} \abs{ \left(\barg_y + \hatf_y - \barg_y\right)^\transpose \left(\barg_z + \hatf_z - \barg_z\right) } \\
       = &   \sum_{y = 1}^{\ell} \abs{\frac{\vecg_y(i)}{\norm{\barx^{(i)}}} - \frac{\vecg_y(j)}{\norm{\barx^{(j)}}}} \sum_{\substack{1\leq z \leq \ell \\ z \neq y}} \abs{\frac{\vecg_z(i)}{\norm{\barx^{(i)}}} - \frac{\vecg_z(j)}{\norm{\barx^{(j)}}}} \abs{ \langle \hatf_y - \barg_y, \barg_z \rangle } \\
         \leq &   \sqrt{\left(\sum_{y = 1}^{\ell} \left(\gnormij{y}\right)^2\right) \sum_{y = 1}^{\ell} \left(\sum_{\substack{1\leq z \leq \ell \\ z \neq y}} \abs{\gnormij{z}} \abs{ \langle \hatf_y - \barg_y, \barg_z \rangle }\right)^2 } \\
         \leq &  \sqrt{2 \sum_{y = 1}^{\ell} \left(\sum_{\substack{1\leq z \leq \ell \\ z \neq y}} \left(\gnormij{z}\right)^2 \right) \left( \sum_{\substack{1\leq z \leq \ell \\ z \neq y}} \langle \hatf_y - \barg_y, \barg_z \rangle^2 \right) } \\
         \leq &  2 \sqrt{\sum_{y = 1}^{\ell} \sum_{\substack{1\leq z \leq \ell \\ z \neq y}} \langle \hatf_y - \barg_y, \barg_z \rangle^2 } \\
          \leq &  2 \sqrt{\sum_{y = 1}^{\ell} \norm{\hatf_y - \barg_y}^2 } 
         \leq   2 \sqrt{\Psi (\ell)},
    \end{align*}
    from which we can conclude that
    \begin{alignat*}{2}
    \norm{\frac{\sqrt{\vol(\sets_i)} }{\norm{\barxi}}\cdot \peye - \frac{\sqrt{\vol(\sets_j)} }{\norm{\barxj}}\cdot \pj}^2 & \geq && \left(1 - \Psi(\ell) \right) \theta - 2 \sqrt{\Psi(\ell)} \\
    & \geq && \theta - 3 \sqrt{\Psi(\ell)}. 
    \end{alignat*} 
    With this we proved the statement.
\end{proof}

\begin{lemma}\label{lem:norm_phat_diff}
It holds for any different  $i,j\in[k]$ that  
    \[
        \norm{\frac{\peye}{\norm{\peye}} - \frac{\pj}{\norm{\pj}}}^2 \geq \frac{\theta}{4} - 8 \sqrt{\frac{\Psi}{\theta}}.
    \]
\end{lemma}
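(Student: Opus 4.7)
The plan is to follow the template of Lemma~\ref{lem:normp_diff2}, adapting it to the meta-graph setting and using the two preceding lemmas as the main inputs. Concretely, introduce the rescaled vectors
\[
    \veca_i \triangleq \frac{\sqrt{\vol(\sets_i)}}{\norm{\barx^{(i)}}}\cdot \peye \qquad\text{and}\qquad \veca_j \triangleq \frac{\sqrt{\vol(\sets_j)}}{\norm{\barx^{(j)}}}\cdot \pj,
\]
so that normalising yields $\veca_i/\norm{\veca_i} = \peye/\norm{\peye}$ and similarly for $j$. Then the target distance can be studied entirely through $\veca_i$ and $\veca_j$, for which Lemma~\ref{lem:pnorm_diff} provides
\[
    \norm{\veca_i - \veca_j}^2 \geq \theta - 3\sqrt{\Psi(\ell)},
\]
and Lemma~\ref{lem:p_norm} provides the norm control $1 - 4\sqrt{\Psi(\ell)}/\theta \leq \norm{\veca_i}^2,\norm{\veca_j}^2 \leq 1 + 2\sqrt{\Psi(\ell)}/\theta$.

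Next, assume without loss of generality that $\norm{\veca_i} \leq \norm{\veca_j}$, and write
\[
    \norm{\frac{\veca_i}{\norm{\veca_i}} - \frac{\veca_j}{\norm{\veca_j}}} = \frac{1}{\norm{\veca_i}}\cdot \norm{\veca_i - \frac{\norm{\veca_i}}{\norm{\veca_j}}\veca_j}.
\]
Apply the triangle inequality to the right-hand side to lower bound it by $\norm{\veca_i - \veca_j} - \bigl(\norm{\veca_j} - \norm{\veca_i}\bigr)$. The first term is already at least $\sqrt{\theta - 3\sqrt{\Psi(\ell)}}$; the second term is an $O(\sqrt{\Psi(\ell)}/\theta)$ correction coming from the $\norm{\veca_i}, \norm{\veca_j}$ bounds. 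Squaring the resulting inequality and dividing by $\norm{\veca_i}^2 \leq 1 + 2\sqrt{\Psi(\ell)}/\theta$ produces a lower bound of the form $\theta - O(\sqrt{\Psi(\ell)}) - O(\sqrt{\Psi(\ell)/\theta})$, which after a constant slackening gives $\theta/4 - 8\sqrt{\Psi(\ell)/\theta}$ as claimed. (The constant $1/4$ rather than $1$ on $\theta$ is a comfortable buffer so that the cross terms introduced when we square the triangle-inequality bound and invert $\norm{\veca_i}^2$ are cleanly absorbed.)

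The main obstacle is bookkeeping rather than conceptual: the error term $\norm{\veca_j} - \norm{\veca_i}$ must be controlled using the asymmetric bounds from Lemma~\ref{lem:p_norm}, which involves an elementary but slightly delicate estimate of the form $\sqrt{1 + 2\sqrt{\Psi(\ell)}/\theta} - \sqrt{1 - 4\sqrt{\Psi(\ell)}/\theta}$. Here one uses the elementary inequalities $\sqrt{1+x} \leq 1 + x/2$ and $\sqrt{1-x} \geq 1 - x$ for $x \in [0,1]$ (justified since $\Psi(\ell)$ is small compared to $\theta$) to convert this difference into an explicit $O(\sqrt{\Psi(\ell)}/\theta)$ quantity. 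Once this and the division by $\norm{\veca_i}^2$ are handled, everything combines into the stated inequality, which is then used in the subsequent analysis of spectral clustering with fewer eigenvectors in the same role that Lemma~\ref{lem:normp_diff2} played in Section~\ref{sec:analysis1}.
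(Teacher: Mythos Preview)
Your proposal is correct and follows essentially the same route as the paper: define rescaled vectors $\veca_i,\veca_j$, use Lemma~\ref{lem:pnorm_diff} for $\norm{\veca_i-\veca_j}$ and Lemma~\ref{lem:p_norm} for the norm control, then apply the triangle inequality to $\norm{\veca_i - \tfrac{\norm{\veca_i}}{\norm{\veca_j}}\veca_j}$. The only cosmetic difference is that the paper absorbs the factor $1/\norm{\veca_i}$ up front by rescaling $\veca_i,\veca_j$ by an additional $(1+\epsilon)^{-1}$ with $\epsilon=4\sqrt{\Psi}/\theta$ so that $\norm{\veca_i},\norm{\veca_j}\le 1$, whereas you carry $1/\norm{\veca_i}^2$ through and divide by its upper bound at the end; both lead to the same estimate.
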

\begin{proof}
To follow the proof, it may help to refer to the illustration in Figure~\ref{fig:proof_fig}.
    \begin{figure}[ht]
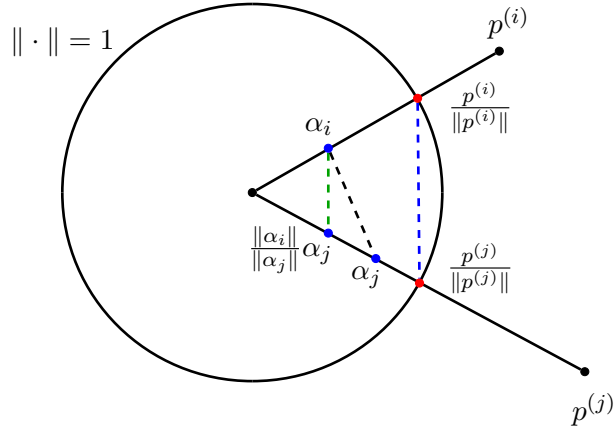

        \centering
        \tikzfig{figures/proof_fig}
        \caption[Illustration of the proof of Lemma~\ref{lem:norm_phat_diff}]{
        Illustration of the proof of Lemma~\ref{lem:norm_phat_diff}.
        Our goal is to give a lower bound on the length of $\left(\frac{\peye}{\norm{\peye}} - \frac{\pj}{\norm{\pj}}\right)$, which is the blue dotted line in the figure.
        We instead calculate a lower bound on the length of $\left( \veca_i - \frac{\norm{\veca_i}}{\norm{\veca_j}} \right)$, which is the green dotted line, and use the fact that by construction, $\norm{\veca_i} \leq 1$ and $\norm{\veca_j} \leq 1$.}
        \label{fig:proof_fig}
    \end{figure}
   
    We set the parameter $\epsilon = 4 \sqrt{\Psi} / \theta$, and define 
    \[
        \veca_i = \frac{\sqrt{\vol(\sets_i)} }{\left(1 + \epsilon\right) \norm{\barxi}}\cdot \peye, \qquad \hspace{3em} \veca_j = \frac{\sqrt{\vol(\sets_j)} }{\left(1 + \epsilon\right) \norm{\barxj}}\cdot \pj.
    \]
    By the definition of $\epsilon$ and  Lemma~\ref{lem:p_norm}, it holds that 
    $
        \norm{\veca_i} \leq 1$, and $
        \norm{\veca_j} \leq  1$.
    We can also assume without loss of generality that $\norm{\veca_i} \leq \norm{\veca_j}$.
    Then, as illustrated in Figure~\ref{fig:proof_fig}, we have
    \[
        \norm{\frac{\peye}{\norm{\peye}} - \frac{\pj}{\norm{\pj}}}^2 \geq \norm{\veca_i - \frac{\norm{\veca_i}}{\norm{\veca_j}} \veca_j}^2,
    \]
    and so it suffices to lower bound the right-hand side of the inequality above.
    By the triangle inequality, we have
    \begin{align*}
        \norm{\veca_i - \frac{\norm{\veca_i}}{\norm{\veca_j}} \veca_j} & \geq \norm{\veca_i - \veca_j} - \norm{\veca_j - \frac{\norm{\veca_i}}{\norm{\veca_j}} \veca_j} \\
        & = \frac{1}{1+\epsilon} \norm{\frac{\sqrt{\vol(\sets_i)} }{\norm{\barxi}}\cdot \peye - \frac{\sqrt{\vol(\sets_j)} }{\norm{\barxj}} \cdot \pj} - \left(\norm{\veca_j} - \norm{\veca_i}\right).
    \end{align*}
    Now, we have that 
    \begin{align*}
        \norm{\veca_j} - \norm{\veca_i} & = \frac{\sqrt{\vol(\sets_j)}}{(1 + \epsilon) \norm{\barxj}} \cdot \norm{\pj} - \frac{\sqrt{\vol(\sets_i)}}{(1 + \epsilon) \norm{\barxi}}\cdot   \norm{\peye} \\
        & \leq 1 - \frac{1 - \epsilon}{1 + \epsilon} \\
        & = \frac{2 \epsilon}{1 + \epsilon},
    \end{align*}
    and have by Lemma~\ref{lem:pnorm_diff} that   
    \begin{align*}
        \norm{\frac{\sqrt{\vol(\sets_i)} }{\norm{\barxi}}\cdot \peye - \frac{\sqrt{\vol(\sets_j)} }{\norm{\barxj}} \cdot \pj} & \geq \sqrt{\theta - 3 \sqrt{\Psi(\ell)}} \geq \sqrt{\theta} - \sqrt{2 \epsilon} \geq \sqrt{\theta} - 2 \epsilon.
    \end{align*}
    since $\epsilon = 4 \sqrt{\Psi} / \theta < 1$ by the assumption on $\Psi$.
    This gives us that
    \begin{align*}
        \norm{\veca_i - \frac{\norm{\veca_i}}{\norm{\veca_j}} \veca_j} & \geq \frac{\sqrt{\theta} - 2 \epsilon}{1 + \epsilon} - \frac{2\epsilon}{1 + \epsilon}  \geq \frac{1}{2} \left(\sqrt{\theta} - 4 \epsilon\right).
    \end{align*}
    Finally, we have that
    \begin{align*}
        \norm{\frac{\peye}{\norm{\peye}} - \frac{\pj}{\norm{\pj}}}^2 & \geq \frac{1}{4} \left(\sqrt{\theta} - 16 \frac{\sqrt{\Psi(\ell)}}{\theta}\right)^2 \geq \frac{\theta}{4} - 8 \sqrt{\frac{\Psi(\ell)}{\theta}},
    \end{align*}
    which completes the proof.
\end{proof}

\begin{lemma}\label{lem:pij_distance}
   It holds for different $i, j\in[k]$ that
    \[
        \norm{\peye - \pj}^2 \geq \frac{\theta^2 - 20 \sqrt{\theta\cdot \Psi(\ell)}}{16 \min\left\{\vol(\sets_i), \vol(\sets_j)\right\}}.
    \]
\end{lemma}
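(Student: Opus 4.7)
My plan is to adapt the strategy used in the proof of Lemma~\ref{lem:p_diff} to the meta-graph setting, substituting the refined bounds from Lemmas~\ref{lem:p_norm} and~\ref{lem:norm_phat_diff} wherever the original proof invoked Lemma~\ref{lem:pnorm} and Lemma~\ref{lem:normp_diff2}. By the symmetry of the statement I would assume without loss of generality that $\norm{\peye} \geq \norm{\pj}$. Applying the lower bound of Lemma~\ref{lem:p_norm} to both clusters together with the distinguishability assumption $\norm{\barxi}^2, \norm{\barxj}^2 \geq \theta$ gives $\norm{\peye}^2 \geq (\theta - 4\sqrt{\Psi(\ell)})/\vol(\sets_i)$ and $\norm{\pj}^2 \geq (\theta - 4\sqrt{\Psi(\ell)})/\vol(\sets_j)$, and the WLOG inequality then yields
\[
    \norm{\peye}^2 \geq \frac{\theta - 4\sqrt{\Psi(\ell)}}{\min\{\vol(\sets_i),\vol(\sets_j)\}}.
\]

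Next, I would convert the unit-sphere separation of Lemma~\ref{lem:norm_phat_diff} into an upper bound on the cosine between $\peye$ and $\pj$ via the identity $\langle u, v\rangle = 1 - \tfrac{1}{2}\norm{u-v}^2$ for unit vectors. Writing $D = \theta/4 - 8\sqrt{\Psi(\ell)/\theta}$, which may be assumed positive as otherwise the claim is vacuous, this gives $\langle \peye/\norm{\peye}, \pj/\norm{\pj}\rangle \leq 1 - D/2$. Substituting this into the expansion $\norm{\peye - \pj}^2 = \norm{\peye}^2 + \norm{\pj}^2 - 2\langle \peye, \pj\rangle$ and setting $\alpha = \norm{\pj}/\norm{\peye} \in [0,1]$ produces
\[
    \norm{\peye - \pj}^2 \geq \norm{\peye}^2\left[(1-\alpha)^2 + \alpha D\right].
\]
The univariate quadratic on the right is minimized at $\alpha^\star = 1 - D/2 \in [0,1]$ with value $D - D^2/4$, which is at least $D/2$ since $D \leq \theta/4 \leq 1/4$ (here $\theta \leq 1$ follows from the orthonormality of $\vecg_1, \ldots, \vecg_k$).

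Combining these pieces gives $\norm{\peye - \pj}^2 \geq (\theta - 4\sqrt{\Psi(\ell)}) \cdot D/(2\min\{\vol(\sets_i), \vol(\sets_j)\})$. The remaining step is to expand the product $(\theta - 4\sqrt{\Psi(\ell)})(\theta/4 - 8\sqrt{\Psi(\ell)/\theta})$ and consolidate the resulting cross-terms of the forms $\theta\sqrt{\Psi}$, $\sqrt{\theta\Psi}$, and $\Psi/\sqrt{\theta}$ into a single $O(\sqrt{\theta\Psi(\ell)})$ correction. The key tool here is the inequality $\theta\sqrt{\Psi} \leq \sqrt{\theta\Psi}$ valid for $\theta \leq 1$, together with discarding the positive $\Psi/\sqrt{\theta}$ residual; the resulting leading term is of order $\theta^2/8$, which provides slack against the target factor $1/16$ in the denominator.

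The main obstacle is not conceptual but rather the careful bookkeeping of three independent error sources — the $(1 - 4\sqrt{\Psi}/\theta)$ factor from Lemma~\ref{lem:p_norm}, the $-8\sqrt{\Psi/\theta}$ deduction inside $D$, and the $-D^2/4$ correction from minimizing the quadratic in $\alpha$ — and verifying that each is absorbable into a uniform correction of order $\sqrt{\theta\Psi(\ell)}$. Once these are combined with care, the resulting lower bound takes the form $\bigl(\theta^2 - c\sqrt{\theta\Psi(\ell)}\bigr)/(16\min\{\vol(\sets_i),\vol(\sets_j)\})$ for an explicit constant $c \leq 20$, matching the statement of the lemma.
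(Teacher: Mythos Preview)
Your overall plan mirrors the paper's proof almost exactly: assume $\norm{\peye}\ge\norm{\pj}$, use Lemma~\ref{lem:p_norm} together with $\norm{\barxi}^2,\norm{\barxj}^2\ge\theta$ to get $\norm{\peye}^2\ge(\theta-4\sqrt{\Psi(\ell)})/\min\{\vol(\sets_i),\vol(\sets_j)\}$, convert Lemma~\ref{lem:norm_phat_diff} into a cosine bound, and expand $\norm{\peye-\pj}^2$ with $\alpha=\norm{\pj}/\norm{\peye}$. The one substantive divergence is how you handle the resulting expression $(1-\alpha)^2+\alpha D$: you minimise it globally over $\alpha\in[0,1]$ and replace the minimum $D-D^2/4$ by $D/2$, whereas the paper splits into the two cases $\alpha\le 1/4$ (handled by the reverse triangle inequality, giving $\norm{\peye-\pj}^2\ge(9/16)\norm{\peye}^2$ without using $D$ at all) and $\alpha>1/4$ (dropping $(1-\alpha)^2$ and using $\alpha\ge 1/4$).

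This difference is exactly where your constant claim breaks. With $D=\theta/4-8\sqrt{\Psi/\theta}$ your bound becomes
\[
\tfrac{D}{2}\,\bigl(\theta-4\sqrt{\Psi}\bigr)
=\tfrac{\theta^2}{8}-\tfrac{\theta}{2}\sqrt{\Psi}-4\sqrt{\theta\Psi}+16\Psi/\sqrt{\theta}
\ \ge\ \tfrac{\theta^2}{8}-\tfrac{9}{2}\sqrt{\theta\Psi},
\]
using $\theta\le 1$, and this is \emph{not} $\ge(\theta^2-20\sqrt{\theta\Psi})/16$ in general: the inequality $\theta^2/16\ge 3.25\sqrt{\theta\Psi}$ it would require is equivalent to $\theta^3\ge 2704\,\Psi(\ell)$, which is stronger than either the non-vacuity range $\theta^3>400\,\Psi(\ell)$ or the hypothesis $\Psi(\ell)\le\theta^3/1600$ used downstream. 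In other words, your bookkeeping delivers a bound of the shape $(\theta^2-c\sqrt{\theta\Psi})/(16\min\{\vol(\sets_i),\vol(\sets_j)\})$ only with $c$ around $72$, not $c\le 20$. The paper's case distinction at $\alpha=1/4$ is precisely the device that recovers the stated constant; if you want to avoid it, you will need to keep the full minimiser $D-D^2/4$ rather than $D/2$ and track the $\theta$-dependence more carefully, or else accept a larger constant in the lemma and propagate that change through Lemma~\ref{lem:cost_lower_bound_2} and Theorem~\ref{thm:sc_meta-graph}.
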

\begin{proof}
We assume without loss of generality that $\norm{\peye}^2 \geq \norm{\pj}^2$. Then, by Lemma~\ref{lem:p_norm} and the fact that $\|\barx^{(i)}\|^2\geq\theta$ holds for any $i\in[k]$, 
we have
    \[
        \norm{\peye}^2 \geq \left(1 - \frac{4 \sqrt{\Psi(\ell)}}{\theta}\right)\cdot \frac{\norm{\barxi}^2}{\vol(\sets_i)},
    \]
    \[
        \norm{\pj}^2 \geq \left(1 - \frac{4 \sqrt{\Psi(\ell)}}{\theta}\right)\cdot \frac{\norm{\barxj}^2}{\vol(\sets_j)},
    \]
    which implies that
    \[
        \norm{\peye}^2 \geq \frac{\theta - 4 \sqrt{\Psi(\ell)}}{\min\left\{\vol(\sets_i), \vol(\sets_j)\right\}}.
    \]
    Now, we will proceed by case distinction.
   
    Case 1: $\norm{\peye} \geq 4 \norm{\pj}$. In this case, we have
    \[
        \norm{\peye - \pj} \geq \norm{\peye} - \norm{\pj} \geq \frac{3}{4} \norm{\peye}
    \]
    and
    \begin{align*}
        \norm{\peye - \pj}^2 & \geq \frac{9}{16}\cdot \frac{\theta - 4 \sqrt{\Psi(\ell)}}{\min\left\{\vol(\sets_i), \vol(\sets_j)\right\}} \\
        & \geq \frac{\theta \left(\theta - 20 \sqrt{\Psi(\ell)/\theta}\right)}{16 \min\left\{\vol(\sets_i), \vol(\sets_j)\right\}} \\
        & = \frac{\theta^2 - 20 \sqrt{\theta\cdot  \Psi(\ell)}}{16 \min\left\{\vol(\sets_i), \vol(\sets_j)\right\}},
    \end{align*}
    since $\theta < 1$.  
   
    Case 2: $\norm{\pj} = \alpha \norm{\peye}$ for some $\alpha \in \left(\frac{1}{4}, 1\right]$.
    By Lemma~\ref{lem:norm_phat_diff}, we have
    \begin{align*}
        \left\langle \frac{\peye}{\norm{\peye}}, \frac{\pj}{\norm{\pj}} \right\rangle & \leq 1 - \frac{1}{2} \left(\frac{\theta}{4} - 8 \sqrt{\frac{\Psi}{\theta}}\right) \\
        & \leq 1 - \frac{\theta}{8} + 2 \sqrt{\frac{\Psi}{\theta}}.
    \end{align*}
    Then, it holds that 
    \begin{align*}
        \norm{\peye - \pj}^2 & = \norm{\peye}^2 + \norm{\pj}^2 - 2 \left\langle \frac{\peye}{\norm{\peye}}, \frac{\pj}{\norm{\pj}} \right\rangle \norm{\peye} \norm{\pj}   \\
        & \geq \left(1 + \alpha^2\right) \norm{\peye}^2 - 2 \left( 1 - \frac{\theta}{8} + 2 \sqrt{\frac{\Psi(\ell)}{\theta}} \right) \alpha \norm{\peye}^2 \\
        & \geq \left(1 + \alpha^2 - 2\alpha + \frac{\theta}{4} \alpha - 4 \sqrt{\frac{\Psi(\ell)}{\theta}} \alpha\right) \norm{\peye}^2 \\
        & \geq \left(\frac{\theta}{4} - 4 \sqrt{\frac{\Psi(\ell)}{\theta}}\right)\cdot \alpha\cdot  \frac{\theta - 4 \sqrt{\Psi(\ell)}}{\min\left\{\vol(\sets_i), \vol(\sets_j)\right\}} \\
        & \geq \left(\frac{\theta}{16} - \sqrt{\frac{\Psi(\ell)}{\theta}} \right) \left(\theta - 4 \sqrt{\Psi} \right)\cdot  \frac{1}{\min\left\{\vol(\sets_i), \vol(\sets_j)\right\}} \\
        & \geq \left(\frac{\theta^2}{16} -\frac{5}{4} \sqrt{\theta \Psi(\ell) } \right) \cdot \frac{1}{\min\left\{\vol(\sets_i), \vol(\sets_j)\right\}} \\
        & = \frac{\theta^2 - 20 \sqrt{\theta \Psi(\ell) }}{16 \min\left\{\vol(\sets_i), \vol(\sets_j)\right\}}
        \end{align*}
        which completes the proof.
\end{proof}

It is important to recognise that the lower bound in Lemma~\ref{lem:pij_distance} implies a condition on $\theta$ and $\Psi(\ell)$ under which $\vecp^{(i)}$ and $\vecp^{(j)}$ are well-spread.
With this, we analyse the performance of spectral clustering when fewer eigenvector are employed to construct the embedding and show that it works when the optimal clusters present a noticeable pattern.

\begin{lemma} \label{lem:cost_lower_bound_2}
Let $\{\seta_i\}_{i=1}^k$ be the output of spectral clustering with $\ell$ eigenvectors, and $\sigma$ and  $\setm_{\sigma,i}$ be defined as in \eqref{eq:defsigma} and \eqref{eq:defmset}.
If $\Psi(\ell) \leq \theta^3 / 1600$, then
\[
        \sum_{i = 1}^k \frac{\vol(\setm_{\sigma, i} \triangle \sets_i)}{\vol(\sets_i)} \leq 64 (1 + \APT) \frac{\Psi(\ell)}{\theta^2}.
    \]
\end{lemma}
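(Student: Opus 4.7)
The plan is to mirror the proof of Lemma~\ref{lem:cost_lower_bound}, substituting the meta-graph analogues throughout. First I would bound the weighted symmetric difference by
\[
\sum_{i=1}^k \frac{\vol(\setm_{\sigma,i} \triangle \sets_i)}{\vol(\sets_i)} \leq 2 \sum_{i=1}^k \sum_{\substack{1\leq j\leq k\\ j\neq \sigma(i)}} \frac{\vol(\setb_{ij})}{\min\{\vol(\sets_{\sigma(i)}), \vol(\sets_j)\}},
\]
where $\setb_{ij} \triangleq \seta_i \cap \sets_j$, exactly as in \eqref{eq:up_sym_ratio}. The goal is then to lower-bound $\mathrm{COST}(\seta_1,\ldots,\seta_k)$ by a constant multiple of this right-hand side and combine with the upper bound $\mathrm{COST}(\seta_1,\ldots,\seta_k) \leq \APT\cdot \Psi(\ell)$, which follows by applying the $\APT$-approximate $k$-means subroutine to the witness centres $\{\peye\}$ together with Lemma~\ref{lem:cost_bound}.

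The core lower-bound computation proceeds per misclassified vertex $u \in \setb_{ij}$ with $j \neq \sigma(i)$ by invoking the standard inequality $\norm{\veca-\vecb}^2 \geq \tfrac{1}{2}\norm{\vecb-\vecc}^2 - \norm{\veca-\vecc}^2$ with $\veca = F(u)$, $\vecb = \pj$, and $\vecc = \vecc_i$, yielding
\[
\norm{F(u) - \vecc_i}^2 \geq \tfrac{1}{2}\norm{\pj - \vecc_i}^2 - \norm{\pj - F(u)}^2.
\]
The definition of $\sigma$ in \eqref{eq:defsigma} then forces $\norm{\pj - \vecc_i} \geq \tfrac{1}{2}\norm{\pj - \vecp^{(\sigma(i))}}$ by the triangle inequality, and Lemma~\ref{lem:pij_distance} supplies the pairwise separation $\norm{\pj - \vecp^{(\sigma(i))}}^2 \geq (\theta^2 - 20\sqrt{\theta\cdot\Psi(\ell)})/(16 \min\{\vol(\sets_{\sigma(i)}),\vol(\sets_j)\})$. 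The perturbation term $\sum_i \sum_{u \in \sets_i} \deg(u)\norm{\peye - F(u)}^2$ collected from these inequalities is at most $\Psi(\ell)$ by Lemma~\ref{lem:cost_bound}, which controls the noise caused by using only $\ell$ eigenvectors rather than $k$.

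Assembling these ingredients yields
\[
\mathrm{COST}(\seta_1,\ldots,\seta_k) \geq \frac{\theta^2 - 20\sqrt{\theta\Psi(\ell)}}{C}\cdot \sum_{i=1}^k \frac{\vol(\setm_{\sigma,i}\triangle \sets_i)}{\vol(\sets_i)} - \Psi(\ell)
\]
for an absolute constant $C$. The hypothesis $\Psi(\ell) \leq \theta^3/1600$ is calibrated exactly so that $20\sqrt{\theta\Psi(\ell)} \leq \theta^2/2$, at which point the numerator simplifies to a constant multiple of $\theta^2$ and rearranging against the upper bound $\APT\cdot\Psi(\ell)$ delivers $\sum_i \vol(\setm_{\sigma,i}\triangle\sets_i)/\vol(\sets_i) = O((1+\APT)\Psi(\ell)/\theta^2)$.

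The main obstacle is purely numerical bookkeeping: tracking the constants from the triangle-inequality step, the $\sigma$-optimality step, Lemma~\ref{lem:pij_distance}, and the symmetric-difference bound carefully enough to obtain the stated prefactor $64$. The conceptual work is already done in Lemmas~\ref{lem:cost_bound} and~\ref{lem:pij_distance}, which are the meta-graph replacements for Lemma~\ref{lem:total_cost} and Lemma~\ref{lem:p_diff}; once these are in place, the proof of Lemma~\ref{lem:cost_lower_bound_2} is a direct transcription of the proof of Lemma~\ref{lem:cost_lower_bound}, with the roles of $1/\Upsilon(k)$ and $k/\Upsilon(k)$ played by $\Psi(\ell)/\theta^2$ and $\Psi(\ell)$ respectively.
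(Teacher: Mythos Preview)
Your proposal is correct and mirrors the paper's proof essentially line by line: the same symmetric-difference reduction \eqref{eq:up_sym_ratio2}, the same quadratic inequality and $\sigma$-optimality step to pass from $\vecc_i$ to $\vecp^{(\sigma(i))}$, Lemma~\ref{lem:pij_distance} for the pairwise separation, Lemma~\ref{lem:cost_bound} for both the upper bound on $\mathrm{COST}$ and the perturbation term, and finally the hypothesis $\Psi(\ell)\leq\theta^3/1600$ to absorb $20\sqrt{\theta\Psi(\ell)}$ into $\theta^2/2$. The constant $C$ you leave implicit is $256$ in the paper (arising as $8\cdot 16\cdot 2$), which after halving $\theta^2$ gives exactly the stated $64$.
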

\begin{proof} 
    Let us define $\setb_{ij} = \seta_i \cap \sets_j$ to be the vertices in $\seta_i$ which belong to the true cluster $\sets_j$.
    Then, we have that
    \begin{align}
        \sum_{i = 1}^k \frac{\vol(\setm_{\sigma, i} \triangle \sets_i)}{\vol(\sets_i)} & = \sum_{i = 1}^k \sum_{\substack{j = 1 \\ j \neq \sigma(i)}}^k \vol(\setb_{ij}) \left(\frac{1}{\vol(\sets_{\sigma(i)})} + \frac{1}{\vol(\sets_j)} \right) \nonumber \\
        & \leq 2 \sum_{i = 1}^k \sum_{\substack{j = 1 \\ j \neq \sigma(i)}}^k \frac{\vol(\setb_{ij})}{\min\{\vol(\sets_{\sigma(i)}), \vol(\sets_j)\}}, \label{eq:up_sym_ratio2}
    \end{align}
and that    \begin{align*}
        \lefteqn{\mathrm{COST}(\seta_1, \ldots \seta_k)}\\ & = \sum_{i = 1}^k \sum_{u \in \seta_i} \deg(u) \norm{F(u) - \vecc_i}^2 \\
        & \geq \sum_{i = 1}^k \sum_{\substack{1\leq j \leq  k \\ j \neq \sigma(i)}}   \sum_{u \in \setb_{ij}} \deg(u) \norm{F(u) - \vecc_i}^2 \\
        & \geq \sum_{i = 1}^k \sum_{\substack{1\leq j \leq k  \\ j \neq \sigma(i)}}  \sum_{u \in \setb_{ij}} \deg(u) \left(\frac{\norm{\pj - \vecc_i}^2}{2} - \norm{\pj - F(u)}^2 \right) \\
        & \geq \sum_{i = 1}^k \sum_{\substack{1\leq j \leq k \\ j \neq \sigma(i)}}  \sum_{u \in \setb_{ij}} \frac{\deg(u) \norm{\pj - \vecp^{(\sigma(i))}}^2}{8} - \sum_{i = 1}^k \sum_{\substack{1\leq j \leq k \\ j \neq i}}  \sum_{u \in \setb_{ij}} \deg(u) \norm{\pj - F(u)}^2 \\
        & \geq \sum_{i = 1}^k \sum_{\substack{1\leq j\leq k \\ j \neq \sigma(i)}}  \vol(\setb_{ij}) \frac{\norm{\pj - \vecp^{(\sigma(i))}}^2}{8} - \sum_{i = 1}^k \sum_{u \in \sets_i} \deg(u) \norm{\peye - F(u)}^2 \\
        & \geq \sum_{i = 1}^k \sum_{\substack{1\leq j\leq k \\ j \neq \sigma(i)}}   \frac{\vol(\setb_{ij})}{16\cdot  \min\{\vol(\sets_{\sigma(i)}), \vol(\sets_j)\}}\left( \theta^2 - 20\sqrt{\theta\cdot \Psi(\ell)}\right) - \sum_{i = 1}^k \sum_{u \in \sets_i} \deg(u) \norm{\peye - F(u)}^2 \\
        & \geq \frac{1}{32} \cdot\left( \sum_{i = 1}^k \frac{\vol(\setm_{\sigma, i} \triangle \sets_i)}{\vol(\sets_i)}\right) \left( \theta^2 - 20\sqrt{\theta\cdot \Psi(\ell)}\right) - \Psi(\ell),
    \end{align*}
    where the second inequality follows by the inequality of  $\norm{\veca - \vecb}^2 \geq \frac{\norm{\vecb - \vecc}^2}{2} - \norm{\veca - \vecc}^2$, the third inequality follows since $\vecc_i$ is closer to $\vecp^{(\sigma(i))}$ than $\vecp^{(j)}$, the fifth one follows from Lemma~\ref{lem:pij_distance}, and the last one follows by \eqref{eq:up_sym_ratio2}.
    
    On the other hand, since $\mathrm{COST}(\seta_1,\ldots, \seta_k) \leq \APT\cdot  \Psi(\ell)$ by Lemma~\ref{lem:cost_bound}, we have that 
    \begin{align*}
         \sum_{i = 1}^k \frac{\vol(\setm_{\sigma, i} \triangle \sets_i)}{\vol(\sets_i)} & \leq 32\cdot (1+\APT) \cdot \Psi(\ell)\cdot  \left( \theta^2 - 20\sqrt{\theta\cdot \Psi(\ell)}\right)^{-1}  \leq 64\cdot   (1+\APT) \cdot \Psi(\ell),
    \end{align*}
    where the last inequality follows by the assumption that $\Psi(\ell) \leq \theta^3 /1600$. Therefore, the statement follows.
\end{proof}

Combining this with other technical ingredients, including our developed technique for constructing the desired mapping $\sigma^*$ described in Section~\ref{sec:sc_proof_sketch}, we obtain the performance guarantee of our designed algorithm, which is summarised as follows:

\begin{theorem} \label{thm:sc_meta-graph}
Let $\graphg$ be a graph with $k$ clusters $\{\sets_i\}_{i=1}^k$ of almost balanced size,
with a $(\theta, \ell)$-distinguishable meta-graph that satisfies 
$\Psi(\ell) \leq \left(2176 (1 + \APT)\right)^{-1} \theta^3$.
Let $\{\seta_i\}_{i=1}^k$ be the output of spectral clustering with $\ell$ eigenvectors, and without loss of generality let the  optimal correspondent of $\seta_i$ be $\sets_i$. Then, it holds that
\[
        \sum_{i = 1}^k \vol\left(\seta_i \triangle \sets_{i}\right) \leq 2176 (1 + \APT) \frac{\Psi(\ell) \cdot \vol(\vertexsetg)}{k \cdot \theta^2}.
    \]
\end{theorem}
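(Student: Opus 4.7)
The plan is to follow the template of the proof of Theorem~\ref{thm:sc_guarantee}, substituting the bound from Lemma~\ref{lem:cost_lower_bound_2} for the one from Lemma~\ref{lem:cost_lower_bound}, and replacing every appearance of $k/\Upsilon(k)$ with $\Psi(\ell)/\theta^2$. Concretely, I would begin by applying Lemma~\ref{lem:cost_lower_bound_2} to the function $\sigma:[k]\rightarrow[k]$ defined by $\sigma(i)=\argmin_{j\in[k]}\|\pj-\vecc_i\|$, obtaining
\[
  \epsilon(\sigma)\triangleq\sum_{i=1}^{k}\frac{\vol(\setm_{\sigma,i}\triangle\sets_i)}{\vol(\sets_i)}\leq 64\,(1+\APT)\,\frac{\Psi(\ell)}{\theta^2}.
\]
Note the hypothesis $\Psi(\ell)\leq\theta^3/1600$ required by Lemma~\ref{lem:cost_lower_bound_2} is implied by our stronger assumption $\Psi(\ell)\leq(2176(1+\APT))^{-1}\theta^3$, so the lemma applies.

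Next, if $\sigma$ is not a permutation I would invoke verbatim the iterative swap procedure from the proof of Theorem~\ref{thm:sc_guarantee}: pick an empty $\setm_{\sigma,i}$, pick a doubly-covered $\sets_j$ with $\sigma(x)=\sigma(y)=j$, and redirect $x$ by setting $\sigma'(x)=i$. The four-case analysis on the signs of $\alpha,\beta$ in \eqref{eq:sigmadiff} carries over unchanged, because its inputs are only the almost-balanced property (giving $\vol(\sets_j)/\min(\vol(\sets_i),\vol(\sets_j))\leq 4$) and the bound on $\epsilon(\sigma)$. Thus Case~1 adds nothing; Case~2 contributes in total at most $8\,\epsilon(\sigma)\leq 512(1+\APT)\Psi(\ell)/\theta^2$; and Cases~3 and 4 each add at most $8$ per swap, with the number of swaps bounded by $|\{i:\setm_{\sigma,i}=\emptyset\}|\leq\epsilon(\sigma)$, giving another $512(1+\APT)\Psi(\ell)/\theta^2$. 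Summing, $\epsilon(\sigma^\star)\leq 1088(1+\APT)\Psi(\ell)/\theta^2$.

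Finally, to pass from the normalised discrepancy $\epsilon(\sigma^\star)$ to the absolute one claimed in the theorem, I would use the almost-balanced hypothesis, which gives $\vol(\sets_i)\leq 2\vol(\vertexsetg)/k$ for every $i$. Multiplying term by term,
\[
  \sum_{i=1}^{k}\vol(\seta_i\triangle\sets_i)=\sum_{i=1}^{k}\vol(\setm_{\sigma^\star,i}\triangle\sets_i)\leq\frac{2\vol(\vertexsetg)}{k}\cdot\epsilon(\sigma^\star)\leq 2176(1+\APT)\frac{\Psi(\ell)\,\vol(\vertexsetg)}{k\,\theta^2},
\]
which is the desired bound. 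The main conceptual obstacle has already been dispensed with in Lemma~\ref{lem:cost_lower_bound_2}, so the remaining work is essentially bookkeeping: I must check that the four-case analysis only uses almost-balancedness and the $\epsilon(\sigma)$ bound (and not any property specific to $\Upsilon$ or classical spectral clustering), and that the strengthened hypothesis $\Psi(\ell)\leq\theta^3/(2176(1+\APT))$ absorbs both the $\theta^3/1600$ requirement of Lemma~\ref{lem:cost_lower_bound_2} and keeps the final constant at $2176$ after summing the $64+512+512=1088$ contributions and doubling for the volume conversion.
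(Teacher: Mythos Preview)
Your proposal is correct and matches the paper's own approach essentially verbatim: the paper's proof of Theorem~\ref{thm:sc_meta-graph} simply says to reuse the argument of Theorem~\ref{thm:sc_guarantee} with Lemma~\ref{lem:cost_bound} in place of Lemma~\ref{lem:total_cost} and Lemma~\ref{lem:cost_lower_bound_2} in place of Lemma~\ref{lem:cost_lower_bound}, which is precisely the substitution $k/\Upsilon(k)\rightsquigarrow\Psi(\ell)/\theta^2$ you describe. Your constant-tracking ($64+512+512=1088$, doubled to $2176$) and the verification that $\Psi(\ell)\leq\theta^3/(2176(1+\APT))$ implies the $\theta^3/1600$ hypothesis of Lemma~\ref{lem:cost_lower_bound_2} are both correct.
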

\begin{proof}
This result can be obtained by using the same technique as the one used in the proof of Theorem~\ref{thm:sc_guarantee}, but applying Lemma~\ref{lem:cost_bound} instead of Lemma~\ref{lem:total_cost} and
Lemma~\ref{lem:cost_lower_bound_2} instead of Lemma~\ref{lem:cost_lower_bound} in the analysis.
\end{proof}
 Notice that if we take $\ell = k$, then we have that $\theta = 1$ and $\Psi(\ell) \leq k / \Upsilon(k)$ which makes the guarantee in Theorem~\ref{thm:sc_meta-graph} the same as the one in Theorem~\ref{thm:sc_guarantee}.
 However, if the meta-graph corresponding to the optimal clusters is $(\theta, \ell)$-distinguishable for large $\theta$ and $\ell \ll k$, then we can have that $\Psi(\ell) \ll k / \Upsilon(k)$ and Theorem~\ref{thm:sc_meta-graph} gives a stronger guarantee than the 
 one from Theorem~\ref{thm:sc_guarantee}.

\section{Experimental Results} \label{sec:metaExperiments}
In this section we empirically evaluate the performance of spectral clustering for finding $k$ clusters while using fewer than $k$ eigenvectors.
Our results on synthetic data demonstrate that for graphs with a clear pattern of clusters,
spectral clustering with fewer than $k$ eigenvectors performs better.
This is further confirmed on real-world datasets including BSDS, MNIST, and USPS. 
The code used to produce all experimental results is available at
\begin{equation*}
\mbox{\url{https://github.com/pmacg/spectral-clustering-meta-graphs}.}
\end{equation*}
We implement the spectral clustering algorithm in Python, using the \texttt{scipy} library for computing eigenvectors, and the $k$-means algorithm from the \texttt{sklearn} library.
Our experiments on synthetic data are performed on a desktop computer with an Intel(R) Core(TM) i5-8500 CPU @ 3.00GHz processor and 16 GB RAM.
The experiments on the BSDS, MNIST, and USPS datasets are performed on a compute server with 64 AMD EPYC 7302 16-Core Processors.

\subsection{Results on Synthetic Data}
We first study the performance of spectral clustering on random graphs whose clusters exhibit a clear pattern.
Given the parameters $n \in \Z^+$, $0 \leq q \leq p \leq 1$, and some meta-graph $\graphm = (\vertexset_\graphm, \edgeset_\graphm)$ with $k$ vertices, we generate a graph with clusters $\{\sets_i\}_{i = 1}^k$, each of size $n$, as follows.
For each pair of vertices $u \in \sets_i$ and $v \in \sets_j$, we add the edge $(u, v)$ with probability $p$ if $i = j$ and with probability $q$ if $i \neq j$ 
and $(i, j) \in \edgeset_\graphm$.
The metric used for our evaluation is defined by
$
    \frac{1}{n k} \sum_{i = 1}^k \cardinality{\sets_i \cap \seta_i}$,
for the optimal matching between the output $\{\seta_i\}$ and the ground truth $\{\sets_i\}$.

In our experiments, we fix $n = 1,000$, $p = 0.01$, and consider the meta-graphs $C_{10}$ and $P_{4, 4}$, similar to those illustrated in Examples~\ref{ex:cycle} and~\ref{ex:grid}; this results in graphs with $10,000$ and $16,000$ vertices respectively.
We vary the ratio $p / q$ and the number of eigenvectors used to find the clusters.
Our experimental result, which is reported as the average score over 10 trials and shown in 
Figure~\ref{fig:sbm_results}, clearly shows that spectral clustering with fewer than $k$ eigenvectors performs better. This is particularly the case when $p$ and $q$ are close, which corresponds to the more challenging regime in the model.

\begin{figure}[ht]
\centering
\begin{subfigure}{0.35\textwidth}
    \includegraphics[width=\textwidth]{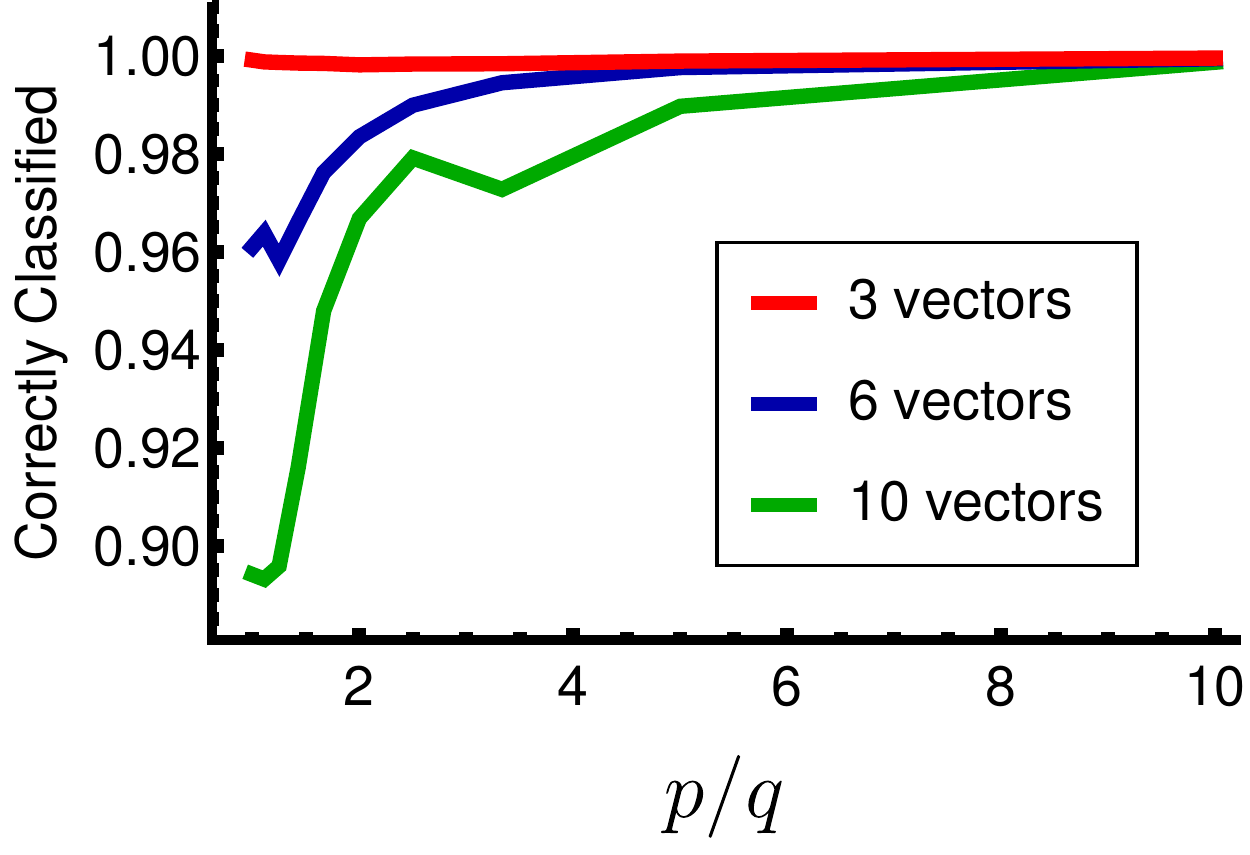}
    \caption{
    Meta-Graph $C_{10}$
    }
\end{subfigure}
\hspace{3em}
\begin{subfigure}{0.35\textwidth}
    \includegraphics[width=\textwidth]{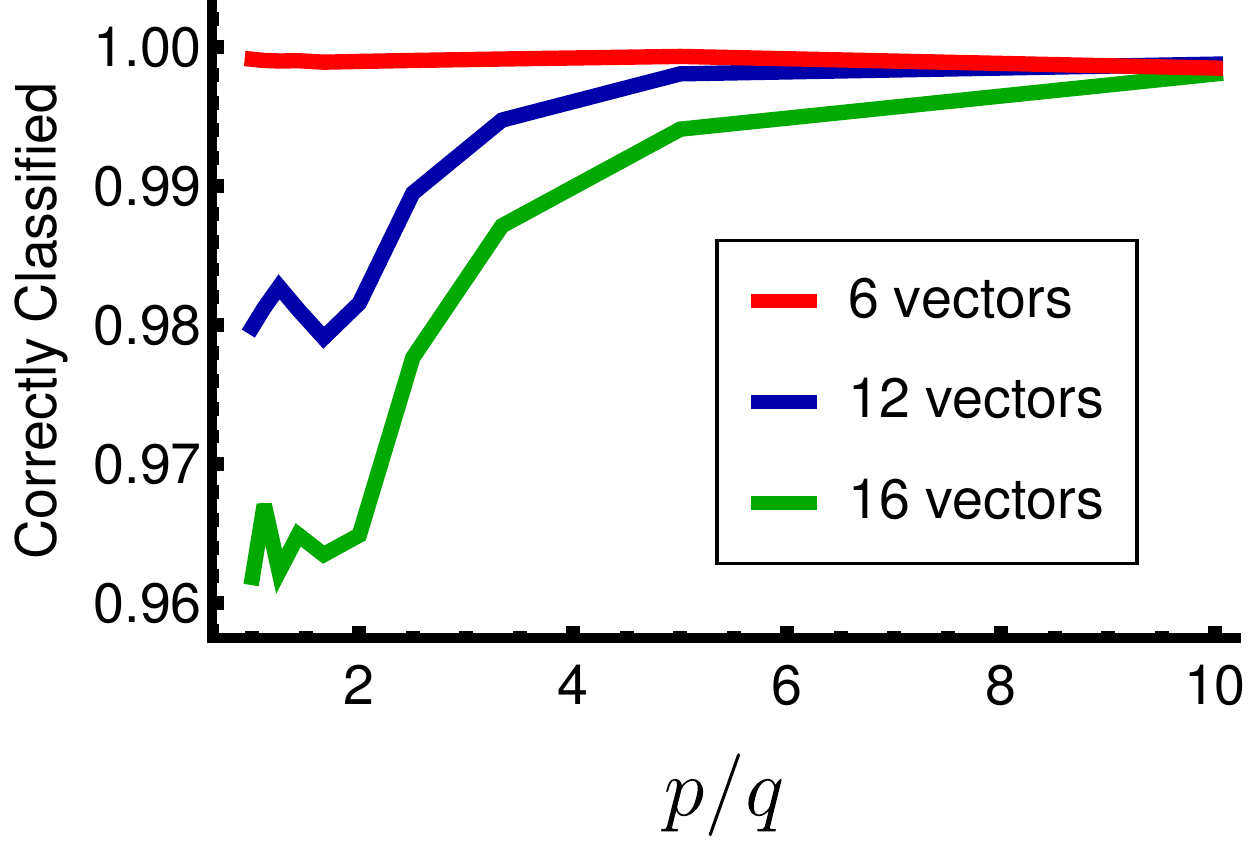}
    \caption{
        Meta-Graph $P_{4, 4}$
    }
\end{subfigure}
    \caption[The performance of spectral clustering with different numbers of eigenvectors]{
    \label{fig:sbm_results}
    A comparison of the performance of spectral clustering with different numbers of eigenvectors for clustering graphs with meta-structures $C_{10}$ and $P_{4, 4}$.
    }
\end{figure}

\subsection{Results on the BSDS Dataset}
In this experiment, we study the performance of spectral clustering for image segmentation when using different numbers of eigenvectors.
We consider the Berkeley Segmentation Data Set (BSDS)~\cite{arbelaezContourDetectionHierarchical2011}, which consists of $500$ images along with their ground-truth segmentations.
For each image, we construct a similarity graph on the pixels and take $k$ to be the number of clusters in the ground-truth segmentation\footnote{
The BSDS dataset provides several human-generated ground truth segmentations for each image.
Since there are different numbers of ground truth clusterings associated with each image, in our experiments we take the target number of clusters for a given image to be the
one closest to the median.
}.
Given a particular image in the dataset, we first downsample the image to have at most $20,000$ pixels.
Then, we represent each pixel by the point $(r, g, b, x, y)^\transpose \in \R^5$ where $r, g, b \in [1, 255]$ are the RGB values of the pixel and $x$ and $y$ are the coordinates of the pixel in the downsampled image.
We construct the similarity graph by taking each pixel to be a vertex in the graph, and for every pair of pixels $\vecu, \vecv \in \R^5$, we add an edge with weight $\exp(- \norm{\vecu - \vecv}^2 / 2 \sigma^2)$ where $\sigma = 20$.
Then we apply spectral clustering, varying the number of eigenvectors used.
We evaluate each segmentation produced with spectral clustering using the Rand Index~\cite{randObjectiveCriteriaEvaluation1971} as implemented in the benchmarking code provided along with the BSDS dataset.
For each image, this computes the average Rand Index across all of the provided ground-truth segmentations for the image.
Figure~\ref{fig:bsds_results_intro} shows two images from the dataset along with the segmentations produced with spectral clustering, and
Appendix~\ref{app:bsds} includes some additional examples.
These examples illustrate that
spectral clustering with fewer eigenvectors performs better.

We conduct the experiments on the entire BSDS dataset, and the average Rand Index of the algorithm's  output is reported in Table~\ref{tab:bsds_results}: it is clear to see that using $k/2$ eigenvectors consistently out-performs spectral clustering with $k$ eigenvectors. We further notice that, on $89\%$ of the images across the whole dataset,  using fewer than $k$ eigenvectors gives a better result than using $k$ eigenvectors.

\begin{table}[ht]
    \centering
    \begin{tabular}{cc}
    \toprule
        Number of Eigenvectors & Average Rand Index  \\
        \midrule
        $k$ & 0.71 \\
        $k / 2$ & 0.74 \\
        \textsc{Optimal} & 0.76 \\
    \bottomrule
    \end{tabular}
    \caption{The average Rand Index across the BSDS dataset for different numbers of eigenvectors.
    \textsc{Optimal}
    refers to the algorithm which runs spectral clustering with   $\ell$  eigenvectors for all possible $\ell\in[k]$ and returns the output with the highest Rand Index.}
    \label{tab:bsds_results}
\end{table}

\subsection{Results on the MNIST and USPS Datasets}
We further demonstrate the applicability of our results on the MNIST and USPS datasets~\cite{hullDatabaseHandwrittenText1994,lecunGradientbasedLearningApplied1998},   which  consist of images of hand-written digits, and the goal is to cluster the data into $10$ clusters corresponding to different digits.
In both the MNIST and USPS datasets, each image is represented as an array of grayscale pixels with values between $0$ and $255$. 
The MNIST dataset has $60,000$ images with dimensions $28 \times 28$ and the USPS dataset has $7291$ images with dimensions $16 \times 16$.
In each case, we consider each image to be a single data point in $\R^{(d^2)}$ where $d$ is the dimension of the images and construct the $k$-nearest neighbour graph for $k = 3$.
For the MNIST dataset, this gives a graph with $60,000$ vertices and $138,563$ edges and for the USPS dataset, this gives a graph with $7,291$ vertices and $16,715$ edges.
We use spectral clustering to partition the graphs into $10$ clusters.
We measure the similarity between the found clusters and the ground truth using the Adjusted Rand Index (ARI)~\cite{gatesImpactRandomModels2017},
accuracy (ACC)~\cite{randObjectiveCriteriaEvaluation1971}, and Normalised Mutual Information (NMI)~\cite{lancichinettiDetectingOverlappingHierarchical2009},
and plot the results in Figure~\ref{fig:mnist_usps}.
Our experiments show that spectral clustering with just $7$ eigenvectors gives the best performance on both datasets.

\begin{figure}[ht]
\centering
\begin{subfigure}{0.3\textwidth}
    \includegraphics[width=\textwidth]{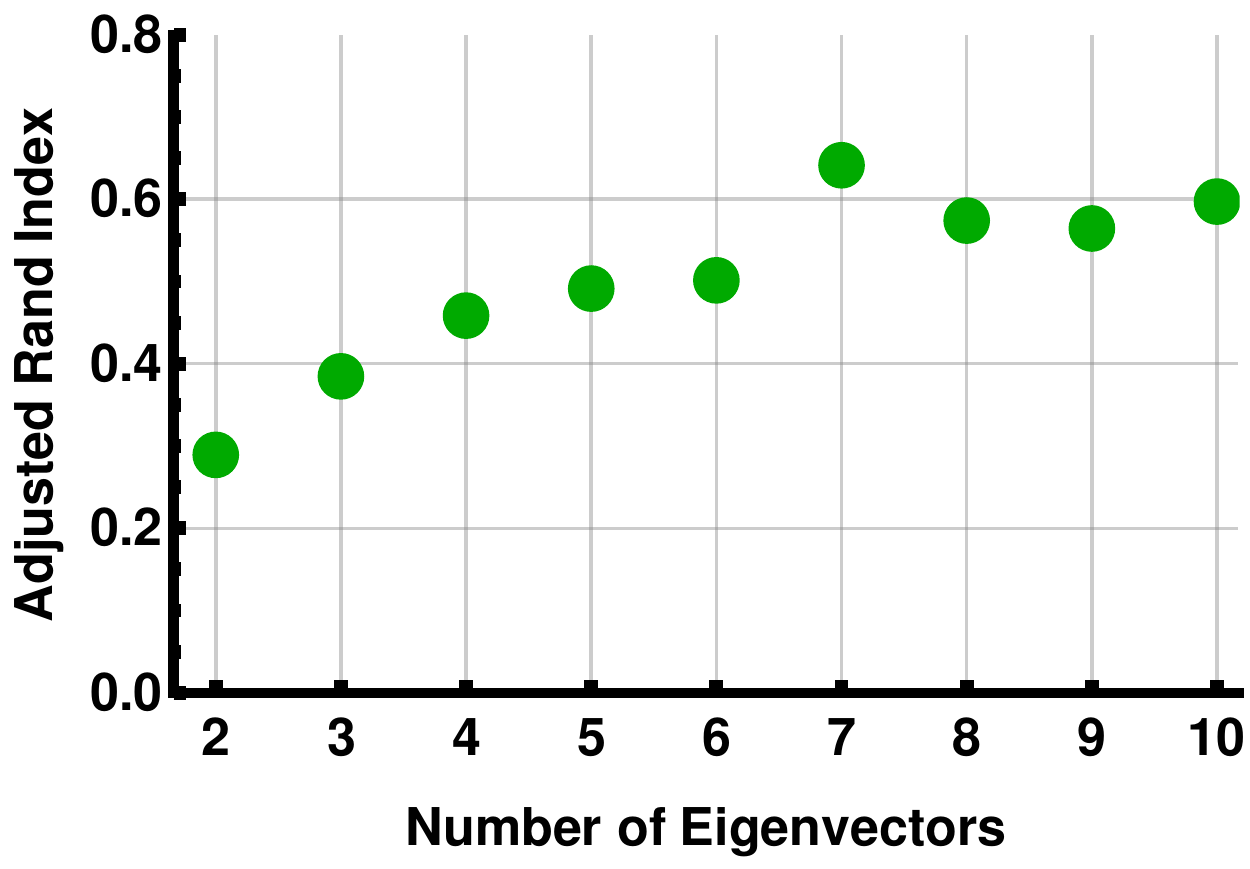}
    \caption{MNIST ARI}
\end{subfigure}
\hspace{1em}
\begin{subfigure}{0.3\textwidth}
    \includegraphics[width=\textwidth]{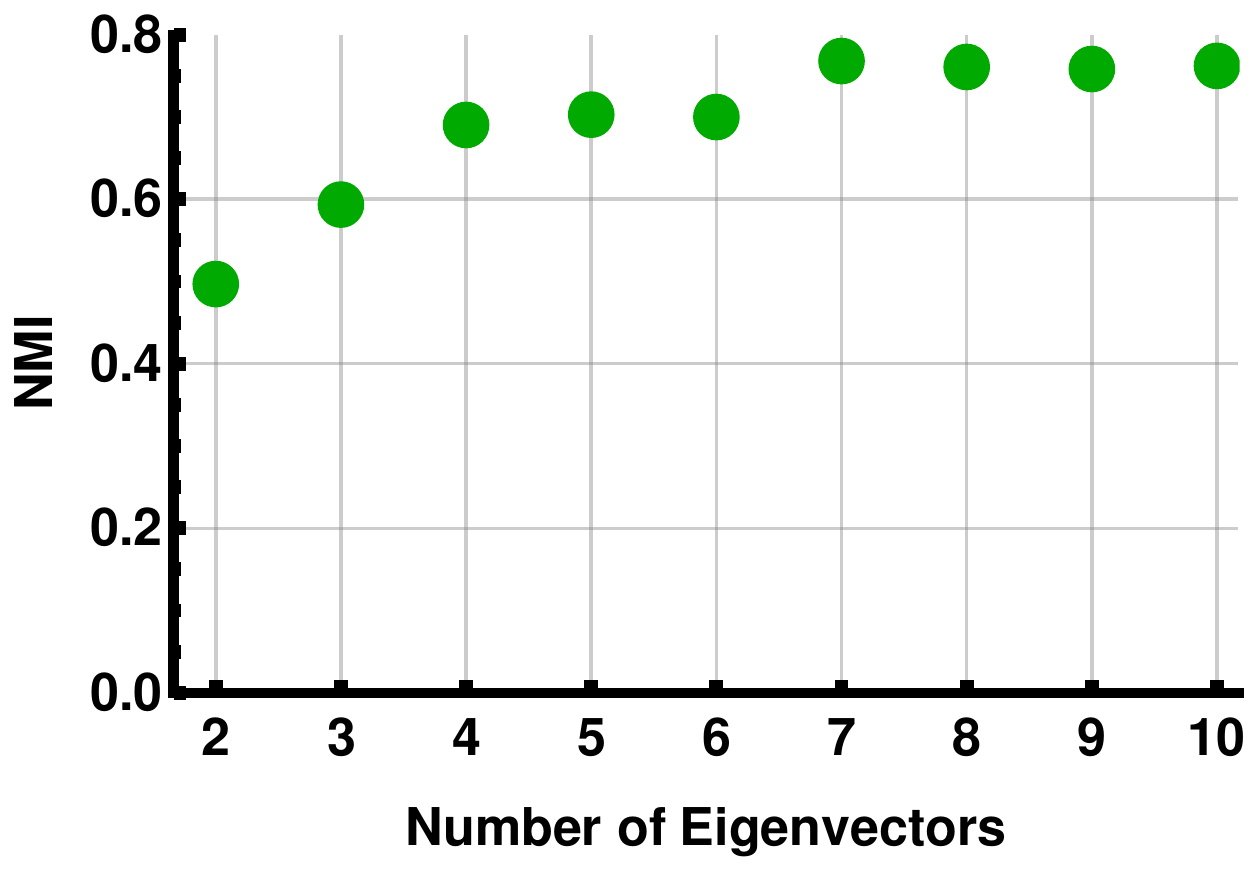}
    \caption{MNIST NMI}
\end{subfigure}
\hspace{1em}
\begin{subfigure}{0.3\textwidth}
    \includegraphics[width=\textwidth]{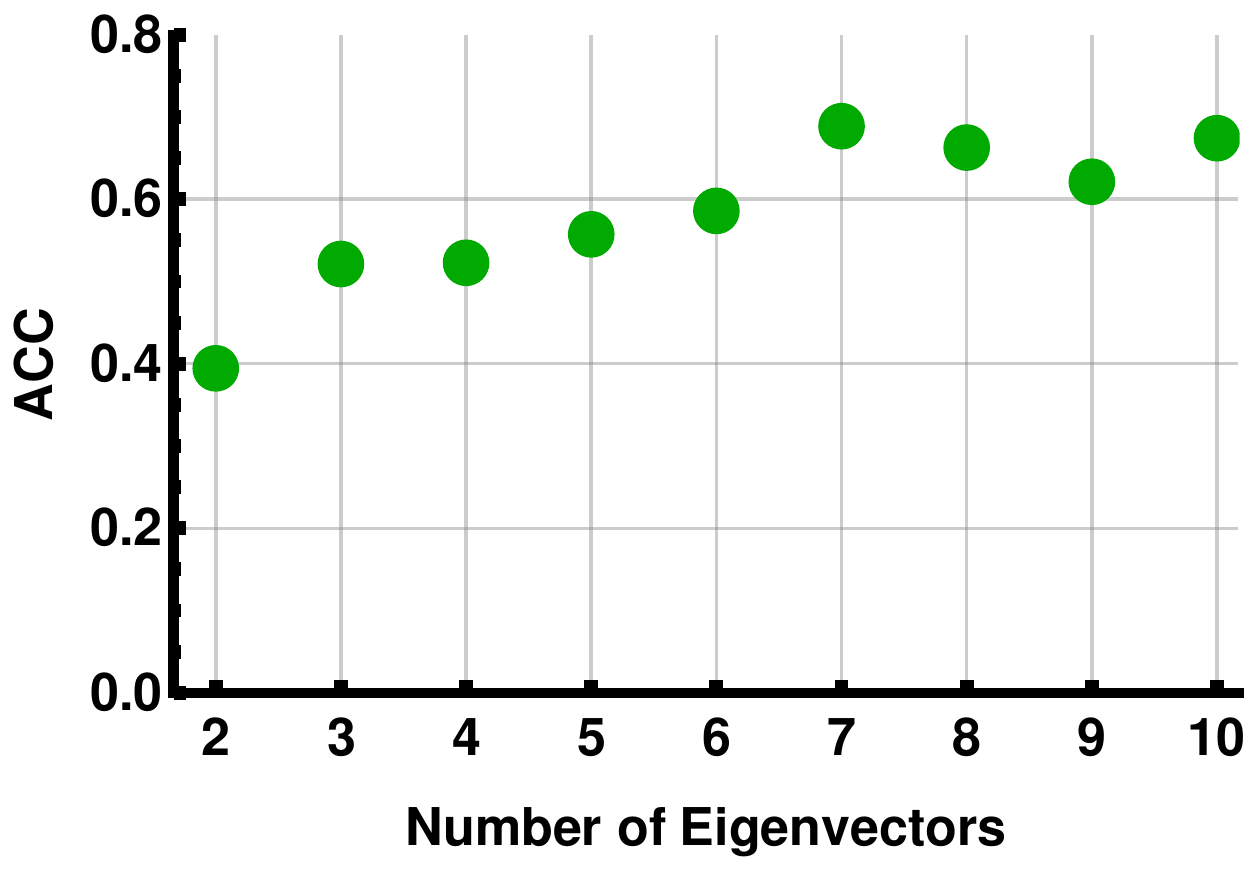}
    \caption{MNIST ACC}
\end{subfigure}
\par\bigskip
\begin{subfigure}{0.3\textwidth}
    \includegraphics[width=\textwidth]{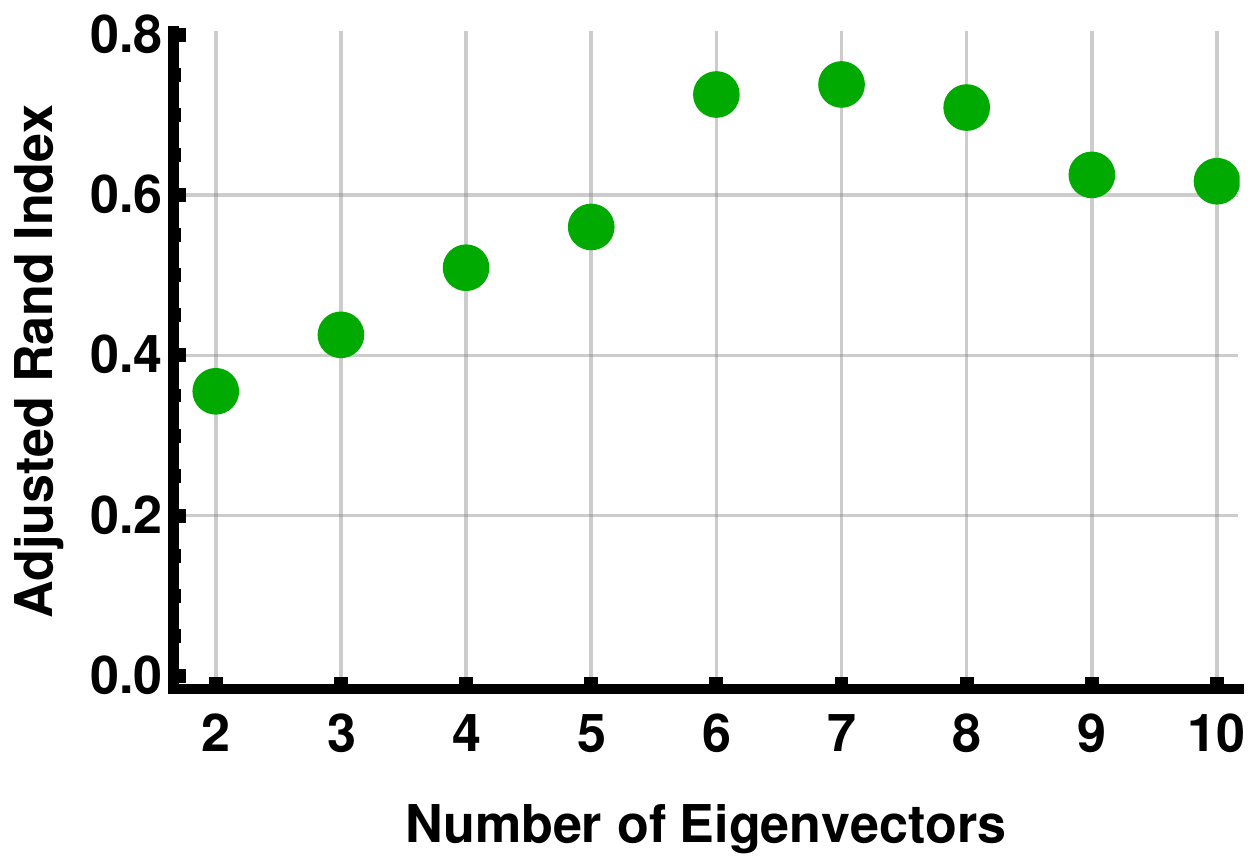}
    \caption{USPS ARI}
\end{subfigure}
\hspace{1em}
\begin{subfigure}{0.3\textwidth}
    \includegraphics[width=\textwidth]{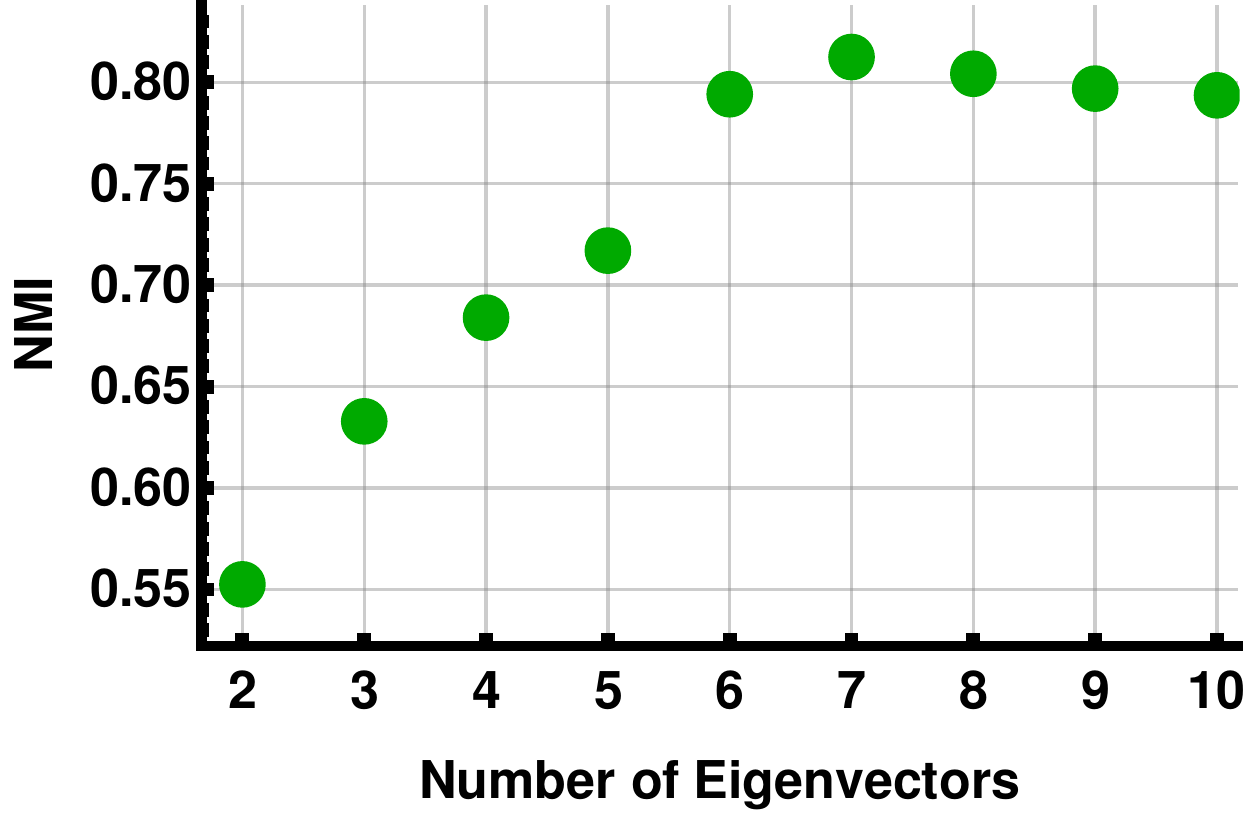}
    \caption{USPS NMI}
\end{subfigure}
\hspace{1em}
\begin{subfigure}{0.3\textwidth}
    \includegraphics[width=\textwidth]{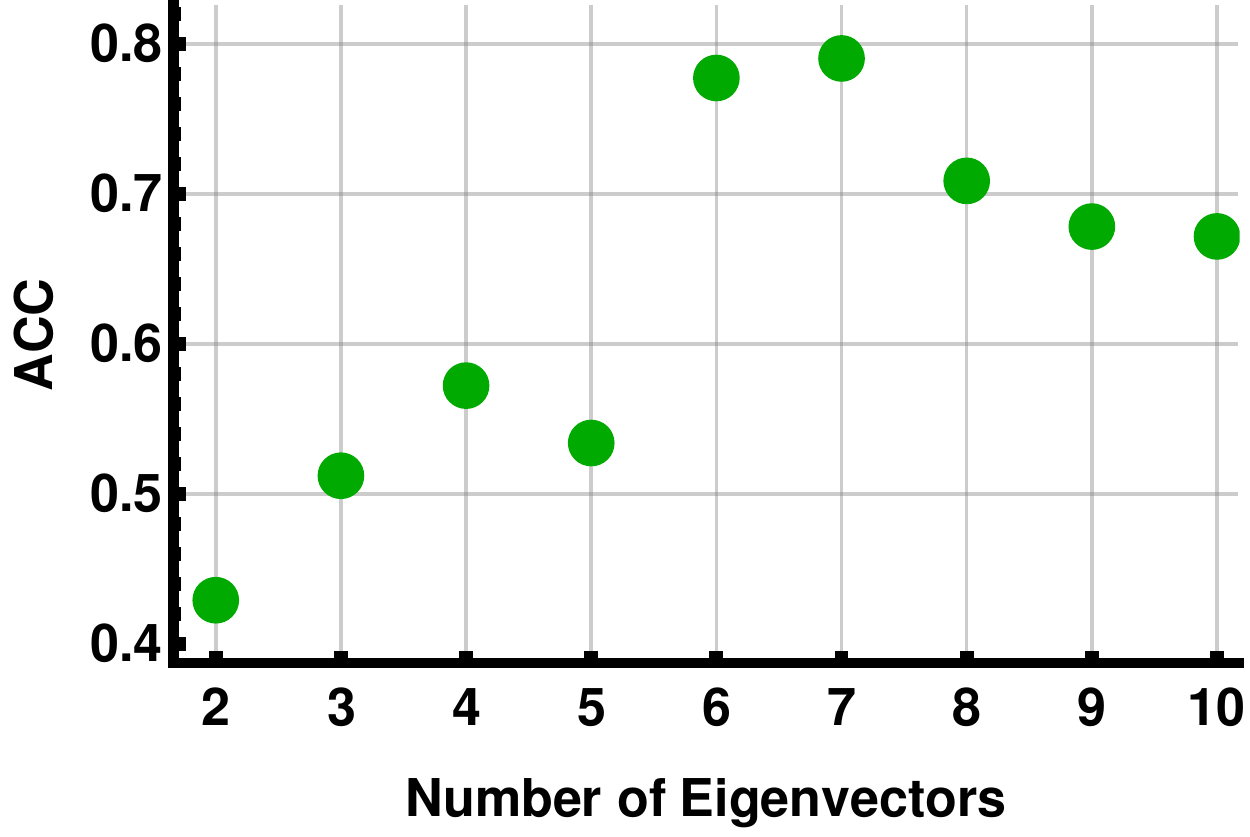}
    \caption{USPS ACC}
\end{subfigure}
    \caption[Experimental results on the MNIST and USPS datasets]{
    \label{fig:mnist_usps}
    Experimental results on the MNIST and USPS datasets. These experiments show that spectral clustering  with $7$ eigenvectors gives the best partition of the input into $10$ clusters.}
\end{figure}

\section{Future Work}
Our work leaves a number of interesting questions for future research.
For spectral clustering, the only non-trivial assumption remaining in our analysis is that the optimal clusters have almost balanced size.
It is unclear whether, under the regime of $\Upsilon(k) = \Omega(1)$, this condition could be eventually removed, or if there's some hard instance showing that our analysis is tight.
For spectral clustering with fewer eigenvectors, our presented work is merely the starting point, and leaves many open questions.
For example, although one can enumerate the number of used eigenvectors from $1$ to $k$ and take the clustering with the minimum $k$-way expansion,
we are interested to know whether the optimal number of eigenvectors
can be computed directly, and rigorously analysed for different graph instances.
We believe that the answers to these questions would not only significantly advance our understanding of spectral clustering, but also, as suggested in our experimental studies, have widespread applications in analysing real-world datasets.

\bibliographystyle{alpha}
\bibliography{arxiv_references}

\newcommand{\etalchar}[1]{$^{#1}$}
\begin{thebibliography}{AMFM11}

\bibitem[ALM13]{zhuLocalAlgorithmFinding2013}
Zeyuan {Allen-Zhu}, Silvio Lattanzi, and Vahab Mirrokni.
\newblock A local algorithm for finding well-connected clusters.
\newblock In {\em 30th {{International Conference}} on {{Machine Learning}}
  ({{ICML}}'13)}, pages 396--404, 2013.

\bibitem[AMFM11]{arbelaezContourDetectionHierarchical2011}
Pablo Arbelaez, Michael Maire, Charless~C. Fowlkes, and Jitendra Malik.
\newblock Contour detection and hierarchical image segmentation.
\newblock {\em IEEE Trans. Pattern Anal. Mach. Intell.}, 33(5):898--916, 2011.

\bibitem[Chu97]{chungSpectralGraphTheory1997}
Fan R~K Chung.
\newblock {\em Spectral Graph Theory}.
\newblock {American Mathematical Soc.}, 1997.

\bibitem[CLSZ20]{cucuringuHermitianMatricesClustering2020}
Mihai Cucuringu, Huan Li, He~Sun, and Luca Zanetti.
\newblock Hermitian matrices for clustering directed graphs: {{Insights}} and
  applications.
\newblock In {\em 23rd {{International Conference}} on {{Artificial
  Intelligence}} and {{Statistics}} ({{AISTATS}}'20)}, pages 983--992, 2020.

\bibitem[CPS15]{czumajTestingClusterStructure2015}
Artur Czumaj, Pan Peng, and Christian Sohler.
\newblock Testing cluster structure of graphs.
\newblock In {\em 47th {{Annual ACM Symposium}} on {{Theory}} of {{Computing}}
  ({{STOC}}'15)}, pages 723--732, 2015.

\bibitem[CSWZ16]{chenCommunicationoptimalDistributedClustering2016}
Jiecao Chen, He~Sun, David Woodruff, and Qin Zhang.
\newblock Communication-optimal distributed clustering.
\newblock In {\em 30th {{Advances}} in {{Neural Information Processing
  Systems}} ({{NeurIPS}}'16)}, pages 3727--3735, 2016.

\bibitem[DPRS19]{deySpectralConcentrationGreedy2019}
Tamal~K. Dey, Pan Peng, Alfred Rossi, and Anastasios Sidiropoulos.
\newblock Spectral concentration and greedy k-clustering.
\newblock {\em Comput. Geom.}, 76:19--32, 2019.

\bibitem[GA17]{gatesImpactRandomModels2017}
Alexander~J. Gates and Yong-Yeol Ahn.
\newblock The impact of random models on clustering similarity.
\newblock {\em The Journal of Machine Learning Research}, 18(1):3049--3076,
  2017.

\bibitem[GT14]{gharanPartitioningExpanders2014}
Shayan~Oveis Gharan and Luca Trevisan.
\newblock Partitioning into expanders.
\newblock In {\em 25th {{ACM-SIAM Symposium}} on {{Discrete Algorithms}}
  ({{SODA}}'14)}, pages 1256--1266, 2014.

\bibitem[Hul94]{hullDatabaseHandwrittenText1994}
Jonathan~J. Hull.
\newblock A database for handwritten text recognition research.
\newblock {\em IEEE Trans. Pattern Anal. Mach. Intell.}, 16(5):550--554, 1994.

\bibitem[KLL{\etalchar{+}}13]{kwokImprovedCheegerInequality2013a}
Tsz~Chiu Kwok, Lap~Chi Lau, Yin~Tat Lee, Shayan~Oveis Gharan, and Luca
  Trevisan.
\newblock Improved {{Cheeger}}'s inequality: Analysis of spectral partitioning
  algorithms through higher order spectral gap.
\newblock In {\em 45th {{Annual ACM Symposium}} on {{Theory}} of {{Computing}}
  ({{STOC}}'13)}, pages 11--20, 2013.

\bibitem[KM16]{kolevNoteSpectralClustering2016}
Pavel Kolev and Kurt Mehlhorn.
\newblock A note on spectral clustering.
\newblock In {\em 24th {{Annual European Symposium}} on {{Algorithms}}
  ({{ESA}}'16)}, volume~57, pages 1--14, 2016.

\bibitem[KMN{\etalchar{+}}04]{kanungoLocalSearchApproximation2004}
Tapas Kanungo, David~M. Mount, Nathan~S. Netanyahu, Christine~D. Piatko, Ruth
  Silverman, and Angela~Y. Wu.
\newblock A local search approximation algorithm for k-means clustering.
\newblock {\em Comput. Geom.}, 28(2-3):89--112, 2004.

\bibitem[KSS04]{kumarSimpleLinearTime2004}
Amit Kumar, Yogish Sabharwal, and Sandeep Sen.
\newblock A simple linear time (1+{$\epsilon$})-approximation algorithm for
  k-means clustering in any dimensions.
\newblock In {\em 45th {{Symposium}} on {{Foundations}} of {{Computer Science}}
  ({{FOCS}}'04)}, pages 454--462, 2004.

\bibitem[KUK17]{kloumannBlockModelsPersonalized2017}
Isabel~M. Kloumann, Johan Ugander, and Jon~M. Kleinberg.
\newblock Block models and personalized {{PageRank}}.
\newblock {\em Proceedings of the National Academy of Sciences}, 114(1):33--38,
  2017.

\bibitem[LBBH98]{lecunGradientbasedLearningApplied1998}
Yann LeCun, L{\'e}on Bottou, Yoshua Bengio, and Patrick Haffner.
\newblock Gradient-based learning applied to document recognition.
\newblock {\em Proceedings of the IEEE}, 86(11):2278--2324, 1998.

\bibitem[LFK09]{lancichinettiDetectingOverlappingHierarchical2009}
Andrea Lancichinetti, Santo Fortunato, and J{\'a}nos Kert{\'e}sz.
\newblock Detecting the overlapping and hierarchical community structure in
  complex networks.
\newblock {\em New Journal of Physics}, 11(3), 2009.

\bibitem[LGT14]{leeMultiwaySpectralPartitioning2014}
James~R Lee, Shayan~Oveis Gharan, and Luca Trevisan.
\newblock Multiway spectral partitioning and higher-order cheeger inequalities.
\newblock {\em Journal of the ACM (JACM)}, 61(6):1--30, 2014.

\bibitem[LS20]{laenenHigherOrderSpectralClustering2020}
Steinar Laenen and He~Sun.
\newblock Higher-order spectral clustering of directed graphs.
\newblock {\em 34th Advances in Neural Information Processing Systems
  (NeurIPS'20)}, 33, 2020.

\bibitem[LV19]{louisPlantedModelsKway2019}
Anand Louis and Rakesh Venkat.
\newblock Planted models for k-way edge and vertex expansion.
\newblock In {\em 39th {{Annual Conference}} on {{Foundations}} of {{Software
  Technology}} and {{Theoretical Computer Science}} ({{FSTTCS}}'19)}, volume
  150, pages 23:1--23:15, 2019.

\bibitem[Miz21]{mizutaniImprovedAnalysisSpectral2021}
Tomohiko Mizutani.
\newblock Improved analysis of spectral algorithm for clustering.
\newblock {\em Optimization Letters}, 15(4):1303--1325, 2021.

\bibitem[NJW01]{ngSpectralClusteringAnalysis2001}
Andrew~Y Ng, Michael~I Jordan, and Yair Weiss.
\newblock On spectral clustering: {{Analysis}} and an algorithm.
\newblock In {\em 15th {{Advances}} in {{Neural Information Processing
  Systems}} ({{NeurIPS}}'01)}, pages 849--856, 2001.

\bibitem[OA14]{orecchiaFlowbasedAlgorithmsLocal2014}
Lorenzo Orecchia and Zeyuan {Allen-Zhu}.
\newblock Flow-based algorithms for local graph clustering.
\newblock In {\em 25th {{ACM-SIAM Symposium}} on {{Discrete Algorithms}}
  ({{SODA}}'14)}, pages 1267--1286, 2014.

\bibitem[Pen20]{pengRobustClusteringOracle2020}
Pan Peng.
\newblock Robust clustering oracle and local reconstructor of cluster structure
  of graphs.
\newblock In {\em 31st {{Annual ACM-SIAM Symposium}} on {{Discrete Algorithms}}
  ({{SODA}}'20)}, pages 2953--2972, 2020.

\bibitem[PSZ17]{pengPartitioningWellClusteredGraphs2017}
Richard Peng, He~Sun, and Luca Zanetti.
\newblock Partitioning well-clustered graphs: {{Spectral}} clustering works!
\newblock {\em SIAM Journal on Computing}, 46(2):710--743, 2017.

\bibitem[PY20]{pengAverageSensitivitySpectral2020}
Pan Peng and Yuichi Yoshida.
\newblock Average sensitivity of spectral clustering.
\newblock In {\em 26th {{ACM SIGKDD Conference}} on {{Knowledge Discovery}} and
  {{Data Mining}} ({{KDD}}'20)}, pages 1132--1140, 2020.

\bibitem[Ran71]{randObjectiveCriteriaEvaluation1971}
William~M. Rand.
\newblock Objective criteria for the evaluation of clustering methods.
\newblock {\em Journal of the American Statistical Association},
  66(336):846--850, 1971.

\bibitem[RV11]{rebagliatiSpectralClusteringMore2011}
Nicola Rebagliati and Alessandro Verri.
\newblock Spectral clustering with more than k eigenvectors.
\newblock {\em Neurocomputing}, 74(9):1391--1401, 2011.

\bibitem[ST96]{spielmanSpectralPartitioningWorks1996}
Daniel~A. Spielman and Shang-Hua Teng.
\newblock Spectral partitioning works: Planar graphs and finite element meshes.
\newblock In {\em 37th {{Conference}} on {{Foundations}} of {{Computer
  Science}} ({{FOCS}}'96)}, pages 96--105, 1996.

\bibitem[SZ19]{sunDistributedGraphClustering2019}
He~Sun and Luca Zanetti.
\newblock Distributed graph clustering and sparsification.
\newblock {\em ACM Trans. Parallel Comput.}, 6(3):17:1--17:23, 2019.

\bibitem[vL07]{vonluxburgTutorialSpectralClustering2007}
Ulrike von Luxburg.
\newblock A tutorial on spectral clustering.
\newblock {\em Statistics and computing}, 17(4):395--416, 2007.

\end{thebibliography}

\appendix
\newpage
\section{Examples from the BSDS Dataset} \label{app:bsds}
Figures~\ref{fig:bsds_app_0}~and~\ref{fig:bsds_results_appendix} give some additional examples of our results from the BSDS dataset.
These examples further illustrate that
spectral clustering with fewer than $k$ eigenvectors performs better than spectral clustering with $k$ eigenvectors on the BSDS image segmentation dataset.

\begin{figure*}[ht]
    \centering
    \begin{subfigure}[t]{0.25\textwidth}
        \includegraphics[width=\textwidth]{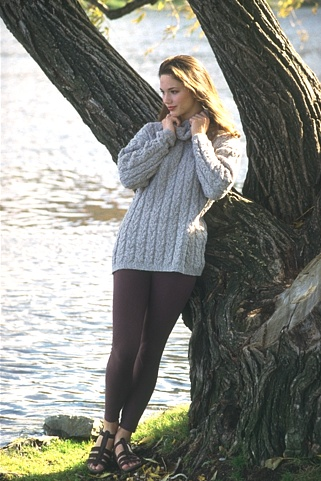}
        \caption{Original Image}
    \end{subfigure}
    \hspace{1.5em}
    \begin{subfigure}[t]{0.25\textwidth}
        \includegraphics[width=\textwidth]{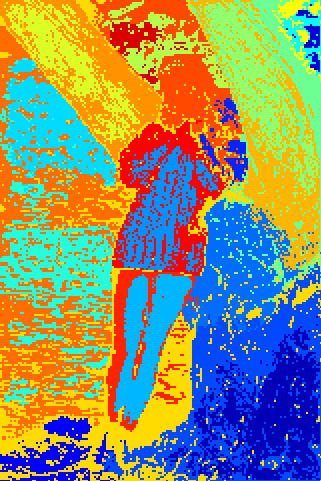}
        \caption{Segmentation into $24$ clusters with $8$ eigenvectors; Rand Index $0.82$.}
    \end{subfigure}
    \hspace{1.5em}
    \begin{subfigure}[t]{0.25\textwidth}
        \includegraphics[width=\textwidth]{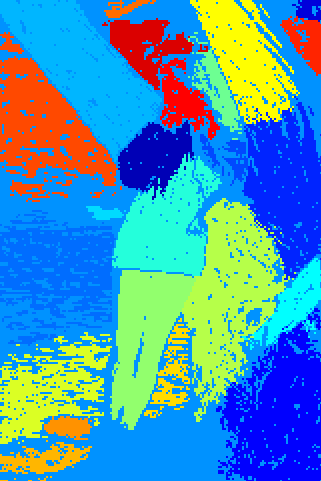}
        \caption{Segmentation into $24$ clusters with $24$ eigenvectors; Rand Index $0.77$.}
    \end{subfigure}
    \par\bigskip
    \begin{subfigure}[t]{0.25\textwidth}
        \includegraphics[width=\textwidth]{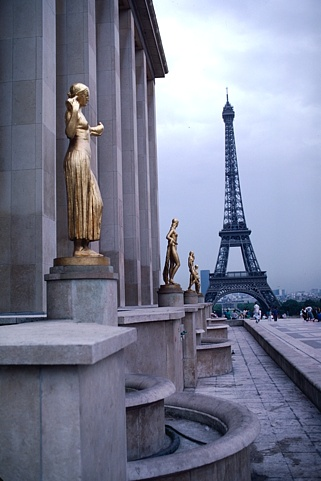}
        \caption{Original Image}
    \end{subfigure}
    \hspace{1.5em}
    \begin{subfigure}[t]{0.25\textwidth}
        \includegraphics[width=\textwidth]{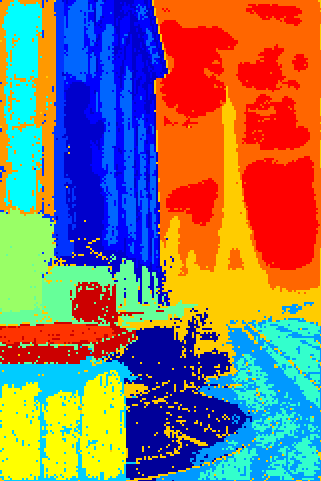}
        \caption{Segmentation into $18$ clusters with $6$ eigenvectors; Rand Index $0.77$.}
    \end{subfigure}
    \hspace{1.5em}
    \begin{subfigure}[t]{0.25\textwidth}
        \includegraphics[width=\textwidth]{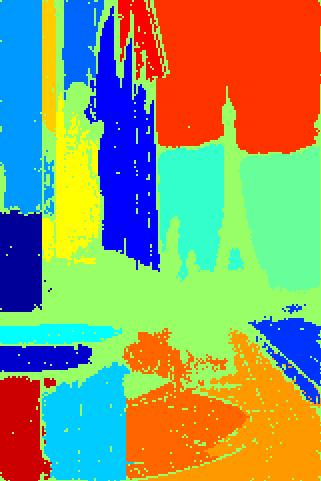}
        \caption{Segmentation into $18$ clusters with $18$ eigenvectors; Rand Index $0.74$.}
    \end{subfigure}
    \caption[Additonal examples from the BSDS dataset]{Examples of the segmentations produced with spectral clustering on the BSDS dataset.} 
    \label{fig:bsds_app_0}
\end{figure*}

\begin{figure*}[t]
    \centering
    \begin{subfigure}[t]{0.3\textwidth}
       \includegraphics[width=\textwidth]{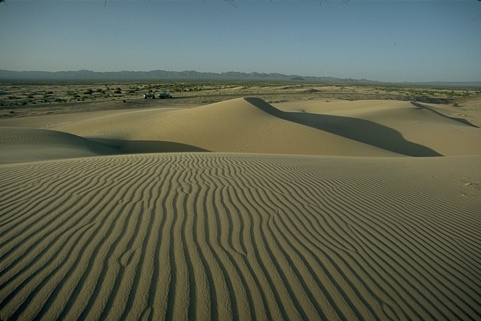}
       \caption{Original Image}
    \end{subfigure}
    \hspace{1em}
    \begin{subfigure}[t]{0.3\textwidth}
       \includegraphics[width=\textwidth]{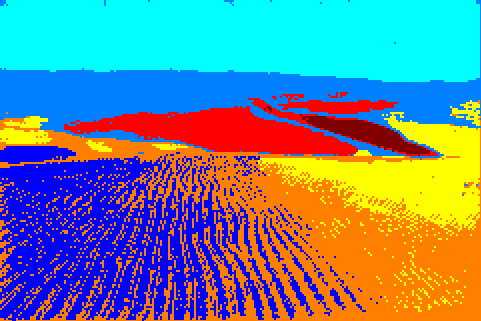}
       \caption{Segmentation into $7$ clusters with $3$ eigenvectors; Rand Index $0.76$.}
    \end{subfigure}
    \hspace{1em}
    \begin{subfigure}[t]{0.3\textwidth}
       \includegraphics[width=\textwidth]{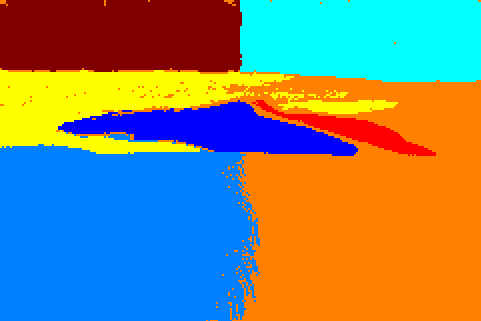}
       \caption{Segmentation into $7$ clusters with $7$ eigenvectors; Rand Index $0.74$.}
    \end{subfigure}
    \par\bigskip
    \begin{subfigure}[t]{0.3\textwidth}
       \includegraphics[width=\textwidth]{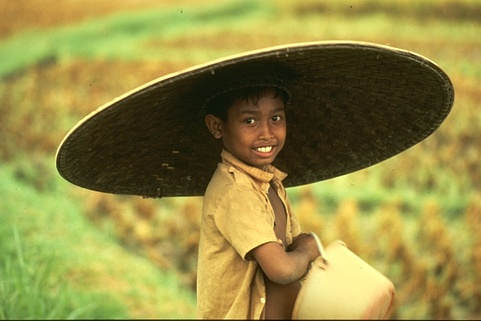}
       \caption{Original Image}
    \end{subfigure}
    \hspace{1em}
    \begin{subfigure}[t]{0.3\textwidth}
       \includegraphics[width=\textwidth]{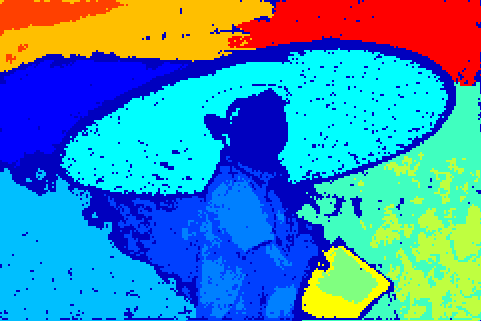}
       \caption{Segmentation into $13$ clusters with $8$ eigenvectors; Rand Index $0.80$.}
    \end{subfigure}
    \hspace{1em}
    \begin{subfigure}[t]{0.3\textwidth}
       \includegraphics[width=\textwidth]{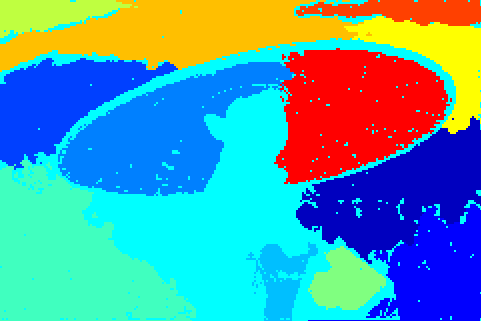}
       \caption{Segmentation into $13$ clusters with $13$ eigenvectors; Rand Index $0.77$.}
    \end{subfigure}
    \par\bigskip
    \begin{subfigure}[t]{0.3\textwidth}
       \includegraphics[width=\textwidth]{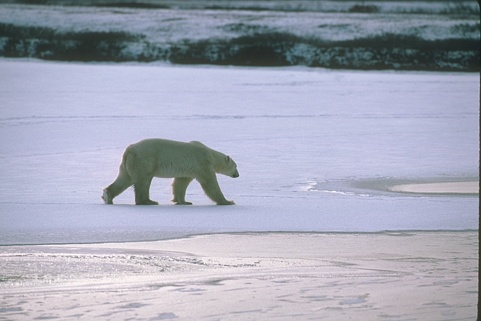}
       \caption{Original Image}
    \end{subfigure}
    \hspace{1em}
    \begin{subfigure}[t]{0.3\textwidth}
       \includegraphics[width=\textwidth]{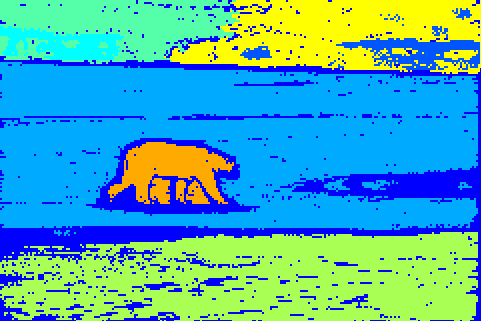}
       \caption{Segmentation into $8$ clusters with $5$ eigenvectors; Rand Index $0.86$.}
    \end{subfigure}
    \hspace{1em}
    \begin{subfigure}[t]{0.3\textwidth}
       \includegraphics[width=\textwidth]{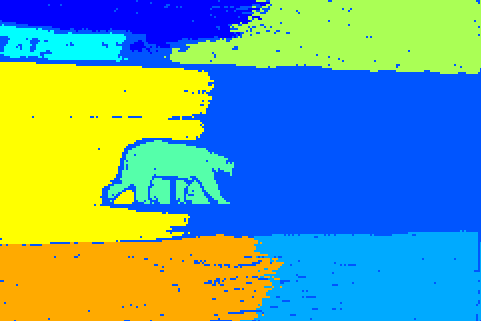}
       \caption{Segmentation into $8$ clusters with $8$ eigenvectors; Rand Index $0.79$.}
    \end{subfigure}
    \par\bigskip
    \begin{subfigure}[t]{0.3\textwidth}
       \includegraphics[width=\textwidth]{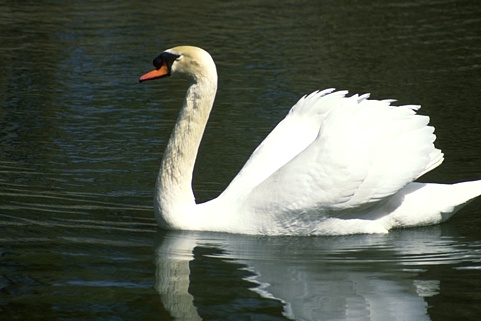}
       \caption{Original Image}
    \end{subfigure}
    \hspace{1em}
    \begin{subfigure}[t]{0.3\textwidth}
       \includegraphics[width=\textwidth]{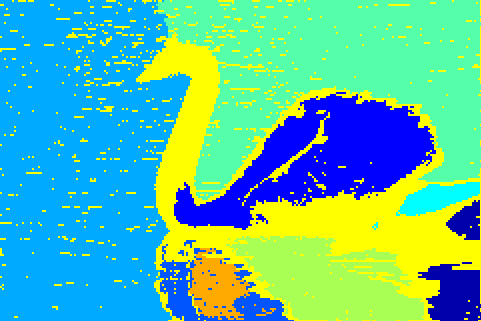}
       \caption{Segmentation into $9$ clusters with $7$ eigenvectors; Rand Index $0.69$.}
    \end{subfigure}
    \hspace{1em}
    \begin{subfigure}[t]{0.3\textwidth}
       \includegraphics[width=\textwidth]{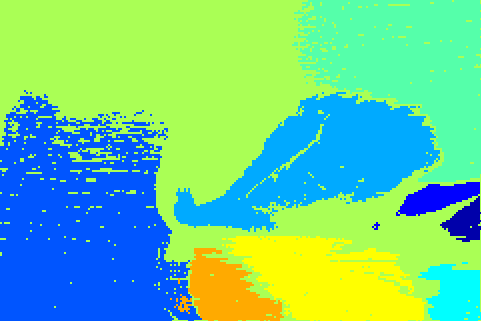}
       \caption{Segmentation into $9$ clusters with $9$ eigenvectors; Rand Index $0.61$.}
    \end{subfigure}
    \caption[Additional examples from the BSDS dataset]{Examples of the segmentations produced with spectral clustering on the BSDS dataset.} 
    \label{fig:bsds_results_appendix}
\end{figure*}

\end{document}